\newtheorem{theorem}[equation]{Theorem}
\newtheorem{lemma}[equation]{Lemma}
\newtheorem{proposition}[equation]{Proposition}
\renewcommand{\section}
{\@startsection{section}{1}{0pt}{\medskipamount}{\medskipamount}{\large\bf}}
\makeatletter\renewcommand{\subsection}
{\@startsection{subsection}{2}{\z@}{-3.25ex plus -1ex minus -.2ex}
{1.5ex plus .2ex}{\it }}
\makeatletter\renewcommand{\subsubsection}
{\@startsection{subsubsection}{3}{\z@}{-3.25ex plus -1ex minus -.2ex}
{1.5ex plus .2ex}{\noindent\underline}}
\numberwithin{equation}{section}
\def\={\ =\ }
\def\dd{{\rm d}}
\newcommand{\Tr}[1]{\:{\rm Tr}\,#1}
\def\e{{\,\rm e}\,}
\newcommand{\mbf}[1]{{\boldsymbol {#1} }}
\newcommand{\Vect}{{\sf Vect}}
\newcommand{\coh}{{\sf coh}}
\DeclareMathOperator{\Hom}{Hom}
\DeclareMathOperator{\End}{End}
\DeclareMathOperator{\Ext}{Ext}
\DeclareMathOperator{\charac}{ch}
\DeclareMathOperator{\Hilb}{Hilb}
\DeclareMathOperator{\ch}{ch}
\def\ii{{\,{\rm i}\,}}
\newcommand{\bbr}{\mathbb{R}}
\newcommand{\bbc}{\mathbb{C}}
\newcommand{\bbP}{\mathbb{P}}
\newcommand{\bbQ}{\mathbb{Q}}
\newcommand{\cN}{\mathcal{N}}
\newcommand{\cG}{\mathcal{G}}
\newcommand{\call}{\mathcal{L}}
\newcommand{\cale}{\mathcal{E}}
\newcommand{\calf}{\mathcal{F}}
\newcommand{\cals}{\mathcal{S}}
\newcommand{\cali}{\mathcal{I}}
\newcommand{\calo}{\mathcal{O}}
\newcommand{\calm}{\mathfrak{M}}
\newcommand{\caln}{\mathfrak{N}}
\newcommand{\calp}{\mathcal{P}}
\newcommand{\calq}{\mathcal{Q}}
\newcommand{\calE}{\mathcal{E}}
\newcommand{\Ocal}{\mathcal{O}}
\def\alg{{\mathcal A}}
\def\hil{{\mathcal H}}
\def\cK{{\mathcal K}}
\newcommand{\id}{{1\!\!1}} 
\newcommand{\IZ}{\mathbb{Z}}
\newcommand{\F}{\mathbb{F}}
\newcommand{\cO}{{\cal O}}
\newcommand{\cR}{{\cal R}}
\newcommand{\cZ}{{\cal Z}}
\def\Id{{\rm id}}
\def\ch{{\rm ch}}
\newcommand{\Td}{{\rm Td}}
\newcommand{\bbz}{{\mathbb Z}}
\def\Dirac{{D\!\!\!\!/\,}} 
\def\e{\epsilon}
\def\beq{\begin{equation}}
\def\eeq{\end{equation}}
\def\bea{\begin{eqnarray}}
\def\eea{\end{eqnarray}}
\renewcommand{\e}{\,\mathrm{e}\,}
\newcommand{\im}{\,\mathrm{i}\,}
\newcommand{\R}{{\mathbb{R}}}
\newcommand{\N}{{\mathbb{N}}}
\newcommand{\C}{{\mathbb{C}}}
\newcommand{\Z}{{\mathbb{Z}}}
\newcommand{\Q}{{\mathbb{Q}}}
\newcommand{\PP}{{\mathbb{P}}}
\newcommand{\Ecal}{{\mathfrak E}}
\newcommand{\Pcal}{{\cal P}}
\newcommand{\Acal}{{\mathfrak A}}
\newcommand{\yb}{{\bar{y}}}
\newcommand{\Lcal}{{\cal L}}
\def\Dirac{{D\!\!\!\!/\,}} 
\def\Hom{{\rm Hom}}
\def\Ext{{\rm Ext}}
\def\End{{\rm End}}
\def\>{\rangle}
\def\<{\langle}
\def\+{\dagger}
\def\={\ =\ }
\begin{document}

\begin{titlepage}
\setcounter{page}{0}
\begin{flushright}
HWM--09--12\\
EMPG--09--21\\
\end{flushright}

\vskip 1.8cm

\begin{center}

{\Large\bf Instantons, Topological Strings
   and Enumerative Geometry}\footnote{Invited article for the special issue ``Nonlinear and Noncommutative Mathematics: New Developments and Applications in Quantum Physics'' of {\sl Advances in Mathematical Physics}.}

\vspace{15mm}

{\large Richard J. Szabo}
\\[5mm]
\noindent {\em Department of Mathematics\\ Heriot-Watt University\\
Colin Maclaurin Building, Riccarton, Edinburgh EH14 4AS, U.K. \\ and Maxwell Institute
  for Mathematical Sciences, Edinburgh, U.K.}
\\[5mm]
{Email: {\tt R.J.Szabo@ma.hw.ac.uk}}

\vspace{35mm}

\begin{abstract}

\noindent
We review and elaborate on certain aspects of the connections between
instanton counting in maximally supersymmetric gauge theories and the
computation of enumerative invariants of smooth varieties. We study in
detail three instances of gauge theories in six, four and two
dimensions which naturally arise in the context of topological string
theory on certain non-compact threefolds. We describe how the
instanton counting in these gauge theories are related to the
computation of the entropy of supersymmetric black holes, and how
these results are related to wall-crossing properties of enumerative
invariants such as Donaldson--Thomas and Gromov--Witten
invariants. Some features of moduli spaces of torsion-free sheaves and
the computation of their Euler characteristics are also elucidated.

\end{abstract}

\end{center}
\end{titlepage}

\tableofcontents

\bigskip

\section{Introduction\label{Intro}}

\noindent
Topological theories in physics usually relate BPS quantities to geometrical
  invariants of the underlying manifolds on which the physical theory is defined.  For the purposes of the present article, we shall focus on two particular and well-known instances of this. The first is instanton counting in supersymmetric gauge theories in four dimensions, which gives the
  Seiberg--Witten and Donaldson--Witten invariants of four-manifolds. 
The second is topological string theory, which is related to the enumerative geometry of Calabi--Yau three-folds and computes, for example, Gromov--Witten invariants, Donaldson--Thomas invariants, Gopakumar--Vafa BPS invariants, and key aspects of Kontsevich's homological mirror symmetry conjecture.

From a physical perspective, these topological models are not simply
of academic interest, but they also serve as exactly solvable systems
which capture the physical content of certain sectors of more
elaborate systems with local propagating degrees of freedom. Such is
the case for the models we will consider in this paper, which are
obtained as topological twists of a given physical theory. The
topologically twisted theories describe the BPS sectors of physical
models, and compute non-perturbative effects therein. For example, for
certain supersymmetric charged black holes, the microscopic
Bekenstein--Hawking--Wald entropy is computed by the Witten index of
the relevant supersymmetric gauge theory. This is equivalent to the counting of stable BPS bound states
  of D-branes in the pertinent geometry, and is related to invariants of threefolds via the OSV conjecture~\cite{OSV}.

From a mathematical perspective, we are interested in counting
invariants associated to moduli spaces of coherent sheaves on a smooth
complex projective variety $X$. To define such invariants, we need
moduli spaces that are varieties rather than algebraic stacks. The
standard method is to choose a polarization on $X$ and restrict
attention to semistable sheaves. If $X$ is a K\"ahler manifold, then a
natural choice of polarization is provided by a fixed K\"ahler
two-form on $X$. Geometric invariant theory then constructs a
projective variety which is a coarse moduli space for semistable
sheaves of fixed Chern character. In this paper we will be interested
in the computation of suitably defined Euler characteristics of
certain moduli spaces, which are the basic enumerative invariants. We
shall also compute more sophisticated holomorphic curve counting
invariants of a Calabi--Yau threefold $X$, which can be defined using
virtual cycles of the pertinent moduli spaces and are invariant under
deformations of $X$. In some instances the two types of invariants coincide.

An alternative approach to constructing moduli varieties is through
framed sheaves. Then there is a projective $Quot$ scheme which is a
fine moduli space for sheaves with a given framing. A framed sheaf can
be regarded as a geometric realization of an instanton in a
noncommutative gauge theory on $X$~\cite{DN,KS,Szabo1} which
asymptotes to a fixed connection at infinity. The noncommutative gauge
theory in question arises as the worldvolume field theory on a
suitable arrangement of D-branes in the geometry. In Nekrasov's
approach~\cite{Nekrasov}, the set of observables that enter in the
instanton counting are captured by the infrared dynamics of the topologically twisted gauge theory, and they compute the intersection theory of the (compactified) moduli spaces. The purpose of this paper is to overview the enumeration of such noncommutative instantons and its relation to the standard counting invariants of $X$.

In the following we will describe the computation of BPS indices of stable
  D-brane states via instanton counting in certain noncommutative and $q$-deformations of gauge theories on branes in various dimensions. We will pay particular attention to three non-compact examples which each arise in the context of Type~IIA string theory:
\begin{itemize}
\item[(1)] D6--D2--D0 bound states in D6-brane gauge theory \ --- \ These compute
  Donaldson--Thomas invariants and describe atomic configurations in a melting
  crystal model~\cite{ORV}. This also provides a solid example of a (topological) gauge theory/string theory duality. The counting of noncommutative instantons in the pertinent topological gauge theory is described in detail in ref.~\cite{CSS}.
\item[(2)] D4--D2--D0 bound states in D4-brane gauge theory \ --- \
  These count black hole microstates and allow us to probe the OSV
  conjecture. Their generating functions also appear to be intimately
  related to two-dimensional rational conformal field theory.
\item[(3)] D2--D0 bound states in D2-brane gauge theory \ --- \ These compute Gromov--Witten invariants of local curves. Instanton counting in the two-dimensional gauge theory on the base of the fibration is intimately related to instanton counting in the four-dimensional gauge theory obtained by wrapping supersymmetric D4-branes around certain non-compact four-cycles $C$, and also to the enumeration of flat connections in Chern--Simons theory on the boundary of~$C$. These interrelationships are explored in detail in refs.~\cite{CCGPSS1,CCGPSS2,CCGPSS3,GSST}.
\end{itemize}
These counting problems provide a beautiful hierarchy of relationships between topological string theory/gauge theory in six dimensions, four-dimensional supersymmetric gauge theories, Chern--Simons theory in three dimensions, and a certain $q$-deformation of two-dimensional Yang--Mills theory. They are also intimately related to two-dimensional conformal field theory.

\bigskip

\section{Topological string theory  \label{topstrings}}

\noindent
The basic setting in which to describe all gauge theories that we will
analyse in this paper within a unified framework is through
topological string theory, although many aspects of these models are
independent of their connection to topological strings. In this
section we briefly discuss some physical and mathematical aspects of
topological string theory, and how they naturally relate to the gauge
theories that we are ultimately interested in. Further details about
topological string theory can be found in e.g. ref.~\cite{Marino}, or
in ref.~\cite{Vonk} which includes a more general
introduction. Introductory and advanced aspects of
toric geometry are treated in the classic text~\cite{fulton} and in
the reviews~\cite{coxrev}. The standard reference for the sheaf theory
that we use is the book~\cite{HLbook}, while a more physicist-geared
introduction with applications to string theory can be found in the
review~\cite{Aspinwall1}.

\subsection{Topological strings and Gromov--Witten theory
  \label{GWtheory}}

Topological string theory may be regarded as a theory whose state space is a
``subspace'' of that of the full physical Type~II string theory. It is designed so that it can resolve the mathematical problem of
counting maps
\beq
f\,:\, \Sigma_g ~\longrightarrow~ X
\label{fSigmagX}\eeq
from a closed oriented Riemann surface
$\Sigma_g$ of genus $g$ into some target space $X$. In the physical
Type~II theory, any harmonic map $f$, with respect to chosen metrics
on $\Sigma_g$ and $X$, are allowed. They are described by solutions to
second order partial differential equations, the Euler--Lagrange
equations obtained from a variational principle based on a
sigma-model. The simplification supplied by topological strings is
that one replaces this sigma-model on the worldsheet $\Sigma_g$ by a
two-dimensional topological field theory, which can be realized as a
topological twist of the original $\cN=2$ superconformal field
theory. In this reduction, the state space descends to its BRST
cohomology defined with respect to the $\cN=2$ supercharges, which naturally carries a Frobenius algebra structure. This defines a consistent quantum theory if and only if the target space $X$ is a Calabi--Yau threefold, i.e. a complex K\"ahler manifold of dimension $\dim_\C(X)=3$ with  trivial canonical holomorphic line bundle $K_X$, or equivalently trivial first Chern class $c_1(X):=c_1(K_X)=0$. We fix a closed nondegenerate K\"ahler $(1,1)$-form $\omega$ on $X$.

The corresponding topological string amplitudes $F_g$ have interpretations in
compactifications of Type~II string theory on the product of the target space $X$ with four-dimensional Minkowski space $\R^{3,1}$. For instance, at genus zero the amplitude $F_0$ is the prepotential for vector multiplets of $\mathcal{N}\=2$ supergravity in four dimensions. The higher genus contributions $F_g$, $g\geq1$ correspond to higher derivative corrections of the schematic form $R^2\,T^{2g-2}$, where $R$ is the curvature and $T$ is the graviphoton field strength. We will now explain how to compute the amplitudes $F_g$. There are two types of topological string theories that we consider in turn.

\subsubsection*{A-model}

The A-model topological string theory isolates symplectic structure aspects of the Calabi--Yau threefold $X$. It is built on holomorphically embedded curves (\ref{fSigmagX}). The holomorphic string maps $f$ in this case are called \emph{worldsheet instantons}. They are classified topologically by their homology classes
$$
\beta\=f_*[\Sigma_g]~\in~H_2(X,\bbz) \ .
$$
With respect to a basis $S_i$ of two-cycles on $X$, one can write
\beq
\beta\=\sum_{i=1}^{b_2(X)}\,
n_i\,[S_i]
\label{betasum}\eeq
where the Betti number $b_2(X)$ is the rank of the second homology
group $H_2(X,\bbz)$, and $n_i\in\bbz$. Due to the topological nature
of the sigma-model, the string theory functional integral localizes
equivariantly (with respect to the BRST cohomology) onto contributions
from worldsheet instantons.

The sum over all maps can be encoded in a generating function $F_{\rm top}^X(g_s,\mbf Q)$ which depends on the string coupling $g_s$ and a vector of variables $\mbf Q=(Q_1,\dots,Q_{b_2(X)})$ defined as follows. Let
$$
t_i\=\int_{S_i}\,\omega
$$
be the complex K\"ahler parameters of $X$ with respect to the basis $S_i$. They appear in the values of the sigma-model action evaluated on a worldsheet instanton. For an instanton in curve class (\ref{betasum}), the corresponding Boltzmann weight is
$$
\mbf Q^\beta~:= ~ \prod_{i=1}^{b_2(X)}\,\big(Q_i\big)^{n_i} \qquad \mbox{with} \quad Q_i~:=~\e^{-t_i} \ .
$$
Then the quantum string theory is described by a genus expansion of the free energy
$$
F_{\rm top}^X(g_s,\mbf Q)\=\sum_{g=0}^\infty\,g_s^{2g-2}\,
F_g(\mbf Q) 
$$
weighted by the Euler characteristic $\chi(\Sigma_g)=2-2g$ of $\Sigma_g$, where the genus $g$ contribution to the statistical sum is given by
$$
F_g(\mbf Q)\=\sum_{\beta\in H_2(X,\bbz)}\,N_{g,\beta}(X) ~ \mbf Q^\beta \ ,
$$
and in this formula the classes $\beta\neq0$ correspond to worldsheets of genus $g$. The numbers $N_{g,\beta}(X)$ are called the \emph{Gromov--Witten invariants} of $X$ and they ``count'', in an appropriate sense, the number of worldsheet instantons (holomorphic curves) of genus $g$ in the two-homology class $\beta$. They can be defined as follows.

A worldsheet instanton (\ref{fSigmagX}) is said to be \emph{stable} if
the automorphism group ${\rm Aut}(\Sigma_g,f)$ is finite. Let
$\calm_g(X,\beta)$ be the (compactified) moduli space of isomorphism classes of
stable holomorphic maps (\ref{fSigmagX}) from connected genus $g$
curves to $X$ representing $\beta$. This is the \emph{instanton moduli
  space} onto which the path integral of topological string theory
localizes. It is a proper Deligne--Mumford stack over $\C$ which generalizes the moduli space $\calm_g$ of ``stable'' curves of genus $g$. While the dimension of $\calm_g$ is $3g-3$, the moduli space $\calm_g(X,\beta)$ is in general reducible and of impure dimension, as all possible stable maps occur. However, there is a perfect obstruction theory~\cite{GraPand} which generically has virtual dimension
$$
(1-g)\,\big(\dim_\C(X)-3\big)+\int_\beta\,c_1(X) \ .
$$
When $X$ is a Calabi--Yau threefold, this integer vanishes and there is a virtual fundamental class~\cite{GraPand}
$$
\big[\calm_g(X,\beta)\big]^{\rm vir}~\in~CH_0\big(\calm_g(X,\beta)\big)
$$
in the degree zero Chow group. In this case we define
$$
N_{g,\beta}(X)~:=~ \int_{[\calm_g(X,\beta)]^{\rm vir}}\,1 \ ,
$$
and so the Gromov--Witten invariants give the ``virtual numbers'' of
worldsheet instantons.
One generically has $N_{g,\beta}(X)\in\Q$ due to the orbifold nature of
the moduli stack $\calm_g(X,\beta)$. One can also define invariants by integrating the Euler class
of an \emph{obstruction bundle} over $\calm_g(X,\beta)$. There are precise recipes for computing the Gromov--Witten invariants $N_{g,\beta}(X)$ for \emph{toric} varieties $X$.

\subsubsection*{B-model}

The B-model topological string theory isolates complex structure aspects of the Calabi--Yau threefold $X$. It enumerates the \emph{constant} string maps which send the entire surface $\Sigma_g$ to a fixed point in $X$, and hence have trivial curve class $\beta=0$. The Gromov--Witten invariants in this case are completely understood. There is a natural isomorphism
$$
\calm_g(X,0)~\cong~ \calm_g\times X \ ,
$$
and the degree zero Gromov--Witten invariants $N_{g,0}(X)$ involve only the classical cohomology ring $H^\bullet(X,\bbQ)$ and ``Hodge integrals'' over the moduli spaces of Riemann surfaces $\calm_g$ defined as follows.

There is a canonical stack line bundle $\mathfrak{L}\to\calm_g$ with fibre $T^*_{\Sigma_g}$ over the moduli point $[\Sigma_g]$, the cotangent space of $\Sigma_g$ at some fixed point. We define the tautological class $\psi$ to be the first Chern class of $\mathfrak{L}$, $\psi:=c_1(\mathfrak{L})\in H^2(\calm_g,\bbQ)$. The \emph{Hodge bundle} $\mathfrak{E}\to\calm_g$ is the complex vector bundle of rank $g$ whose fibre over a point $\Sigma_g$ is the space $H^0(\Sigma_g,K_{\Sigma_g})$ of holomorphic sections of the canonical line bundle $K_{\Sigma_g}\to\Sigma_g$. Let $\lambda_j:=c_j(\mathfrak{E})\in H^{2j}(\calm_g,\bbQ)$. A \emph{Hodge integral} over $\calm_g$ is an integral of products of the classes $\psi$ and $\lambda_j$.

Explicit expressions for $N_{g,0}(X)$ for \emph{generic} threefolds $X$ are then obtained as follows. Let $\{\gamma_a\}_{a\in A}$ be a basis for $H^\bullet(X,\bbz)$ (modulo torsion), and let $D_2\subset A$ index the generators of degree two. Then one has
\bea
N_{0,0}(X)&=&\sum_{a_i\in A}\,\frac1{3!}~\int_X\,(\gamma_{a_1}\,\smile\,\gamma_{a_2}\,\smile\, \gamma_{a_3}) \ , \nonumber\\[4pt]
N_{1,0}(X)&=& - \sum_{a\in D_2}\,\frac1{24}~\int_X\,\gamma_a\,\smile\, c_2(X) \ , \nonumber\\[4pt]
N_{g\geq2,0}(X) &=& \frac{(-1)^g}2\,\int_X\,\big(c_3(X)-c_1(X)\,\smile\, c_2(X)\big)~ \int_{\calm_g}\,\lambda_{g-1}^3 \ , \nonumber
\eea
where the Hodge integral can be expressed in terms of Bernoulli numbers as
$$
\int_{\calm_g}\,\lambda_{g-1}^3 \= \frac{|B_{2g}|}{2g}\,\frac{|B_{2g-2}|}{2g-2}\, \frac1{(2g-2)!} \ .
$$
Note that $c_1(X)=0$ above when $X$ is Calabi--Yau.

Thus we know how to compute everything in the B-model, and it is completely under control. Our main interest is thus in extending these computations to the A-model. In analogy with the above considerations, one can note that there is a natural forgetful map
$$
\pi\,:\, \calm_g(X,\beta)~\longrightarrow~ \calm_g \ , \qquad (f,\Sigma_g)~\longmapsto~\Sigma_g \ ,
$$
and then reduce any integral over $\calm_g(X,\beta)$ to $\calm_g$ using the corresponding Gysin push-forward map $\pi_!$. However, this is difficult to do explicitly in most cases. The Gromov--Witten theory of $X$ is the study of tautological intersections in the moduli spaces $\calm_g(X,\beta)$. There is a string duality between the A-model and the B-model which is related to homological mirror symmetry.

\subsection{Open topological strings}

An open topological string in $X$ is described by a holomorphic
embedding $f:\Sigma_{g,h}\to X$ of a curve $\Sigma_{g,h}$ of genus $g$
with $h$ holes. A D-brane in $X$ is a choice of 
Dirichlet boundary condition on these string maps, which ensures that
the Cauchy problem for the Euler--Lagrange equations on $\Sigma_{g,h}$
locally has a unique solution. They correspond to \emph{Lagrangian} submanifolds $L$ of the Calabi--Yau threefold $X$, i.e. $\omega|_L=0$. If $\partial\Sigma_{g,h}=\sigma_1\cup\cdots\cup\sigma_h$, then we consider holomorphic maps such that
$$
f(\sigma_i)~\subset~L \ .
$$
This defines open string instantons, which are labelled by their \emph{relative} homology classes
$$
f_*[\Sigma_{g,h}]\= \beta~\in~ H_2(X,L) \ .
$$
If we assume that $b_1(L)=1$, so that $H_1(L,\Z)$ is generated
by a single non-trivial one-cycle $\gamma$, then
$$
f_*[\sigma_i]\=w_i\,\gamma \ ,
$$
where $w_i\in\Z$, $i=1,\dots,h$ are the winding numbers of the boundary maps $f|_{\sigma_i}$.

The free energy of the A-model open topological string theory at genus $g$ is given by
$$
F_{\mbf w,g}(\mbf Q)\=\sum_\beta\,N_{\mbf w,g,\beta}(X)~\mbf Q^\beta \ ,
$$
where $\mbf w=(w_1,\dots,w_h)$ and the numbers $N_{\mbf w,g,\beta}(X)$ are called \emph{relative Gromov--Witten invariants}. To incorporate all topological sectors, in addition to the string coupling $g_s$ weighting the Euler characteristics $\chi(\Sigma_{g,h})=2-2g-h$, we introduce an $N\times N$ Hermitean matrix $V$ to weight the different winding numbers. This matrix is associated to the holonomy of a gauge connection (Wilson line) on the D-brane. Then, taking into account that the holes are indistinguishable, the complete genus expansion of the generating function is
$$
F_{\rm top}^X(g_s,\mbf Q;V)\=\sum_{g=0}^\infty~\sum_{h=1}^\infty~\sum_{\mbf w\in\Z^h}\,\frac1{h!}\,g_s^{2g-2+h}\,F_{\mbf w,g}(\mbf Q)~ \prod_{i=1}^h\,\Tr\big(V^{w_i}\big) \ .
$$
The traces are computed by formally taking the limit $N\to\infty$ and expanding in irreducible representations $R$ of the D-brane gauge group $U(\infty)$.

\subsection{Black hole microstates and D-brane gauge theory}

When $X$ is a Calabi--Yau threefold, certain BPS black holes on
$X\times\bbr^{3,1}$ can be constructed by D-brane
engineering. D-branes in $X$ correspond to submanifolds of $X$
equipped with vector bundles with connection, the Chan--Paton gauge
bundles, and they carry \emph{charges} associated with the Chern
characters of these bundles. This data defines a class in the
differential K-theory of $X$, which provides a topological
classification of D-branes in $X$.

The microscopic black hole entropy can be computed by counting stable bound states of D0--D2--D4--D6 branes wrapping
holomorphic cycles of $X$ with the following configurations:
\begin{itemize}
\item D6-brane charge $\calq_6$;
\item D4-branes wrapping an ample divisor
\beq
[C]\= \displaystyle{\sum_{i=1}^{b_2(X)}\,\calq_4^i\, 
[C_i]~\in~H_4(X,\bbz)}
\label{amplediv}\eeq
with respect to a basis of four-cycles $C_i$,
$i=1,\dots,b_4(X)=b_2(X)$, of $X$;
\item D2-branes wrapping a two-cycle 
$$[S]\=\displaystyle{\sum_{i=1}^{b_2(X)}\,\calq_2^i\,
[S_i]~\in~H_2(X,\bbz)} \ ; $$
and
\item D0-brane charge $\calq_0$.
\end{itemize}
These D-brane charges give the black hole its charge quantum numbers. If we consider large enough numbers of D-branes in this system, then they form bound states which become large black holes with smooth event horizons, that can be counted and therefore account for the microscopic black hole entropy. In this scenario $p^I=(\calq_6,\calq_4^i)$ are interpreted as magnetic charges, and $q_I=(\calq_0,\calq_2^i)$ as electric charges.
The thermal partition function defined via a canonical ensemble for the
D0 and D2 branes with chemical potentials $\mu^I=(\phi^0,\phi_i^2)$, and a microcanonical
ensemble for the D4 and D6 branes, is given by
\beq
{Z_{\rm BH}(\calq_6,\mbf\calq_4,\mbf\phi^2,\phi^0)
\=\sum_{\calq_0,\calq_2^i}\,\Omega(\calq_0,\mbf\calq_2,
\mbf\calq_4,\calq_6)~\e^{-\calq_0\,\phi^0-\calq_2^i\,\phi_i^2}} \ ,
\label{BHpartfn}\eeq
where $\Omega$ is the degeneracy of BPS
states with the given D-brane charges. 

As we mentioned in Section~\ref{GWtheory}, the closed topological string amplitudes $F_g$ are related to supergravity quantities on Minkowski spacetime $\R^{3,1}$. The fact that the genus zero free energy $F_0$ for topological strings on $X$ is a prepotential for BPS black hole charges in $\cN=2$ supergravity determines the entropy $S_{\rm BH}(p,q)$ of an extremal black hole as a Legendre transformation of $F_0$, provided that one fixes the charge moduli by the attractor mechanism. The genus zero topological string amplitude $F_0$ is a homogeneous function of degree two in the $\cN=2$ vector multiplet fields $X^I$. The black hole entropy in the supergravity approximation is then
$$
S_{\rm BH}(p,q)\=\mu^I\,q_I-{\rm Im}\,F_0\big(X^I=p^I+\ii\mu^I\big) \ ,
$$
where the chemical potentials $\mu^I$ are determined by the charges
$p^I$ and $q_I$ by solving the equation
$$
q_I\= \frac{\partial\,{\rm Im}\,F_0}{\partial\mu^I} \ .
$$

Further analyses of the entropy of $\mathcal{N}=2$ BPS black holes on $\bbr^{3,1}$ have been extended to higher genus and suggest the relationship
\beq
{Z_{\rm BH}(\calq_6,\mbf\calq_4,\mbf\phi^2,\phi^0)
\=\big|Z_{\rm top}^X(g_s,\mbf Q)\big|^2
}
\label{OSV}\eeq
between the black hole partition function (\ref{BHpartfn}) and the topological string partition function $$Z_{\rm top}^X(g_s,\mbf Q)\=\exp F_{\rm top}^X(g_s,\mbf Q) \ , $$ where the 
moduli on both sides of this equation are related by their fixing at the attractor point
$$
g_s\=\frac{4\pi\ii}{\frac\ii\pi\,\phi^0+\calq_6} \ , \qquad
t_i\=-\frac{2\phi^2_i+\ii\calq_4^i}{\frac\ii\pi\,\phi^0+\calq_6} \ .
$$
The remarkable relationship (\ref{OSV}) is called the OSV conjecture~\cite{OSV}. It provides a means of using the perturbation expansion of topological strings and Gromov--Witten theory to compute black hole entropy to all orders. Alternatively, although the evidence for this proposal is derived for large black hole charge, the left-hand side of the expression (\ref{OSV}) makes sense for finite charges and in some cases is explicitly computable in closed form. It can thus be used to define {\it non-perturbative}
topological string amplitudes, and hence a non-perturbative completion of a string theory. 

In the following we will focus on the computation of the black hole partition function (\ref{BHpartfn}). The fact that this partition function is computable in a D-brane gauge theory will then give a physical interpretation of the enumerative invariants of $X$ in terms of black hole entropy. Suppose that we have a collection of D$p$-branes wrapping a submanifold $M_{p+1}\subset X$, with
$\dim_\R(M_{p+1})=p+1$ and Chan--Paton gauge field strength
$F$. D-branes are charged with respect to supergravity differential
form fields, the Ramond-Ramond fields, which are also classified
topologically by differential K-theory. Recall that such an array couples to all
$n$-form Ramond-Ramond fields $C_{(n)}$ through anomalous Chern--Simons couplings
$$
\int_{M_{p+1}}~\sum_{n\geq0}\,C_{(n)}\wedge\Tr\exp\big(
2\pi\,\alpha'\,F\big) \ ,
$$
where $\sqrt{\alpha'}$ is the string length. In particular, these couplings contain all terms
  $$\int_{M_{p+1}}\,C_{(p+1-2m)}\wedge\Tr\big(F^m\big) \ , $$ 
and so the topological charge $\ch_m(\cale)$ of a Chan--Paton gauge bundle $\cale\to M_{p+1}$ on a D$p$-brane is equivalent to D$(p-2m)$-brane
charge. A prominent example of this, which will be considered in detail later on, is the coupling $\int_{M_{p+1}}\,C_{(p-3)}\wedge\Tr\big(F\wedge 
  F\big)$. For $p=3$, this shows that the counting of D4--D0 brane bound states is
  equivalent to the enumeration of instantons on the four-dimensional part of the D4-brane in
  $X$. The remaining sections of this paper look at these relationships from the point of view of various BPS configurations of these D-branes. We will study the enumeration problems from the point of view of gauge theories on the D-branes in order of decreasing dimensionality, stressing the analogies between each description.

\bigskip

\section{D6-brane gauge theory and Donaldson--Thomas invariants}

\noindent
In this section we will look at a single D6-brane ($\calq_6=1$) and turn off
  all D4-brane charges ($\calq_4^i=0$). We will discuss various
  physical theories which are modelled by the D6-brane gauge theory in
  this case, but otherwise have no \emph{a priori} relation to string theory. These will include a tractable model for quantum gravity and the statistical mechanics of certain atomic crystal configurations. From the perspective of enumerative geometry, these partition functions will compute the Donaldson--Thomas theory of $X$.

\subsection{K\"ahler quantum gravity\label{KahlerQG}}

We will construct a model of quantum gravity on any K\"ahler threefold $X$, which will motivate the sorts of counting problems that we consider in this section. The partition function is defined by
\beq
Z\=\sum_{\stackrel{\scriptstyle\rm quantized}{\scriptstyle\omega}}\,
\e^{-S}
\label{Kahlerpartfn}\eeq
where
\beq
S\=\frac1{g_s^2}\,\int_X\,\frac1{3!}\,\omega\wedge\omega\wedge
\omega \ .
\label{Kahleraction}\eeq
The sum is over ``quantized'' K\"ahler two-forms on $X$, in the following sense. We decompose the ``macroscopic'' form $\omega$ into a fixed ``background'' K\"ahler two-form $\omega_0$ on $X$ and
the curvature $F$ of a holomorphic line bundle $\Lcal$ over $X$ as
\beq
\omega\=\omega_0+g_s\,F \ .
\label{omegadecomp}\eeq
To satisfy the requirement that there are no D4-branes in $X$, we impose the fluctuation condition
\beq
\int_\beta\,F\=0 
\label{fluctcond}\eeq
for all two-cycles $\beta\in H_2(X,\bbz)$.

Substituting (\ref{omegadecomp}) together with (\ref{fluctcond}) into (\ref{Kahleraction})
gives the action
\beq
S\=\frac1{g_s^2}\,\frac1{3!}\,\int_X\,\omega_0^3+\frac12\,\int_X\,
F\wedge F\wedge\omega_0+g_s\,\int_X\,\frac1{3!}\,F\wedge
F\wedge F \ .
\label{Kahleractionexp}\eeq
The statistical sum (\ref{Kahlerpartfn}) thus becomes (dropping an
irrelevant constant term)
\beq
Z\=\sum_{\stackrel{\scriptstyle\rm line}{\scriptstyle{\rm bundles} \ \Lcal}}\,
q^{\ch_3(\Lcal)}~\prod_{i=1}^{b_2(X)}\,\big(Q_i\big)^{\int_{C_i}\,\ch_2(\Lcal)} \ ,
\label{statsumline}\eeq
where $q=-\e^{-g_s}$, $Q_i=\e^{-t_i}$, and $\ch_m(\Lcal)$ denotes the $m$-th Chern character of the given line bundle $\Lcal\to X$. Note the formal similarity with the A-model topological string partition function constructed in Section~\ref{GWtheory} However, there is a problem with the way in which we have thus far set up this model. The fluctuation condition (\ref{fluctcond}) on $F$ implies
$\ch_2(\Lcal)=\ch_3(\Lcal)=0$. Hence only trivial line bundles can contribute to the sum (\ref{statsumline}), and the partition function is trivial.

The resolution to this problem is to enlarge the range of summation in (\ref{statsumline}) to include singular gauge fields and ideal sheaves. Namely, we take $F$ to correspond to a {\it singular} $U(1)$ gauge field $A$ on
$X$. This can be realized in two (related) possible ways:
\begin{itemize}
\item[(1)] We can make a singular gauge field $A$ non-singular on the blow-up
$$
\widehat{X}~\longrightarrow~X
$$
of the target space, obtained by blowing up the singular points of $A$ on $X$ into copies of the complex projective plane $\PP^2$. This means that the quantum gravitational path integral induces a topology change of the target space $X$. This is refered to as ``quantum foam'' in refs.~\cite{BN,INOV}; or
\item[(2)] We can relax the notion of line bundle to ideal sheaf. Ideal sheaves lift to line
  bundles on $\widehat{X}$. However, there are ``more'' sheaves on $X$ than blow-ups
  $\widehat{X}$ of $X$.
\end{itemize}

In this paper we will take the second point of view. Recall that torsion free sheaves $\mathcal{E}$ on $X$ can be defined by the property that they sit inside short exact sequences of
sheaves of the form
\beq
\xymatrix{
0~\ar[r] &~\cale~\ar[r] & ~\calf~\ar[r] & ~
\cals_Z~\ar[r] & ~0 \ ,
}
\label{torsionfreeseq}\eeq
where $\calf$ is a holomorphic vector bundle on $X$, and $\cals_Z$ is
a coherent sheaf supported at the singular
points $Z\subset X$ of a gauge connection $A$ of $\calf$. Applying the Chern character to (\ref{torsionfreeseq}) and using its additivity on exact sequences gives
$$
\ch_m(\cale)\=\ch_m(\calf)-\ch_m(\cals_Z)
$$
for each $m$. Thus torsion free sheaves $\cale$ fail to be vector
bundles at singular points of gauge fields, and including the singular locus can reinstate the non-trivial topological quantum numbers that we desired above.

As we will discuss in detail in this section, this construction is realized explicitly by considering a noncommutative gauge theory on the target space $X=\bbc^3$. We shall see that the instanton solutions of gauge theory on a noncommutative
  deformation $\bbc_\theta^3$ are described in terms of {\it ideals}
  $\mathcal{I}$ in the polynomial ring $\bbc[z^1,z^2,z^3]$. For generic $X$, the global object that corresponds locally to an ideal is an {\it ideal
  sheaf}, which in each coordinate patch $U_\alpha\subset X$ is described as an ideal
  $\cali_{U_\alpha}$ in the ring $\calo_{U_\alpha}$ of holomorphic
  functions on $U_\alpha$. More abstractly, an ideal sheaf is a rank one torsion free sheaf $\cale$ with $c_1(\cale)=0$. This is a purely commutative description, since the holomorphic functions on
  $\bbc^3$ form a commutative subalgebra of $\bbc^3_\theta$ for the Moyal deformation that we will consider. Thus the desired singular gauge field configurations will be realized explicitly in terms of noncommutative instantons~\cite{BN,INOV}.

\subsection{Crystal melting and random plane partitions}

As we will see, the counting of ideal sheaves is in fact equivalent to
a combinatorial problem, which provides an intriguing connection
between the K\"ahler quantum gravity model of Section~\ref{KahlerQG} and a particular
statistical mechanics model~\cite{ORV}. Consider a cubic crystal 
$$
\mbox{\includegraphics[width=5cm]{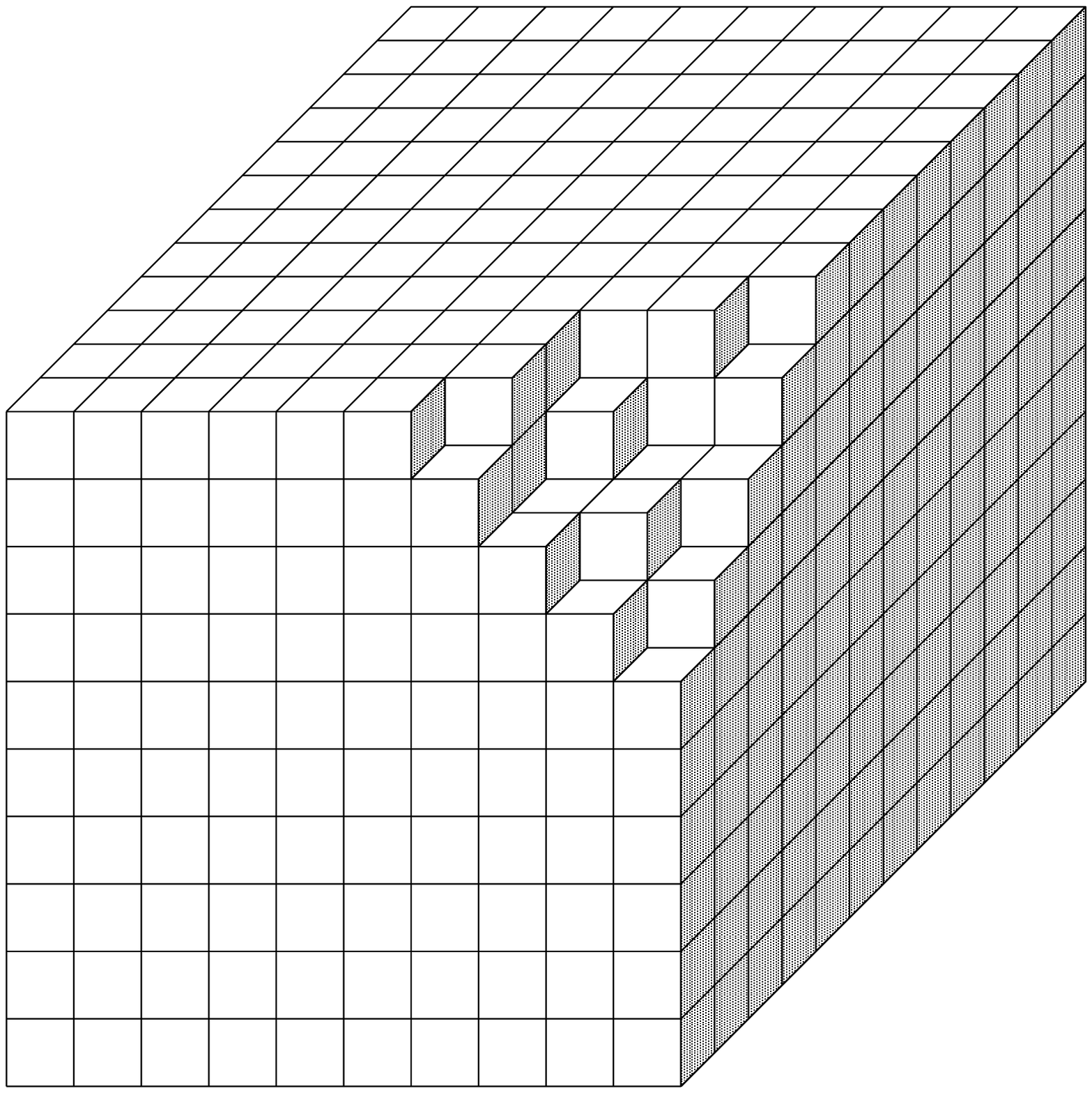}}
$$
located on the lattice $\bbz_{\geq0}^3\subset\bbr^3$. Suppose that we start heating the crystal at its outermost right
corner. As the crystal melts, we remove atoms, depicted symbolically
here by boxes, and arrange them into stacks of boxes in the positive
octant. Owing to the rules for arranging the boxes according to the
order in which they melt, this configuration defines a plane partition
or a three-dimensional Young diagram
$$
\mbox{\includegraphics[width=4cm]{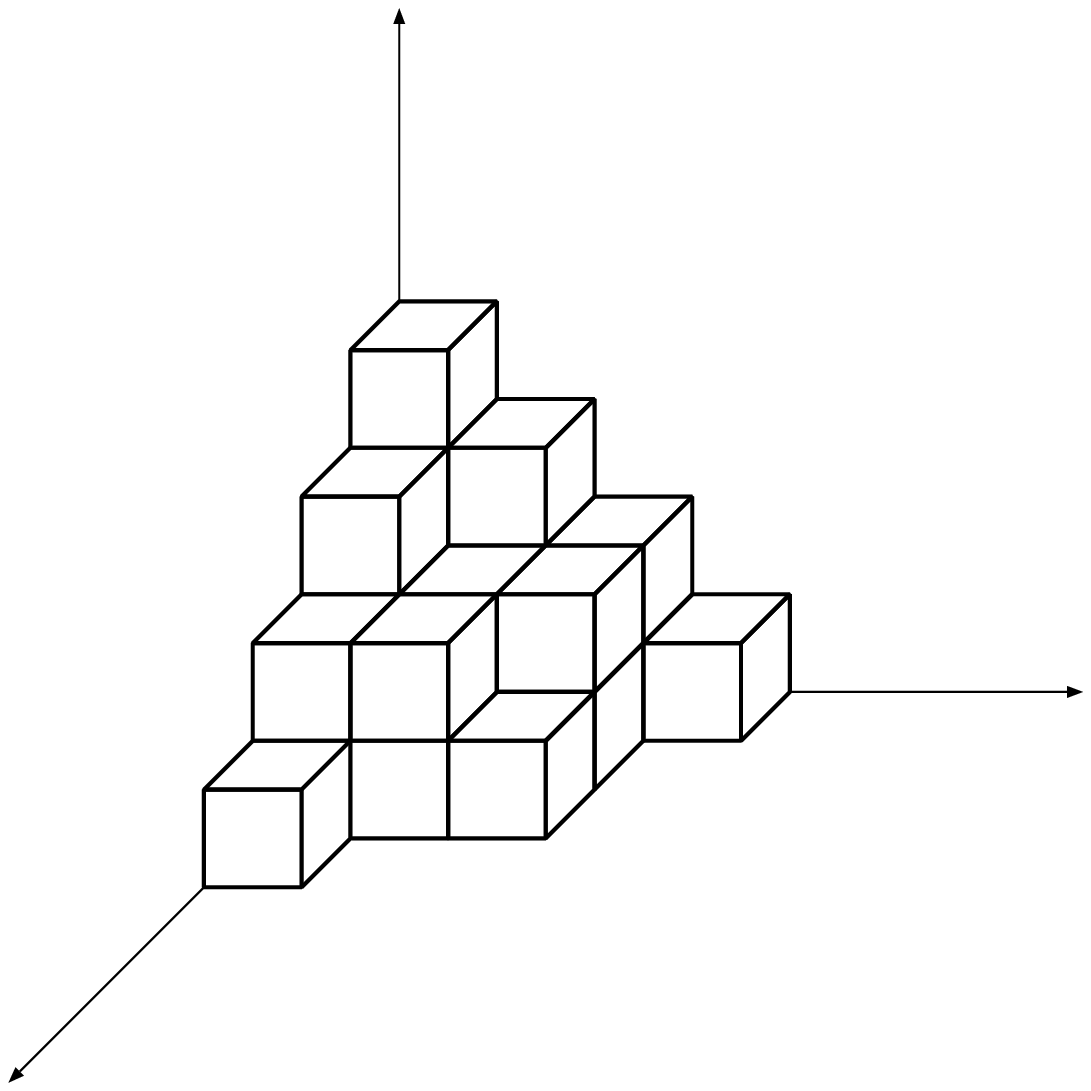}}
$$
Removing each atom from the corner of the crystal contributes a factor $q=\e^{-\mu/T}$ to the Boltzmann weight, where $\mu$ is the chemical potential and $T$ is the temperature.

Let us define more precisely the combinatorial object that we have constructed, which generalizes the usual notion of partition and Young tableau. 
A {\it plane partition} is a semi-infinite 
  rectangular array of non-negative integers
$$
\pi\=\begin{matrix} \pi_{1,1}&\pi_{1,2}&\pi_{1,3}& \cdots \\
\pi_{2,1}&\pi_{2,2}&\pi_{2,3}& \cdots \\
\pi_{3,1}&\pi_{3,2}&\pi_{3,3}& \cdots \\
\vdots&\vdots&\vdots& \end{matrix}
$$
such that $\pi_{i,j}\geq\pi_{i+1,j}$, $\pi_{i,j}\geq\pi_{i,j+1}$ for all
$i,j\geq1$. We may regard a partition $\pi$ as a three-dimensional Young diagram, in which we pile $\pi_{i,j}$ cubes vertically at the
$(i,j)$-th position in the plane as depicted above. The volume of a plane partition
$$
|\pi|\=\displaystyle{\sum_{i,j\geq1}\,\pi_{i,j}}
$$ 
is the total number of cubes. The diagonal slices of $\pi$ are the
partitions $(\pi_{i,i+m})_{i\geq1}$, $m\geq0$ 
obtained by cutting the three-dimensional Young diagram with planes,
and they represent a sequence of ordinary partitions (Young
  tableaux) $\lambda=(\lambda_1,\lambda_2,\dots)$, with
  $\lambda_i\geq\lambda_{i+1}$ for all $i\geq1$. Here $\lambda_i\geq0$
  is the length of the $i$-th row of the Young diagram, viewed as a
  collection of unit squares, and only finitely many $\lambda_i$ are non-zero.

The counting problem for random plane partitions can be solved explicitly in closed form. For this, we consider the statistical mechanics in a canonical ensemble in which each plane partition $\pi$ has
energy proportional to its volume $|\pi|$. The corresponding partition function then gives the generating function for plane partitions
\begin{eqnarray}
Z&:=&\sum_\pi\,q^{|\pi|} \nonumber \\[4pt]
&=&\sum_{N=0}^\infty\,pp(N)\,q^N \nonumber \\[4pt]
&=&\prod_{n=1}^\infty\,\frac1{\big(1-q^n\big)^n}~=:~M(q) \ ,
\label{MacMahon}\end{eqnarray}
where $pp(N)$ is the number of plane partitions $\pi$ with $|\pi|=N$ boxes. The function $M(q)$ is called the \emph{MacMahon function}.

\subsection{Six-dimensional cohomological gauge theory}

We will now describe a $U(1)$ gauge theory formulation of the above statistical models~\cite{INOV,CSS}.
If we gauge fix the residual symmetry of the quantized K\"ahler gravity action (\ref{Kahleractionexp}), we obtain the action
\begin{eqnarray}
S&=&\frac12\,\int_X\,\left(\dd_A\Phi\wedge \star\,\dd_A\overline{\Phi}+
\big|F^{2,0}\big|^2+\big|F^{1,1}\big|^2\right) \nonumber \\ && +\,
\frac12\,\int_X\,\left(F\wedge F\wedge\omega_0+
\frac{g_s}3\,F\wedge F\wedge F\right) \ ,
\label{6dgaugeaction}\end{eqnarray}
where 
$\dd_A=\dd-\ii[A,-]$ is the gauge covariant derivative acting on the
complex scalar field $\Phi$, $\star$ denotes the Hodge operator with
respect to the K\"ahler metric of $X$, and $F=\dd A$ is the curvature two-form which has the
K\"ahler decomposition $F=F^{2,0}+F^{1,1}+F^{0,2}$. The field theory defined by this action arises in three (related) instances as:
\begin{itemize}
\item[(1)] A topological twist of maximally supersymmetric Yang--Mills theory in six dimensions;
\item[(2)] The dimensional reduction of supersymmetric Yang--Mills theory in ten dimensions on $X$; and
\item[(3)] The low-energy effective field theory on a D6-brane wrapping
  $X$ in Type~IIA string theory, with D2 and D0 brane sources.
\end{itemize}

The gauge theory has a BRST symmetry~\cite{BKS,HP} and its partition function localizes at the BRST fixed points described by the equations
\begin{eqnarray}
F^{2,0}&=&0~=~F^{0,2} \ , \label{FA20}\\[4pt]
F^{1,1}\wedge\omega_0\wedge\omega_0&=&0 \ , \label{FA11}\\[4pt]
\dd_A\Phi&=&0 \ . \label{dAPhi}
\end{eqnarray}
These equations also describe three (related) quantities:
\begin{itemize}
\item The {\it Donaldson--Uhlenbeck--Yau (DUY) equations} expressing
  Mumford--Takemoto slope stability of holomorphic vector bundles over $X$ with finite
  characteristic classes;
\item BPS solutions in the gauge theory which correspond to (generalized) instantons; and
\item Bound states of D0--D2 branes in a single D6-brane wrapping $X$.
\end{itemize}
Recall that the equations (\ref{FA20}) and (\ref{FA11}) are a special
instance of the Hermitean Yang--Mills equations in which a
constant $\lambda$ is added to the right-hand side of
(\ref{FA11}). These equations arise in compactifications of heterotic
string theory. The condition that the compactification preserves at
least one unbroken supersymmetry requires $\lambda=0$. These are the
natural BPS conditions on a K\"ahler manifold $(X,\omega_0)$ which
generalize the usual self-duality equations in four dimensions.

The localization of the gauge theory partition function $Z$ onto the corresponding instanton moduli space $\calm_X$ can be written symbolically as~\cite{HP,INOV}
\beq
Z\=\int_{\calm_X}\,e(\caln_X) \ ,
\label{ZeNX}\eeq
where $e(\caln_X)$ is the Euler class of the obstruction bundle
$\caln_X$ whose fibres are spanned by the zero modes of the antighost
fields. The zero modes of the fermion fields in the full
supersymmetric extension of (\ref{6dgaugeaction})~\cite{BKS,HP} are in
correspondence with elements in the cohomology groups of the twisted Dolbeault
complex
$$
\xymatrix{
\Omega^{0,0}(X,{\rm
  ad}\,\Pcal)~\ar[r]^{\overline{\partial}_A} & ~\Omega^{0,1}(X,{\rm
  ad}\,\Pcal)~\ar[r]^{\overline{\partial}_A} & ~\Omega^{0,2}(X,{\rm
  ad}\,\Pcal)~\ar[r]^{\overline{\partial}_A} & ~\Omega^{0,3}(X,{\rm
  ad}\,\Pcal)
}
$$
with ${\rm ad}\,\Pcal$ the adjoint gauge bundle over $X$. By
incorporating the gauge fields one can rewrite this complex in the
form~\cite{INOV}
$$
\xymatrix{
\Omega^{0,0}(X,{\rm
  ad}\,\Pcal)~\ar[r] & ~ {\begin{matrix} \Omega^{0,1}(X,{\rm
  ad}\,\Pcal) \\ \oplus \\ \Omega^{0,3}(X,{\rm
  ad}\,\Pcal) \end{matrix}} ~\ar[r] & ~ \Omega^{0,2}(X,{\rm
  ad}\,\Pcal) \ ,
}
$$
which describes solutions of the DUY equations up to linearized
complex gauge transformations. The morphism $\Omega^{0,3}(X,{\rm
  ad}\,\Pcal)\to \Omega^{0,2}(X,{\rm
  ad}\,\Pcal)$ here is responsible for the appearence of the
obstruction bundle in (\ref{ZeNX})~\cite{HP,INOV}.

In order for the integral (\ref{ZeNX}) to be well-defined, we need to choose a compactification of $\calm_X$. In light of our earlier discussion, we will take this to be the Gieseker compactification, i.e. the moduli space of ideal sheaves on
$X$. The corresponding variety $\calm_X$ stratifies into components
${\rm Hilb}_{n,\beta}(X)$ given by the \emph{Hilbert scheme} of points
and curves in $X$, parametrizing isomorphism classes of ideal sheaves $\cale$ with
$\ch_1(\cale)=c_1(\cale)=0$, $\ch_2(\cale)=-\beta$, and
$\ch_3(\cale)=-n$. The partition function (\ref{ZeNX}) is the
generating function for the number of D0--D2 brane bound states in the
D6-brane wrapping $X$. Mathematically, these are the
\emph{Donaldson--Thomas invariants} of $X$. We will define this moduli
space integration, and hence these invariants, more precisely in Section~\ref{Wallcrossing}

\subsection{Localization in toric geometry}

Toric varieties provide a large class of algebraic varieties in which
difficult problems in algebraic geometry can be reduced to combinatorics.
Much of this paper will be concerned with these geometries as they possess symmetries which facilitate computations, particularly those
involving moduli space integrations. Let us start by recalling some
basic notions from toric geometry. Below we give the pertinent
definitions specifically in the case of varieties of complex dimension
three, the case of immediate interest to us, but they extend to
arbitrary dimensions in the obvious ways.

A smooth complex threefold $X$ is called a {\it toric
    manifold} if it densely contains a (complex algebraic) torus $T^3$ and the
  natural action of $T^3$ on itself (by translations) extends to the
  whole of $X$. Basic examples are the torus $T^3$ itself, the affine space
  $\bbc^3$, and the complex projective space $\bbP^3$. If in addition
  $X$ is Calabi--Yau, then $X$ is necessarily non-compact.

One of the great virtues of working with toric varieties $X$ is that
their geometry can be completely described by combinatorial data
encoded in a {\it toric
    diagram}. The toric diagram is a graph consisting of the following ingredients:
\begin{itemize}
\item A set of {\it vertices} $f$ which are the fixed points of the
  $T^3$-action on $X$, such that $X$ can be covered by
  $T^3$-invariant open charts homeomorphic to $\bbc^3$;
\item A set of {\it edges} $e$ which are $T^3$-invariant projective
  lines $\bbP^1\subset X$
  joining particular pairs of fixed points $f_1$, $f_2$; and
\item A set of {\it ``gluing rules''} for assembling the $\C^3$
  patches together to reconstruct the variety $X$. In a neighbourhood of each edge $e$,
  $X$ looks like the normal bundle over the corresponding
  $\bbP^1$. Since this normal bundle is a holomorphic bundle of rank two and every
  bundle over $\PP^1$ is a sum of line bundles (by the
  Grothendieck--Birkhoff theorem), it is of the form 
  $$\calo_{\bbP^1}(-m_1)\oplus\calo_{\bbP^1}(-m_2)$$ for some integers
  $m_1,m_2$. The normal bundle in this way
  determines the local geometry of $X$ near the edge $e$ via the
  transition function
  $$(w_1,w_2,w_3)~\longmapsto~
  \big(w_1^{-1}\,,\,w_2\,w_1^{-m_1}\,,\,w_3\,w_1^{-m_2}\big)$$
between the corresponding affine patches (going from the north pole to
the south pole of the associated $\PP^1$). In the Calabi--Yau case, the
Chern numbers $c_1(X)=0$ and $c_1(\PP^1)=2$ imply the condition
$m_1+m_2=2$.
\end{itemize}

For an open toric manifold $X$, we can exploit the toric symmetries to regularize
the infrared singularities on the instanton moduli space $\calm_X$ by
``undoing'' the $T^3$-rotations by gauge transformations~\cite{Nekrasov}. In this way
we will compute our moduli space integrals by using techniques from
equivariant localization, which in the present context will be refered
to as \emph{toric localization}. Recall that the bosonic sector of the
topologically twisted theory comprises a gauge connection $A_i$ and a
complex Higgs field $\Phi$. In particular, the supercharges contain a
scalar $Q$ and a vector $Q_i$. Generically, only $Q$ is conserved and
can be used to define the topological twist of the gauge theory. If
the threefold $X$ has symmetries then one can also use $Q_i$. In the
generic formulation of the theory, one only considers the scalar
topological charge $Q$ and restricts attention to gauge-invariant
observables. But in the present case one can also use the linear
combination
$$
Q_{\Omega} \= Q+\epsilon_a\,\Omega^a_{ij}\,
x^i\,Q^j \ ,
$$
where $\epsilon^a$ are the parameters of the isometric action of
$T^3\subset U(3)$ on the K\"ahler space $\C^3$, and
$\Omega^a=\Omega^a_{ij}\,x^j\,\frac\partial{\partial x_i}$ are vector
fields which generate
$SO(6)$ rotations of $\bbc^3\cong\bbr^6$. In this case we also 
consider observables which are only gauge-invariant up to a rotation. This
means that the new observables are equivariant differential
forms and the BRST charge $Q_{\Omega} $ can be interpreted as an equivariant
differential $\mathrm{d} + \imath_{\Omega} $ on the
space of field configurations, where $\imath_\Omega$ acts by
contraction with the vector field $\Omega$. 

This procedure modifies
the action and the equations of motion by mixing gauge invariance
with rotations. This set of modifications can sometimes be obtained by
defining the gauge theory on an appropriate supergravity background
called the ``$\Omega$-background''. In particular, the fixed point
equation (\ref{dAPhi}) is modified to
\beq
\dd_A\Phi\=\imath_\Omega F \ .
\label{dAPhiOmega}\eeq
The set of equations (\ref{FA20}), (\ref{FA11}) and (\ref{dAPhiOmega}) 
minimizes the action of the cohomological gauge theory in the
$\Omega$-background, and describes $T^3$-invariant instantons
(or, as we shall see, ideal sheaves). In particular, there is a
natural lift of the toric action to the instanton moduli space
$\calm_X$. We will henceforth study the gauge theory equivariantly, and interpret
the truncation of the partition function (\ref{ZeNX}) as an equivariant integral
over $\calm_X$. This will always mean that we work
solely in the Coulomb branch of the gauge theory. Due to the equivariant deformation of the BRST charge,
these moduli space integrals can be computed via equivariant
localization.

\subsection{Equivariant integration over moduli spaces\label{Eqint}}

We now explain the localization formulas that
will be used to compute partition functions throughout this paper. Let
$\calm$ be a smooth algebraic variety. Then we can define the
$\tilde T$-equivariant cohomology $H_{\tilde T}^\bullet(\calm,\Q)$ as
the ordinary cohomology $H^\bullet(\calm_{\tilde T},\Q)$ of the
Borel--Moore homotopy quotient $\calm_{\tilde T}:=(\calm\times
E\tilde T)/{\tilde T}$, where
$E\tilde T=(\C^\infty\setminus\{0\})^{N+k}$ is a contractible space 
on which $\tilde T= U(1)^N\times T^k$ acts freely. In the present
example of interest, $N=1$ and $k=3$. Given a
$\tilde T$-equivariant vector bundle $\Ecal\to\calm$, the quotient
$\Ecal_{\tilde T}=(\Ecal\times E\tilde T)/{\tilde T}$ is a vector
bundle over $\calm_{\tilde T}=(\calm\times E\tilde T)/{\tilde T}$. The
$\tilde T$-equivariant Euler class of $\Ecal$ is
the invertible element defined by
$$
e_{\tilde T}(\Ecal)~:=~e(\Ecal_{\tilde T})~\in~H_{\tilde T}^\bullet(
\calm,\Q) \ ,
$$
where $e$ is the ordinary Euler class for vector
bundles (the top Chern class). 

Let $B\tilde T:=E\tilde T/\tilde T=(\PP^\infty)^{k+N}$. Then
$E\tilde T\to B\tilde T$ is a universal principal $\tilde T$-bundle,
and there is a fibration $\calm_{\tilde T}\to B\tilde T$ with fibre
$\calm$. Integration in equivariant cohomology is defined as the
pushforward $\oint_\calm$ of the collapsing map $\calm\to{\rm pt}$, which coincides
with integration over the fibres $\calm$ of the bundle
$\calm_{\tilde T}\to B\tilde T$ in ordinary cohomology. Let
$p_i:B\tilde T\to\PP^\infty$ for $i=1,\dots,k$ and
$q_l:B\tilde T\to\PP^\infty$ for $l=1,\dots,N$ be the canonical
projections onto the $i$-th and $l$-th factors. Introduce equivariant
parameters $\epsilon_i=(c_1)_{\tilde T}(p_i^*\Ocal_{\PP^\infty}(1))$,
with
$t_i=\e^{\epsilon_i}=(\ch_{\tilde T})_1(p_i^*\Ocal_{\PP^\infty}(1))$,
and $a_l=(c_1)_{\tilde T}(q_l^*\Ocal_{\PP^\infty}(1))$, with
$e_l=\e^{a_l}=(\ch_{\tilde T})_1(q_l^*\Ocal_{\PP^\infty}(1))$.

The Atiyah--Bott localization formula in equivariant cohomology states that
\beq
\oint_\calm\,\alpha\=\oint_{\calm^{\tilde T}}\,
\frac{\alpha\big|_{\calm^{\tilde T}}}{e_{\tilde T}(\caln)}
\label{ABlocformula}\eeq
for any equivariant differential form $\alpha\in H_{\tilde T}^\bullet(\calm,\Q)$, where the complex
vector bundle $\caln\to\calm^{\tilde T}$ is the normal bundle over the
(compact) fixed point submanifold in $\calm$. When
$\calm^{\tilde T}$ consists of finitely many isolated points $f$, this formula simplifies
to 
\beq
\oint_\calm\,\alpha\=\sum_{f\in\calm^{\tilde T}}\,\frac{\alpha(f)}{e_{\tilde T}(T_f\calm)} \ .
\label{ABloc}\eeq
Each term in this sum takes values in the polynomial ring
$$
H_{\tilde T}^\bullet(f,\Q)
\=H^\bullet\big(B\tilde T\,,\,\Q\big)~\cong~
\Q[\epsilon_1,\dots,\epsilon_k,a_1,\dots,a_N]
$$
in the generators of
$\tilde T= U(1)^N\times T^k$. When the manifold $\calm$ is
non-compact, integration along the fibre is not a well-defined
$\Q$-linear map. Nevertheless, when $\calm^{\tilde T}$ is compact, we
can formally \emph{define} the equivariant integral $\oint_\calm\,\alpha$ by the right-hand side
of the formula (\ref{ABlocformula}).

Going back to our example, when $X=\C^3$, one has $\ch_2(\cale)=0$ and
the partition function $Z$ is saturated by
contributions from isolated, pointlike instantons (D0-branes) by a
formal application of the 
{localization formula} (\ref{ABloc}). However, these
expressions are all rather symbolic, as we are not guaranteed that the algebraic
scheme $\calm_X$ is a smooth variety, i.e. that the instanton
moduli space has a well-defined stable tangent bundle with tangent
spaces all of the same dimension. However, 
the variety $\calm_X$ is \emph{generically} smooth and 
there is a well-defined \emph{virtual} tangent bundle. The moduli
space integration (\ref{ABloc}) can then be formally
defined by virtual localization in equivariant Chow
theory. As discussed in ref.~\cite{MNOP}, the (stratified components
of the) instanton moduli space $\calm_X$
carries a canonical perfect obstruction theory in the sense
of ref.~\cite{GraPand}. In obstruction theory, the virtual tangent space at
a point $[\cale]\in\calm_X$ is given by 
$$
T_{[\cale]}^{\rm vir}\calm_X\=\Ext_{\Ocal_X}^1(\cale,\cale)\ominus\Ext_{\Ocal_X}^2(\cale,\cale) \ ,
$$
where $\Ext^1_{\Ocal_X}(\cale,\cale)$ is the Zariski tangent space and $\Ext^2_{\Ocal_X}(\cale,\cale)$ the
obstruction space of $\calm_X$ at $[\cale]$.
Its dimension is
given by the difference of Euler characteristics
$\chi(\Ocal_X\otimes\Ocal_X^\vee)-\chi(\cale\otimes \cale^\vee\,)$. The kernel of the trace map
$$
\Ext^2_{\Ocal_X}(\cale,\cale) ~ \longrightarrow ~ H^2(X,\Ocal_X)
$$
is the obstruction to smoothness at a point $[\cale]$ of the moduli space.

The bundles $\Ecal_i:=\Ext^i_{\Ocal_X}(\cale,\cale)$,
$i=1,2$ for $[\cale]\in\calm_X$ define a canonical
$T^k$-equivariant perfect obstruction theory
$\Ecal_\bullet=(\,\Ecal_1\to
\Ecal_2)$ (see ref.~\cite[Sec.~1]{GraPand}) on the
instanton moduli space $\calm=\calm_X$. In this case, one may
construct a virtual fundamental class $[\calm]^{\rm vir}$ and apply a
virtual localization formula. The general theory is developed
in ref.~\cite{GraPand} and requires a $T^k$-equivariant
embedding of $\calm$ in a smooth variety $\mathfrak{Y}$. The existence of
such an embedding in the present case follows from the stratification of $\calm_X$ into Hilbert schemes of points
and curves. Then one can deduce the localization formula over
$\calm$ from the known ambient localization formula over the smooth
variety $\mathfrak{Y}$, as above. In this paper we shall only need a
special case of this general framework, the virtual Bott residue
formula.

We can decompose $\Ecal_i$ into $T^k$-eigenbundles. The scheme
theoretic fixed point locus $\calm^{T^k}$ is the maximal
$T^k$-fixed closed subscheme of $\calm$. It carries a canonical
perfect obstruction theory, defined by the $T^k$-fixed part of
the restriction of the complex $\Ecal_\bullet$ to $\calm^{T^k}$,
which may be used to define a virtual structure on
$\calm^{T^k}$. The sum of the non-zero $T^k$-weight spaces of
$\Ecal_\bullet\big|_{\calm^{T^k}}$ defines the virtual normal
bundle $\caln^{\rm vir}$ to $\calm^{T^k}$. Define the Euler class
of a virtual bundle $\Acal=\Acal_1\ominus\Acal_2$
using formal multiplicativity, i.e. as the ratio of the Euler classes
of the two bundles, $e(\Acal)=e(\Acal_1)/e(\Acal_2)$. Then the
virtual Bott localization formula for the Euler class of a bundle
$\Acal$ of rank equal to the virtual dimension of $\calm$ reads~\cite{GraPand} 
\beq
\oint_{[\calm]^{\rm vir}}\,e(\Acal)\=\oint_{[\calm^{T^k}\,]^{\rm
    vir}}\, 
\frac{e_{T^k}\big(\Acal\big|_{\calm^{T^k}}\big)}
{e_{T^k}\big(\caln^{\rm vir}\big)} \ ,
\label{Bottloc}\eeq
where the integration is again defined via pushforward maps. The
equivariant Euler classes on the right-hand side of this formula are
invertible in the localized equivariant Chow ring of the scheme
$\calm$ given by
$CH_{T^k}^\bullet(\calm)\otimes_{\Q[\epsilon_1,\dots,\epsilon_k]}
\Q[\epsilon_1,\dots,\epsilon_k]_{\mathfrak{m}}$, where
$\Q[\epsilon_1,\dots,\epsilon_k]_{\mathfrak{m}}$ is the localization
of the ring $\Q[\epsilon_1,\dots,\epsilon_k]$ at the maximal ideal $\mathfrak{m}$
generated by $\epsilon_1,\dots,\epsilon_k$.

If $\calm$ is smooth, then $\calm^{T^k}$ is the non-singular set
theoretic fixed point locus, consisting here of finitely many points $[\cale]$. However, in general the
formula (\ref{Bottloc}) must be understood scheme theoretically, here as a
sum over $T^k$-fixed closed subschemes of $\calm$ supported at
the points $[\cale]\in\calm^{T^k}$ (with $k=3$). With
$\rho_\cale^i:T^k\to\End_\C\big(\Ext_{\Ocal_X}^i(\cale,\cale)\big)$, $i=1,2$ denoting the
induced torus actions on the tangent and obstruction bundles on
$\calm$, one generically has decompositions
\beq
\Ext^i_{\Ocal_X}(\cale,\cale)\=\Ext_1^i(\cale,\cale)\oplus
\ker\big(\rho_\cale^i(T^k)\big)
\label{Ext1EEinvt}\eeq
where $\Ext_1^i(\cale,\cale)$ is a $T^k$-invariant subspace of
$\Ext^i_{\Ocal_X}(\cale,\cale)$. As demonstrated 
in ref.~\cite[Sec.~4.5]{MNOP}, the
kernel module in (\ref{Ext1EEinvt}) vanishes. Hence each subscheme here is just
the reduced point $[\cale]$ and the $T^k$-fixed obstruction theory
at $[\cale]$ is trivial. Under these conditions, the virtual localization
formula (\ref{Bottloc}) may be written as 
$$
\oint_{[\calm]^{\rm vir}}\,e(\Acal)\=\sum_{[\cale]\in\calm^{T^k}}\,
\frac{e_{T^k}\big(\Acal([\calE])\big)}
{e_{T^k}\big(T_{[\cale]}^{\rm vir}\calm\big)} \ .
$$
The right-hand side of this formula again takes values in the
polynomial ring $\Q[\epsilon_1,\dots,\epsilon_k]$. When $\Ext_{\Ocal_X}^0(\cale,\cale)=\Ext_{\Ocal_X}^2(\cale,\cale)=0$ for
all $[\cale]\in\calm_X$, the moduli space
$\calm_X$ is a smooth algebraic variety with the trivial
perfect obstruction theory and this equation reduces immediately to the standard localization formula
in equivariant cohomology given above. In this paper we
will make the natural choice for the bundle $\Acal$, the virtual
tangent bundle $T^{\rm vir}\calm$ itself.

\subsection{Noncommutative gauge theory}

To compute the instanton contributions (\ref{ZeNX}) to the partition
function of the
cohomological gauge theory, we have to resolve the small instanton
ultraviolet singularities of $\calm_X$. This can be achieved by
replacing the space 
$X=\bbc^3\cong\bbr^6$ with its noncommutative deformation
$\bbr^6_\theta$ defined by letting the coordinate generators $x^i$,
$i=1,\dots,6$ satisfy
the commutation relations of the Weyl algebra
$$
\big[x^i\,,\,x^j\big]\=\ii\theta^{ij} \ ,
$$
where 
$$
\big(\theta^{ij}\big)\= 
\begin{pmatrix} 0&\theta_1& & & & \\{}
-\theta_1&0& & & & \\{} & &0&\theta_2& & \\{}
 & &-\theta_2&0& & \\{} & & & &0&\theta_3\\{}
 & & & &-\theta_3&0 \end{pmatrix}
$$
is a constant $6\times6$ skew-symmetric matrix which we take in Jordan
canonical form without loss of generality (by a suitable linear transformation of $\R^6$ if
necessary). We will assume that $\theta_1,\theta_2,\theta_3>0$ for
simplicity. The noncommutative polynomial algebra
$$
\alg\=\bbc\big[x^1\,,\,x^2\,,\,x^3\big]\,\big/\,
\big\langle[x^i,x^j]-\ii\theta^{ij}\big\rangle
$$ 
is regarded as the ``algebra of
functions'' on the noncommutative space $\bbr^6_\theta$.

We can represent the algebra $\alg$ on the standard Fock module
\beq
\hil\=\bbc\big[\alpha^\dag_1\,,\,\alpha^\dag_2\,,\,
\alpha^\dag_3\big]|0,0,0\rangle\=
\bigoplus_{i,j,k=0}^\infty\,\bbc|i,j,k\rangle \ ,
\label{Fockspace}\eeq
where the orthonormal basis states $|i,j,k\rangle$ are connected by
the action of creation and
annihilation operators $\alpha_a^\dag$ and $\alpha_a$, $a=1,2,3$. They
obey $\alpha_a|0,0,0\rangle=0$ and
$$
\big[\alpha_a^\dag\,,\,\alpha_b\big]\=\delta_{ab} \ , \qquad
\big[\alpha_a\,,\,\alpha_b\big]\=0\=\big[\alpha_a^\dag\,,\,
\alpha_b^\dag\big] \ .
$$
In the Weyl operator realization with the complex combinations of
operators $$z^a\=x^{2a-1}-\ii
x^{2a}\=\sqrt{2\theta_a}~\alpha_a \ , \qquad \bar z^{\bar a}\=x^{2a-1}+\ii
x^{2a}\=\sqrt{2\theta_a}~\alpha_a^\dag$$ for $a=1,2,3$, derivatives of
fields are replaced by the inner automorphisms
$$
\partial_{z^a}f~\longrightarrow~\frac1{2\theta_a}\,
\delta_{a\bar b}\,\big[\bar z^{\bar b}\,,\,f
\big] \ , 
$$
while spacetime averages are replaced by traces over $\hil$ according
to
$$
\int_{\bbr^6}\,\dd^6x~f(x)~\longrightarrow~
(2\pi)^3\,\theta_1\,\theta_2\,\theta_3~\Tr_\hil(f) \ .
$$

In the noncommutative gauge theory, we introduce the covariant coordinates
$$
X^i\=x^i+\ii\theta^{ij}\,A_j
$$
and their complex combinations
$$
Z^a\=\frac1{\sqrt{2\theta_a}}\,\big(X^{2a-1}+\ii X^{2a}\big)
$$
for $a=1,2,3$. Then the $(1,1)$ and $(2,0)$ components of the
curvature two-form can be respectively
expressed as
$$
F_{a\bar b}\=[Z_a,Z_{\bar b}]+\frac1{2\theta_a}\,\delta_{a\bar
  b} \ , \qquad F_{ab}\=[Z_{a},Z_{b}] \ ,
$$
while the covariant derivatives of the Higgs field becomes 
$$
(\partial_A)_a\Phi\=[Z_a,\Phi] \ .
$$
The instanton equations (\ref{FA20}), (\ref{FA11}) and
(\ref{dAPhiOmega}) then become algebraic equations
\bea
\big[Z^a\,,\,Z^b\big]&=&0 \ , \label{ZaZb}\\[4pt] \big[Z^a\,,\,Z_a^\dag\big]&=&3 \ ,
\label{ZaZadag}\\[4pt] \big[Z_a\,,\,\Phi\big]&=&\epsilon_a\,Z_a \ .
\label{ZaPhi}\eea
These equations describe BPS bound states of the D0--D6 system in a
$B$-field background, which is necessary for reinstating
supersymmetry~\cite{Mihailescu,WittenB}. In addition, $T^3$-invariance of the (unique) holomorphic
three-form on $X$ imposes the Calabi--Yau condition
\beq
\epsilon_1+\epsilon_2+\epsilon_3\=0 \ .
\label{CYepsiloncond}\eeq

\subsection{Instanton moduli space\label{InstmodspDT}}

A major technical advantage of introducing the noncommutative
deformation is that the instanton moduli space can be constructed
explicitly, by solving the noncommutative instanton equations
(\ref{ZaZb})--(\ref{ZaPhi}). First we construct the vacuum solution of
the noncommutative gauge theory, with $F=0$. It is obtained by
setting $A=0$ and is given explicitly by harmonic oscillator algebra
$$
Z^a\=\alpha_a \ , \qquad
\Phi\=\sum_{a=1}^3\,\epsilon_a\,\alpha_a^\dag\,\alpha_a \ .
$$

Other solutions are found via the solution generating technique
described in e.g. refs.~\cite{KrausSh,NekrasovLectures}. For the general solution, fix an integer $n\geq1$ and let $U_n$ be a partial isometry 
on the Hilbert space $\hil$ which projects out all states
$|i,j,k\rangle$ with $i+j+k<n$. Such an operator satisfies the equations
$$
U_n^\dag\,U_n\=1-\Pi_n \ , \qquad U_n\,U_n^\dag\=1
$$
where
$$
\Pi_n\=\sum_{i+j+k<n}\,|i,j,k\rangle\langle i,j,k|
$$
is a Hermitean projection operator onto a finite-dimensional subspace
of $\hil$. Then we make the ansatz
\beq
Z^a\=U_n\,\alpha_a\,f(N)\,U_n^\dag \ , \qquad
\Phi\=U_n\,\Big(\,\sum_{a=1}^3\,\epsilon_a\,\alpha_a^\dag\,\alpha_a\,
\Big)\,U_n^\dag \ ,
\label{ZaPhiansatz}\eeq
where $f$ is a real function of the number operator
$$N\=\displaystyle{\sum_{a=1}^3\,\alpha_a^\dag\,\alpha_a} \ . $$

Using standard harmonic oscillator algebra, we can write the DUY
equations (\ref{ZaZb})--(\ref{ZaPhi}) as
$$
U_n\,\big(N\,f^2(N-1)-(N+3)\,f^2(N)+3\big)\,U_n^\dag\=0 \ .
$$
This recursion relation has a unique solution with the initial conditions
$f(i)=0$, $i=0,1,\dots,n-1$, and the finite energy condition
$f(r)\rightarrow1$ as $r\rightarrow\infty$. It is given by~\cite{PS}
$$
f(N)\=\sqrt{1-\frac{n\,(n+1)\,(n+2)}{(N+1)\,(N+2)\,(N+3)}}~
(1-\Pi_n) \ .
$$
The topological charge of the corresponding noncommutative instanton is
\beq
\ell(n)\=\ch_3(\cale) \= - \mbox{$\frac\ii6$}\,\theta_1\,\theta_2\,\theta_3~ 
\Tr_\hil(F\wedge F\wedge F)\=\mbox{$\frac16$}\,
n\,(n+1)\,(n+2) \ .
\label{ellnch3}\eeq
Thus the instanton number is the number of states in $\hil$ with
$N<n$, i.e. the number of vectors removed by $U_n$, or equivalently
the rank of the projector $\Pi_n$.

The partial isometry $U_n$ identifies the full Fock space $\hil=
\bbc[\alpha^\dag_1,\alpha^\dag_2,
\alpha^\dag_3]|0,0,0\rangle$ with the subspace $$\hil_\cali\=
\displaystyle{\bigoplus_{f\in\cali}\,
  f\big(\alpha^\dag_1\,,\,\alpha^\dag_2\,,\, 
\alpha^\dag_3\big)|0,0,0\rangle} \ , $$ where
$$
\cali\=\bbc\big\langle w_1^i\,w_2^j\,w_3^k~\big|~i+j+k\geq
n\big\rangle
$$
is a monomial ideal of codimension $\ell=\ell(n)$ in the polynomial ring
$\bbc[w_1,w_2,w_3]$. The instanton moduli space can thus be identified
as the Hilbert scheme $$\calm_X\=\Hilb_{\ell,0}(X)\=X^{[\ell]}$$ of
  $\ell$ points in $X=\bbc^3$. The Hilbert--Chow morphism
$$
X^{[\ell]}~\longrightarrow~ {\rm Sym}^\ell(X)\= X^\ell\,\big/\,S_\ell
$$
identifies the Hilbert scheme of points as a crepant resolution of the
(coincident point) singularities of the
$\ell$-th symmetric product orbifold of $X$. The ideal $\cali$ defines a plane
partition $\pi$ with $|\pi|\=\ell$ boxes given by
$$
\pi\=\big\{(i,j,k)~\big|~i,j,k\geq1 ~ , ~
w_1^{i-1}\,w_2^{j-1}\,w_3^{k-1}\notin\cali\big\} \ .
$$
Heuristically, this configuration represents instantons which sit on
top of each other at the origin of $\bbc^3$, and
  along its coordinate axes where they asymptote to four-dimensional noncommutative
  instantons at infinity described by ordinary Young tableaux $\lambda$.

\subsection{Donaldson--Thomas theory\label{DTtheory}}

We can finally compute the instanton contributions to the partition
function of the cohomological gauge theory on any toric Calabi--Yau
threefold $X$~\cite{INOV,CSS}. Let us start with the case $X=\C^3$. Using
(\ref{ellnch3}), the contribution of an instanton corresponding to a
plane partition $\pi$ contributes a factor
$$
\displaystyle{\exp\Big(\,-\frac{\ii g_s}{48\pi^3}\,
\Tr_{\hil_\cali}\big(F^3\big)\Big)\=\e^{-g_s\,|\pi|}}
$$
to the Boltzmann weight appearing in the functional integral. There is
also a measure factor which comes from integrating out the bosonic and
fermionic fields in the supersymmetric gauge theory. This yields a
ratio of fluctuation determinants
\begin{eqnarray*}
Z_\pi&=&\frac{\det({\rm ad}\,\Phi)~\prod\limits_{i<j}\,\det(
{\rm ad}\,\Phi+\epsilon_i+\epsilon_j)}
{\det({\rm ad}\,\Phi+\epsilon_1+\epsilon_2+\epsilon_3)~
\prod\limits_{i=1}^3\,\det({\rm ad}\,\Phi+\epsilon_i)} \\[4pt]
&=&\exp\Big(-\int_0^\infty\,\frac{\dd t}t~\frac{\charac_\cali(t)\,
\charac_\cali(-t)}{\big(1-\e^{t\,\epsilon_1}\big)\,
\big(1-\e^{t\,\epsilon_2}\big)\,\big(1-\e^{t\,\epsilon_3}\big)}\Big)
\end{eqnarray*}
with the normalized character
\begin{eqnarray*}
\charac_\cali(t)&=& \prod\limits_{i=1}^3\,\big(1-\e^{t\,\epsilon_i}\big)~
\Tr_{\hil_\cali}\big(\e^{t\,\Phi}\big) \\[4pt]
&=& 1-\prod\limits_{i=1}^3\,\big(1-\e^{t\,\epsilon_i}\big)~
\sum\limits_{(i,j,k)\in\pi}\,\e^{t\,(\epsilon_1\,(i-1)+
\epsilon_2\,(j-1)+\epsilon_3\,(k-1))} \ ,
\end{eqnarray*}
where we have used the solution for $\Phi$ in (\ref{ZaPhiansatz}). Using 
the Calabi--Yau condition $\epsilon_1+\epsilon_2+\epsilon_3=0$, it is easy to see that these
determinants cancel up to a sign.

After some computation, one can explicitly determine this sign to get
$$
Z_\pi\=Z_{\pi=\emptyset}\cdot (-1)^{|\pi|} \ .
$$
The contribution $Z_\emptyset$ from the empty partition is the
one-loop perturbative contribution to the functional integral, and
hence will be dropped. Then the instanton sum for the partition
function is given by
$$
Z_{\rm DT}^{\bbc^3}(q)\=\sum_\pi\,\big(-\e^{-g_s}\big)^{|\pi|}\=
\sum_\pi\,q^{|\pi|} \ ,
$$
which is just the MacMahon function $M(q)$ with $q=-\e^{-g_s}$. This is the
known formula for the Donaldson--Thomas partition function on
$\bbc^3$.

For later use, let us note a convenient resummation formula for this partition
function~\cite{ORV}. Using interlacing relations, the sum over plane partitions $\pi$ can be converted into a
triple sum over the Young tableaux obtained from the main diagonal
slice $\lambda=(\pi_{i,i})_{i\geq1}$, together with a sum over pairs
of semi-standard tableaux of shape $\lambda$ obtained by putting $m+1$ of them
in boxes of the skew diagram associated to the $m$-th diagonal slice
for each $m\geq0$. The partial sum over each semi-standard tableaux
coincides with the combinatorial definition of the \emph{Schur
  functions} at a particular value, which can be expressed through the
hook formula
$$
\displaystyle{s_\lambda(q^\rho)\= q^{n(\lambda)+|\lambda|/2}\,
\prod_{(i,j)\in\lambda}\,\frac1{1-q^{h(i,j)}}}
$$
where $n(\lambda)\=\sum_i\,(i-1)\,\lambda_i$, and $h(i,j)$ is the hook length of
the box located at position $(i,j)$ in the Young tableau $\lambda\subset\Z_{\geq0}^2$. Then the partition function
can be rewritten as a sum over \emph{ordinary} partitions
$$
Z_{\rm DT}^{\bbc^3}(q)\=\sum_\lambda\,s_\lambda(q^\rho)^2 \ .
$$

This construction can be generalized to \emph{arbitrary} toric
Calabi--Yau threefolds $X$ by using the gluing rules of toric
geometry. The two simplest such varieties are described by the toric diagrams
$$
\mbox{\includegraphics[width=10cm]{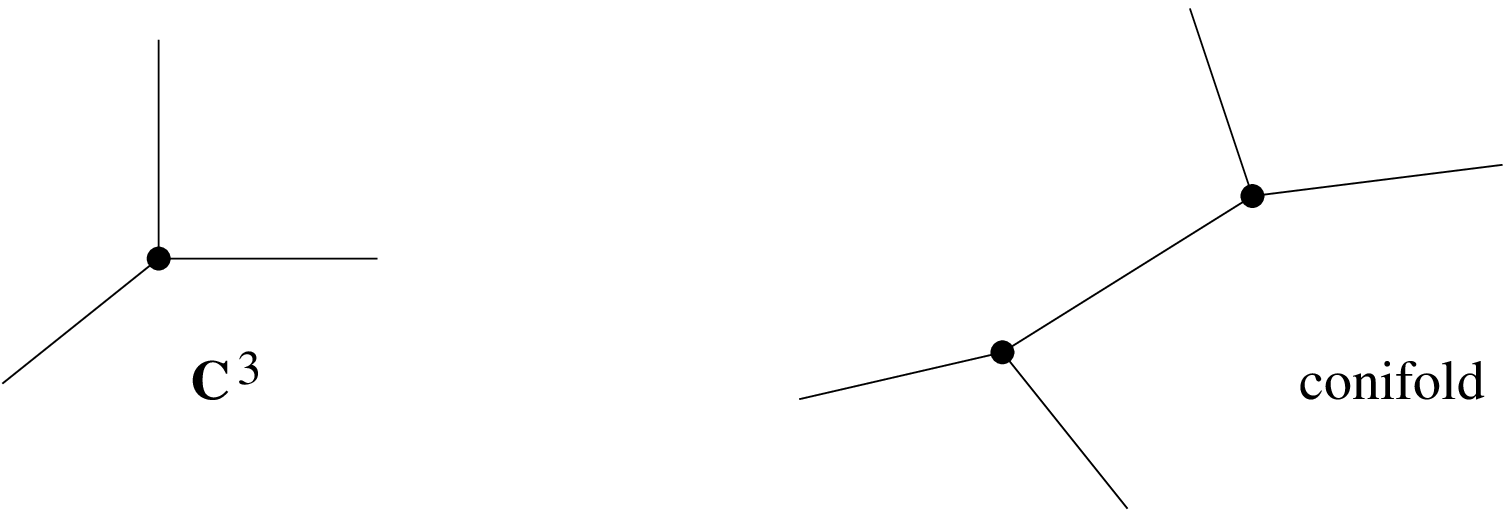}}
$$
with a single vertex representing $X=\C^3$, whose partition function
was computed above and is the basic building block for the generating
functions on more complicated geometries, and a single line joining
two vertices representing the resolved conifold
$X=\calo_{\bbP^1}(-1)\oplus\calo_{\bbP^1}(-1)$, where the $\PP^1$
contribute to the $F\wedge F\wedge\omega_0$ term of the gauge theory
action (\ref{6dgaugeaction}). The $T^3$-invariant
noncommutative $U(1)$ instantons on $X$ correspond to ideal sheaves $\cale$ and are
described by the following combinatorial data:
\begin{enumerate}
\item[(1)] Three-dimensional partitions $\pi_f$ at each vertex $f$ of
  the toric diagram, corresponding to monomial ideals
  $\cali_f\subset\bbc[w_1,w_2,w_3]$; and
\item[(2)] Two-dimensional partitions $\lambda_e$ at each edge $e$ of
  the toric diagram, representing the four-dimensional instanton
  asymptotics of $\pi_f$.
\end{enumerate}
This description requires generalizing the calculation on $X=\C^3$
above to compute the \emph{perpendicular partition function}
   $P_{\lambda,\mu,\nu}(q)$~\cite{ORV}, which is defined to be the 
   generating function for three-dimensional partitions with fixed asymptotics $\lambda$,
   $\mu$, and $\nu$ in the three coordinate directions. Such partitions correspond to instantons on $\bbc_\theta^3$ with non-trivial
   boundary conditions at infinity along each of the coordinate axes. It can be
   expressed in terms of \emph{skew Schur functions}, with $Z_{\rm
     DT}^{\C^3}(q)=P_{\emptyset,\emptyset,\emptyset}(q)$. 

For the example of the resolved conifold, using the gluing rules one easily computes
\begin{eqnarray}
{Z}_{\rm DT}^{\rm conifold}(q,Q) &=& \sum_{\pi_f}\,
q^{|\pi_f|+\sum_{(i,j)\in\lambda} \,(i+j+1)}~ (-1)^{|\lambda|}
~Q^{|\lambda|} \nonumber \\[4pt] &=&
\prod_{n=1}^\infty\,\frac{\big(1-(-1)^{n}\,q^n\,Q\big)^{n}}
{\big(1-(-1)^{n}\,q^n\big)^{2n}}\= M(-q)^2~\prod_{n=1}^\infty\,\big(1-(-q)^n\,Q\big)^{n} \ .
\label{conpart}\end{eqnarray}
More generally, with these rules one finds that the instanton partition function is the generating function
$$
Z_{\rm DT}^X(q,\mbf Q)\=\sum_{n\in\Z}~\sum_{\beta\in H_2(X,\Z)}\,I_{n,\beta}(X)~q^n\,\mbf Q^\beta
$$
for the Donaldson--Thomas invariants $I_{n,\beta}(X)\in\Z$, which are
defined as follows. The moduli variety $\Hilb_{n,\beta}(X)$ of ideal
sheaves on $X$ is a projective scheme with a perfect obstruction
theory. For general threefolds $X$, it has virtual dimension~\cite[Lem.~1]{MNOPII}
$$
\int_\beta\,c_1(X)
$$
which coincides with that of $\calm_g(X,\beta)$ from
Section~\ref{GWtheory} In the Calabi--Yau case, the virtual dimension
is zero, and the corresponding virtual cycle is
$$
\big[\Hilb_{n,\beta}(X)\big]^{\rm vir}~\in~CH_0\big(\Hilb_{n,\beta}(X)\big) \ .
$$
Then the Donaldson--Thomas invariants
$$
I_{n,\beta}(X)~:=~\int_{[\Hilb_{n,\beta}(X)]^{\rm vir}}\,1
$$
count the virtual numbers of ideal sheaves on $X$ with the given Chern
character. The right-hand side is defined via equivariant integration,
as explained in Section~\ref{Eqint} The torus action on $X$ lifts to
the moduli scheme $\Hilb_{n,\beta}(X)$. The $\tilde T$-fixed locus
$\Hilb_{n,\beta}(X)^{\tilde T}$ has a $\tilde T$-equivariant virtual
theory with cycle $[\Hilb_{n,\beta}(X)^{\tilde T}]^{\rm vir}\in
CH_0(\Hilb_{n,\beta}(X)^{\tilde T})$ and virtual normal bundle
$\caln_X^{\rm vir}$ in the equivariant K-theory $K_{\tilde
  T}^0(\Hilb_{n,\beta}(X)^{\tilde T})$. This construction gives
precise meaning to the moduli space integral (\ref{ZeNX}) via
application of the virtual localization formula in equivariant Chow
theory, described in Section~\ref{Eqint}

\subsection{Wall-crossing formulas\label{Wallcrossing}}

We will now make contact with Section~\ref{topstrings}. For the present
class of threefolds $X$, there is a gauge theory/string theory
duality~\cite{MNOP}. This follows from the fact that the perpendicular partition
function $P_{\lambda,\mu,\nu}(q)$ is related to the Calabi--Yau crystal formulation of the \emph{topological vertex}
 $$
 C_{\lambda,\mu,\nu}(q)\=M(q)^{-1}~q^{\frac12\,(\|\lambda\|^2+
 \|\mu\|^2+\|\nu\|^2)}~P_{\lambda,\mu,\nu}(q)
 $$
 with $\|\lambda\|^2=\sum_i\,\lambda_i^2$, which are the building
 blocks for the computation of the generating function for
 Gromov--Witten invariants using rules analogous to those described in
 Section~\ref{DTtheory}~\cite{Marino} Using these relations one can show that the
 six-dimensional cohomological gauge theory is S-dual to the A-model topological string theory. The respective partition functions are related by
\beq
{Z_{\rm top}^X(g_s,\mbf Q)\=M(q)^{-\chi(X)}~
 Z_{\rm DT}^X\big(q=-\e^{-g_s}\,,\,\mbf Q\big)} \ ,
\label{DTGWcorr}\eeq
where the Euler characteristic $\chi(X)$ of $X$ is the number of vertices
in its toric diagram. For the conifold example, the gluing rules for
the topological vertex yield~\cite{Marino}
$$
Z_{\rm top}^{\rm
  conifold}(g_s,Q)\=\sum_{\lambda}\,C_{\emptyset,\emptyset,\lambda}(q)
\,
C_{\emptyset,\emptyset,\lambda}(q)~Q^{|\lambda|}\=
\exp\Big(\,\sum_{n=1}^\infty\,\frac{Q^n}n\,
\frac1{\big(q^{n/2}-q^{-n/2}\big)^2} \,\Big) \ ,
$$
which should be compared with (\ref{conpart}).
This Gromov--Witten/Donaldson--Thomas
correspondence is known to hold for arbitrary toric threefolds~\cite{MOOP}. The
relationship (\ref{DTGWcorr}) can be thought of as a wall-crossing
formula, as we now explain.

The relationship (\ref{DTGWcorr}) is in apparent contradiction with
the OSV conjecture (\ref{OSV}), if we wish to interpret the right-hand
side as the generating function $Z_{\rm BH}(1,0,\mbf\phi^2,\phi^0)$ of
a suitable index for black hole microstates. However, the conjectural
relations (\ref{OSV}) and (\ref{DTGWcorr}) hold in different regimes
of validity. The number of BPS particles in four dimensions formed by
wrapping supersymmetric bound states of D-branes around holomorphic
cycles of $X$ depends on the choice of a stability condition, and the
BPS countings for different stability conditions are related by
wall-crossing formulas. For example, stability of black holes requires
that their chemical potentials $\mu^I$ lie in the ranges $\calq_0\,\phi^0>0$ and $\calq_2^i\,\phi_i^2>0$. 

On the other hand, the validity of (\ref{DTGWcorr}) is related to the
existence of BPS invariants $B_{g,\beta}(X)\in\Z$ such that the
topological string amplitudes have an expansion given by~\cite{GopVafa}
$$
Z_{\rm top}^X(g_s,\mbf Q)\=\sum_{g=0}^\infty~\sum_{\beta\in
  H_2(X,\Z)}\,\mbf Q^\beta~ \sum_{\stackrel{\scriptstyle \gamma\in
    H_2(X,\Z)\setminus\{0\}}{\scriptstyle\beta=k\,\gamma}}\,B_{g,\beta}(X)~ \frac1k\, \left(2\sinh\Big(\frac{k\,g_s}2\Big)\right)^{2g-2} \ ,  
$$
of which the conifold partition function (\ref{conpart}) is an
explicit case. These are partition functions of D6--D2--D0 brane bound
states only for certain K\"ahler moduli. Analyses of Calabi--Yau
compactifications of Type~II string theory show that the Hilbert
spaces of BPS states jump discontinuously across real codimension one
walls in the moduli space of vacua, known as walls of marginal
stability. The noncommutative instantons do not account for walls of
marginal stability extending to infinity. One should instead apply
some sort of stability condition (such as $\Pi$-stability) to elements
of the bounded derived category of coherent sheaves
${\sf D}^b(\coh(X))$ of the given
charge, which gives a topological classification of A-model D-branes on $X$. These issues are discussed in more detail in refs.~\cite{JM,CJ,AOVY}.

From a mathematical perspective, we can study this phenomenon by
looking at framed moduli spaces, which consist of instantons that are
trivial ``at infinity''. More precisely, we can consider a toric
compactification of $X$ obtained by adding a compactification divisor
$D_\infty$, and consider sheaves $\calf$ with a fixed trivialization on
$D_\infty$. The K\"ahler polarization defined by
$\omega_0$ allows us to define the moduli space
$\calm_X^{\omega_0}=\calm_X$ of stable sheaves. Then the symbolic
definition of the gauge theory partition function (\ref{ZeNX}) as a
particular Euler characteristic can be made precise in the more local
definition of Donaldson--Thomas invariants given by
ref.~\cite{Behrend}.

As a scheme with a perfect obstruction theory, the instanton moduli
space $\calm_X$ can be viewed locally as the scheme theoretic critical
locus of a holomorphic function, the superpotential $W$, on a compact
manifold $\cal X$ with the action of a gauge group $\cal G$. $\calm_X$
has virtual dimension zero, and at non-singular points the obstruction
sheaf $\caln_X$ on $\calm_X$ coincides with the cotangent
bundle. Hence if $\calm_X$ were everywhere non-singular, then the
partition function (\ref{ZeNX}) would just compute the signed Euler
characteristic $(-1)^{\dim_\C(\calm_X)}\,\chi(\calm_X)$. At singular
  points, however, the invariants differ from these characteristics.

There is a constructible function
$\nu:\calm_X\to\Z$ which can be used to define the \emph{weighted} Euler
characteristic
$$
\chi(\calm_X,\nu)~:=~ \sum_{n\in\Z}\,n\cdot\chi\big(\nu^{-1}(n)\big) \
.
$$
For sheaves of fixed Chern character, this coincides with the
curve-counting invariants $I_{n,\beta}(X)$. At non-singular points,
$\nu=(-1)^{\dim_\C(\calm_X)}$, while at singular points it is the more
complicated function given by
$$
\nu(\cale)\=(-1)^{\dim_\C({\cal X}/{\cal G})}\,\big(1-{\rm
  MF}_W(\cale) \big) \ ,
$$
where ${\rm MF}_W(\cale)$ is the Milnor fibre of the superpotential
$W$ at the point corresponding to $\cale$. The weighted Euler
characteristic is a deformation invariant of $X$.

In this approach, one
can use topological Euler characteristics to define $I_{n,\beta}(X)$
as invariants associated to moduli varieties of framed
sheaves. Fixing $\beta\in H_2(X,\Z)$ and $n\in\Z$, the variety
$\Hilb_{n,\beta}(X)$ equivalently parametrizes isomorphism classes of the following objects:
\begin{itemize}
\item[(a)] Surjections (framings) 
$$
\xymatrix{
\calo_X~\ar[r] & ~ \calf ~\ar[r] & ~0
}
$$
with $\ch(\calf)=(1,0,\beta,n)$;
\item[(b)] Stable sheaves $\cale$ with $\ch(\cale)=(1,0,-\beta,-n)$ and
  trivial determinant; and
\item[(c)] Subschemes $S\subset X$ of dimension $\leq1$ with curve
  class $[S]=\beta$ and holomorphic Euler characteristic
  $\chi(\calo_S)=n$.
\end{itemize}
The equivalences between these three descriptions is described
explicitly for $X=\C^3$ in ref.~\cite{CSS}. 

As we vary the polarization $\omega_0$, the moduli spaces
$\calm_X^{\omega_0}(X)$ change and so do the associated counting
invariants, leading to a wall-and-chamber structure. The wall-crossing behaviour of the enumerative invariants
is studied in~refs.~\cite{Joyce,KontSoib}. The analog of varying $\omega_0$
for framed sheaves is to consider quotients of the structure sheaf
$\calo_X$ in different abelian subcategories of the bounded derived
category ${\sf D}^b(\coh(X))$ of coherent sheaves on $X$. The analog of wall-crossing
gives the Pandharipande--Thomas theory of stable pairs~\cite{PT} and
the BPS invariants above. For this, the quotients of $\Ocal_X$ are the stable pairs $(\cale,\alpha)$, where $\cale$ is a coherent $\Ocal_X$-module of pure dimension one with $\ch_2(\cale)=-\beta$ and $\chi(\cale)=-n$, and $\alpha:\Ocal_X\to\cale$ is a non-zero sheaf map such that ${\rm coker}(\alpha)$ is of pure dimension zero, together with le~Poitier's $\delta$-stability condition for coherent systems. In this case the change of Donaldson--Thomas
invariants is described by the Kontsevich--Soibelman wall-crossing formula~\cite{KontSoib}.

To cast these constructions into the language of
noncommutative instantons, a proper definition of noncommutative toric
manifolds is desired, beyond the heuristic approach presented above
whereby only open $\C^3$ patches are deformed. Isospectral type
deformations of toric geometry, and instantons therein, are
investigated in ref.~\cite{CLS}. It may also aid in the classification of $U(N)$ noncommutative instantons on $\C^3$ for rank
$N>1$, along the lines of what was done in
Section~\ref{InstmodspDT} (See ref.~\cite{CSS} for some explicit examples.) This appears to be related to the problem of defining
a non-abelian version of Donaldson--Thomas theory which counts
higher-rank torsion-free sheaves, for which no general, appropriate notion of
stability is yet known.

\bigskip

\section{D4-brane gauge theory and Euler characteristics \label{D4GT}}

\noindent
In this section we will take $\calq_6=0$ (no D6-branes), and consider $N$ D4-branes
  wrapping a four-cycle $C\subset X$. In this case the worldvolume gauge theory
  on the D4-branes is the $\mathcal{N}=4$
  Vafa--Witten topologically twisted $U(N)$ Yang--Mills theory on $C$,
  where the topological twist is generically required in order to
  realize covariantly constant spinors on a curved geometry. When the gauge theory is formulated on an
  arbitrary toric singularity $C$ in four dimensions, we may regard
  $C$ as a four-cycle inside the Calabi--Yau threefold $X=K_C$, and we will obtain
  an explicit description of the instanton moduli spaces and their
  Euler characteristics. The precise forms of the partition functions will be
  amenable to checks of the OSV conjecture (\ref{OSV}), and
  hence a description of wall-crossing phenomena.

\subsection{$\mathcal{N}=4$ supersymmetric Yang--Mills theory on K\"ahler
  surfaces}

Vafa and Witten~\cite{VW} introduced a topologically
twisted version of $\cN=4$ supersymmetric Yang--Mills theory in four dimensions. The twisting
procedure modifies the quantum numbers of the fields in the
physical theory in such a way that a particular linear combination
of the supercharges becomes a scalar. This scalar supercharge is
used to define the cohomological field theory and its observables on an
arbitrary four-manifold $C$. In the following we will only
consider the case where $C$ is a connected smooth K\"ahler manifold with K\"ahler
two-form $k_0$. When certain conditions are met the partition function
of the twisted gauge theory computes the Euler characteristic of the
instanton moduli space.

Let $g_{ij}$ be the K\"ahler metric of $(C,k_0)$. Then 
the twisted gauge theory corresponds to the moduli problem associated
with the equations
\begin{eqnarray}
\sigma_{ij} &:=& F_{ij}^+ + \mbox{$\frac{1}{4}$}\, \big[ B^+_{ik}
\, , \, B_{jl}^+ \big] \, g^{kl} + \mbox{$\frac{1}{2}$}\,
\big[ \Phi \, , \, B^+_{ij} \big] \= 0 \ , \nonumber \\[4pt] \kappa_i &:=& \dd_A^{j}
B^+_{ij} + (\dd_A)_i \Phi \= 0 \ ,
\label{VWeqs}\end{eqnarray}
where $F^+=\frac12\,(F-\star F)$ is the self-dual part of the curvature two-form with respect to the K\"ahler metric. The field space $\mathfrak{W}$ is spanned by a connection $A_i$ on a
principal $G$-bundle $\calp \rightarrow C$, a scalar field $\Phi$, and
a self-dual two-form $B^+_{ij}$, so that
\begin{equation} \label{N4fieldspace}
\mathfrak{W} \= \mathfrak{A}_\calp \times \Omega^0( C , \mathrm{ad}\, \calp) \times
\Omega^{2,+}( C , \mathrm{ad}\, \calp)
\end{equation}
where $\mathfrak{A}_\calp$ denotes the space of connections on $\calp$ and
$\mathrm{ad}\,\calp$ is the adjoint bundle of $\calp$. Their superpartners
$\psi_i$, $\zeta$, and $\tilde{\psi}^+_{ij}$ live in the
tangent space to $\mathfrak{W}$. Associated with the equations of motion are two multiplets $( \chi_{ij}^+, H^+_{ij})$ and $( \tilde\chi_i, \tilde{H}_i)$
which are sections of the bundle $$\mathfrak{F} \=  \Omega^{2,+}(C ,
\mathrm{ad}\,\calp) \oplus  \Omega^1(C, \mathrm{ad}\,\calp) \ . $$ Schematically, the action of
the topological gauge theory is of the form
$$
S \= \{ Q , \Psi \} + \int_C\, \Tr (F \wedge F) + \int_C\, \Tr (F \wedge
 k_0)
$$
where $Q$ is the scalar supercharge singled out by the twisting
procedure. The gauge fermion $\Psi$ is a suitable functional of the
fields which
contains the term
$$
\int_C\, \sqrt{g}~\Tr \Big(\chi_{ij}^+\, \left( H^{+ \,ij} +
\sigma^{ij} \right) + \tilde{\chi}_i\,\big( \tilde{H}^i +
\kappa^i\big) \Big) \ ,
$$
that makes the gauge theory localize onto the solutions of the equations
(\ref{VWeqs}).

Geometrically, the partition function can be interpreted as a
Mathai--Quillen representative of the Thom class of the bundle
$\mathfrak{V} = \mathfrak{W} \times_{\mathcal{G}} \mathfrak{F}$, where
$\mathcal{G}={\rm Aut}(\calp)$ is the group of gauge transformations. Its
pullback via the sections in (\ref{VWeqs}) gives the Euler class of
$\mathfrak{V}$. Under
favourable circumstances, appropriate vanishing theorems hold~\cite{VW} which
ensure that each solution of the system (\ref{VWeqs}) has
$\Phi=B^+=0$ and corresponds to an instanton, i.e. a solution to the self-duality equations $F^+=0$. In this case the gauge
theory localizes onto the instanton moduli space $\mathfrak{M}_C$ and
the Boltzmann weight gives a representative of the
Euler class of the tangent bundle $T \mathfrak{M}_C$. Therefore the
partition function computes moduli space integrals of the form
$$
\int_{\mathfrak{M}_C} \, e( T \mathfrak{M}_C) \= \chi(
\mathfrak{M}_C) \ ,
$$
which gives the Euler characteristic of the instanton moduli space.
Since the instanton moduli space is not generally a smooth variety, most of the quantities introduced above can only be
defined formally. We will discuss how to define these integrations
more precisely later on. In particular, we will allow for non-trivial vacuum
expectation values for the Higgs field $\Phi$, in order to define the
partition function in the $\Omega$-background as before. We will assume that the
vanishing theorems can be extended to this case as well, by replacing the instanton moduli space with
its compactification obtained by adding torsion-free sheaves on $C$ as
before.

The Euler characteristic of instanton moduli space can be computed through the index of the deformation
complex associated with $\mathcal{N}=4$ topological Yang--Mills theory
via the equations (\ref{VWeqs}). It has the form~\cite{LL}
\begin{equation} \label{VWcomplex}
\xymatrix@1{
  \Omega^0(C , \mathrm{ad}\,\calp)
   \quad\ar[r]^{\hspace{-0.5cm} \textsf{D}} &\quad
   {\begin{matrix} \Omega^1(C, \mathrm{ad}\,\calp)
   \\ \oplus \\
   \Omega^0(C, \mathrm{ad}\,\calp) \\ \oplus \\ \Omega^{2,+}
   (C, \mathrm{ad}\,\calp) \end{matrix}}\quad \ar[r]^{ \textsf{s}} & \quad
   {\begin{matrix} \Omega^{2,+}(C, \mathrm{ad}\,\calp) \\ \oplus \\
   \Omega^1(C, \mathrm{ad}\,\calp)
   \end{matrix}}
}
\end{equation}
where the first morphism is an infinitesimal gauge transformation
$$
\textsf{D} (\phi) \= \left( \begin{matrix} \mathrm{d}_A \phi
\\ \left[ \Phi , \phi \right] \\ \left[ B^+ , \phi \right]
\end{matrix} \right) \ ,
$$
while the second morphism corresponds to the linearization of the sections
$(\sigma_{ij} , \kappa_i)$ given by
\begin{eqnarray*}
\textsf{s}\big( \psi \,,\, \zeta \,,\, \tilde{\psi}^+ \big) &=& 
p^+ \mathrm{d}_A \psi - \big[ \tilde{\psi}^+ \,,\, B^+ \big] +
\big[ \tilde{\psi}^+ \,,\, \Phi \big] + \big[ B^+ \,,\, \zeta \big]
\\ && \qquad +\, \mathrm{d}_A \zeta + \left[ \psi , \Phi \right] +
p^+ \mathrm{d}_A^* \tilde{\psi}^+ + \left[ \psi \,,\, B^+ \right]
 \ ,
\end{eqnarray*}
with $p^+$ giving the projection of a two-form onto its self-dual part. Under the assumption that all
solutions of the original system of equations (\ref{VWeqs})
have $\Phi = B^+ = 0$, the complex (\ref{VWcomplex}) splits into the
Atiyah--Hitchin--Singer instanton deformation complex
\beq\label{AHScomplex}
\xymatrix{
  \Omega^0(C, \mathrm{ad}\,\calp)
   \quad \ar[r]^{ \hspace{-0.3cm} \mathrm{d}_A} & \quad
  \Omega^1(C, \mathrm{ad}\,\calp) \quad
  \ar[r]^{p^+\circ \mathrm{d}_A} & \quad
   \Omega^{2,+}(C, \mathrm{ad}\,\calp)
}
\eeq
plus
$$
  \Omega^0(C, \mathrm{ad}\,\calp) ~\oplus~ \Omega^{2,+}(C, \mathrm{ad}\,\calp)
   \quad \xrightarrow{ (\mathrm{d}_A \,,\, p^+\circ \mathrm{d}_A^*)} \quad
  \Omega^1(C, \mathrm{ad}\,\calp)
$$
which is again the instanton deformation complex. One can compute
the index of the original complex (\ref{VWcomplex}) (assuming the Vafa--Witten
vanishing theorems) by computing the index of the two complexes
above. However, these are equal and contribute with opposite signs. This means that the Euler characteristic of instanton moduli space receives contributions only from isolated points and simply counts the number of such points. On a toric surface $C$, this is anticipated from the toric localization formula (\ref{ABloc}) and will be made explicit below.

In the applications to black hole microstate counting, we will consider gauge group $G=U(N)$. The chemical potential $\int_C\,C_{(2)}\wedge\Tr(F)$ for the D2-branes requires taking the Ramond-Ramond field $C_{(2)}$ proportional to the two-forms $k^i$ on $C$ which are dual to the basis two-cycles $S_i$, in order to get the correct charges. In this case the D0-brane charges
$$
\calq_0 \= \frac1{8\pi^2}\,\int_C\,\Tr(F\wedge F)
$$
correspond to the instanton numbers of the gauge bundle $\calp$, while the D2-brane charges
$$
 \calq_2^i \= \frac1{2\pi}\,\int_{S_i}\,\Tr(F)
$$
correspond to non-trivial magnetic fluxes $c_1(\calp)\neq0$ through $S_i$. To compute the macroscopic black hole entropy from the counting of the corresponding BPS states in the gauge theory, we introduce observables associated to these sources and compute their gauge theory expectation values using the localization arguments above to get
\begin{eqnarray*}
Z_{\rm
  BH}(N,\mbf\phi^2,\phi^0) &=&\Big\langle\exp\Big(-\frac{\phi^0}{8\pi^2}\,\int_C\,\Tr(F\wedge
F) -\frac{\phi_i^2}{2\pi}\,\int_C\,k^i\wedge\Tr(F)\Big)\Big
\rangle_{\rm SYM} \\[4pt] &=&
\sum_{\calq_0,\calq_2^i}\,\Omega(\calq_0,\mbf\calq_2;N)~
\e^{-\calq_0\,\phi^0-\calq_2^i\,\phi_i^2} \ ,
\end{eqnarray*}
where $\Omega(\calq_0,\mbf\calq_2;N)$ is the Witten index which computes the Euler
characteristic of the moduli space $\calm_{N,\mbf\calq_2,\calq_0}(C)$ of $U(N)$ instantons on $C$ with Chern
invariants $c_1(\calp)=\mbf\calq_2\in H^2(X,\Z)$ and $-\ch_2(\calp)=\calq_0\,\nu\in H^4(X,\Z)$. Here $\nu$ is the generator of $H^4(X,\Z)\cong\Z$ which is Poincar\'e dual to a point in $X$.

\subsection{Toric localization and the instanton moduli space\label{ADHMC2}}

Instantons on $C=\C^2$ can be described as follows. Consider the
quiver
$$
\xymatrix{
v \ \bullet \ \ar@(ur,ul)|{\, b_1 \,} \ar@(dl,dr)|{\, b_2 \,} \ar@/^/[rr]|{\, j \,} &&
\ \bullet \ w \ar@/^/[ll]|{\, i \,}
}
$$
with the single relation $r$ specified by the linear combination of paths
$$
r\= [b_1,b_2]+i\,j \ .
$$
This is called the ADHM quiver ${\sf Q}_{\rm ADHM}$. The \emph{ADHM
  construction} establishes a one-to-one correspondence between stable
framed representations of the quiver ${\sf Q}_{\rm ADHM}$ in the
category $\Vect_\C$ of finite-dimensional complex vector spaces and framed torsion free sheaves
on the projective plane $\PP^2$. In the rank one case, these are
equivalent to ideal sheaves on $\C^2$, and the correspondence gives
an isomorphism with the Hilbert schemes of points on $\C^2$.

Let $V$ and $W$ be inner product spaces of complex dimensions $k=\calq_0$ and $N$,
respectively. The
instanton moduli space $\calm_{N,k}(\C^2)$ can be realized as a hyperK\"ahler
quotient by the natural action of $U(k)$ on the variety consisting of linear operators
$$
 B_{1},B_{2} ~\in~ \Hom_\C(V,V) \ , \qquad I ~\in~ \Hom_\C(W,V) \ ,
\qquad J ~\in~ \Hom_\C(V,W) 
$$
constrained by the ADHM equations
\bea 
\mu_c &=& [B_1 , B_2] + I \, J \= 0 \ , \nonumber\\[4pt] \mu_r
&=& \big[B_1 \,,\, B_1^{\dagger}\,\big] + \big[B_2 \,,\,
B_2^{\dagger}\, \big] + I\,I^{\dagger} -
J^{\dagger}\, J \= 0 \ . 
\label{adhm}\eea
On $\C^2$ one can obtain a better compactification of this
moduli space by deforming the gauge theory to a noncommutative field
theory, as before~\cite{NS,KS,DN}. This is equivalent to a modification of the hyperK\"ahler
quotient that defines the instanton moduli space, obtained by changing the images of the moment maps of (\ref{adhm})
to 
$$
\mu_c \= 0 \ , \qquad \mu_r \= \zeta~ \Id_{V} \ ,
$$
where $\zeta=\theta_1+\theta_2$. This quotient gives a compactification
of the instanton moduli space ${\mathfrak{M}}_{N,k}(\C^2)$ obtained by blowing up its
singularities.

The classification of toric fixed points is given in ref.~\cite{Nakajima}, by identifying the
instanton moduli space $\calm_{1,k}(\C^2)$ with the Hilbert scheme
of points $(\mathbb{C}^2)^{[k]}$. The fixed points
are point-like instantons which are in one-to-one correspondence
with Young tableaux $\lambda$ having $|\lambda|=k$ boxes. In the more general case
of a $U(N)$ gauge theory in the Coulomb branch, one takes $N$ copies
of the $U(1)$ theory and the fixed points are
classified in terms of $N$-tuples of Young diagrams $\vec{\lambda} =
(\lambda_1 , \dots , \lambda_N)$, called $N$-coloured Young diagrams. One can show that the fixed points
are isolated.

One can describe the local structure of the instanton moduli space by
using the ADHM construction at the fixed points of the $\tilde
T=U(1)^N\times T^2$ action. Following ref.~\cite{Nakajima} one introduces a two-dimensional $T^2$-module $\underline{Q}$ to keep track of the toric
action. The operators $(B_1 ,
B_2 , I , J )$ corresponding to a fixed point configuration are
elements of the $\tilde T$-modules
$$
 (B_{1},B_{2}) ~\in~ \Hom_\C(V,V)\otimes \,\underline{Q} \ , \qquad I ~\in~ \Hom_\C(W,V) \ ,
\qquad J ~\in~ \Hom_\C(V, W)\otimes \mbox{$\bigwedge^2$}\,\underline{Q}
 \ .
$$
Then the local structure of the instanton moduli space
is described by the complex
\begin{equation} \label{adhmdefcomplex}
\xymatrix@1{
  \Hom_\C(V , V)
   \quad\ar[r]^{\!\!\!\!\!\!\!\!\!\!\!\!\!\!\!\!\!\!\sigma} &\quad
   {\begin{matrix} \Hom_\C(V ,V)\otimes \,\underline{Q} 
   \\ \oplus \\
   \Hom_\C(W , V) \\ \oplus  \\ \Hom_\C(V ,
   W) \otimes \bigwedge^2\,
   \underline{Q}  \end{matrix}}\quad \ar[r]^{~~\tau} & \quad
   {\begin{matrix} \Hom_\C(V , V) \otimes \bigwedge^2\,
       \underline{Q}
   \end{matrix}}
}
\end{equation}
which is just a finite-dimensional version of the
Atiyah--Hitchin--Singer instanton deformation complex (\ref{AHScomplex}). The 
map $\sigma$ corresponds to infinitesimal (complex) gauge transformations
while $\tau$ is the linearization of the ADHM constraint
$\mu_c = 0$. In general, the complex (\ref{adhmdefcomplex}) has three
nonvanishing cohomology groups. In our case we can safely assume that
$H^0$ and $H^2$ vanish. The only nonvanishing cohomology $H^1$ 
describes field configurations that obey the linearized ADHM
constraint $\mu_c = 0$ but are not gauge variations. It is thus a
local model for the tangent space to the instanton moduli space at each $\tilde T$-fixed point. Later on we will
compute weights of the toric action on the tangent space
modelled on (\ref{adhmdefcomplex}).

The partition function of the $U(1)$ topologically twisted gauge theory on
$X=\C^2$ is easily computed. The only non-trivial
topological charge is the instanton number $k=-\int_X\, F \wedge F $ and therefore the partition function has the form
$$
Z_{U(1)}^{\C^2}(q) \= \sum_{k=0}^\infty\, q^k~
\chi\big(\calm_{1,k}(\C^2)\big) \ .
$$
The expansion parameter can be identified in terms of gauge theory
variables $q:=\e^{2\pi\ii\tau}$ with
$$
\tau \=\frac{4\pi\ii}{g_{\rm YM}^2}+\frac\vartheta{2\pi}
$$
the complexified gauge coupling, which is related to topological
string variables $g_s=g_{\rm YM}^2/2$ at the attractor point.
At a toric fixed point, $k$ is identified as the number of boxes in a
partition $\lambda$. The Euler classes exactly cancel in the localization
formula (\ref{ABloc}), and one is left with the sum over fixed
points~\cite{BFMT}
$$
\chi\big(\calm_{1,k}(\C^2)\big) \=
\sum_{\lambda \,:\, |\lambda|=k}\, 1 \ .
$$
By Euler's formula, one has
$$
{\widehat{\eta}(q)^{-1} ~:=~ \prod_{n=1}^\infty\,\frac1{1-q^n}\=
    \sum_{N=0}^\infty\,p(N)\,q^N}
$$
where $p(N)$ is the number of partitions
$\lambda=(\lambda_1,\lambda_2,\dots)$ (ordinary, two-dimensional Young
tableaux) of degree $|\lambda|=\sum_i\,\lambda_i=N$. The function
$\widehat{\eta}(q)$ is related to the Dedekind function. It follows that the $U(1)$ partition function
$$
Z_{U(1)}^{\C^2}(q)\= \widehat\eta(q)^{-1}
$$
is the generating function for two-dimensional Young diagrams.

This construction can be easily generalized to the nonabelian
case. The fixed points are now $N$-coloured Young tableaux $\vec\lambda = ( \lambda_1 , \dots , \lambda_N) $ corresponding to a
partition of the instanton number $k = (k_1 , \dots , k_N)$. The instanton action is again equal to $k$, where the additional
factor of $N$ arising from the sum over colours $l=1,\dots,N$ cancels with the
normalization of the $F \wedge F$ term which carries a factor
$\frac{1}{N}$ (the inverse of the dual Coxeter number of the gauge group $G=U(N)$). The Euler characteristic of instanton moduli space is now
\begin{equation*}
\chi\big(\calm_{N,k}(\C^2)\big) \=
\sum_{\vec\lambda \, : \, |\vec\lambda|=k}\, 1
\end{equation*}
with $|\vec\lambda|:=\sum_l\,|\lambda_l|$. The $U(N)$ partition function reduces to $N$ copies of the $U(1)$ partition function,
$$
Z_{U(N)}^{\C^2}(q)\=\big(Z_{U(1)}^{\C^2}(q)\big)^N \ .
$$
This factorization follows from the fact that after toric
localization, the gauge symmetry $U(N)\to U(1)^N$ is broken to the
maximal torus. The Coulomb phase corresponds to well separated
D4-branes, but the topological nature of the gauge theory ensures that
the partition function is independent of the Higgs moduli representing
the lengths of open strings stretching between D-branes. In the rest of this section we will extend these constructions to generic toric
surfaces $X$.

\subsection{Hirzebruch--Jung spaces\label{HJspaces}}

Our main example will be the most general toric singularity in four dimensions, which defines a class of toric Calabi--Yau (hence open) surfaces known as Hirzebruch--Jung spaces $C=C(p,n)$. They are determined by two relatively prime positive integers $p$ and $n$ with $p>n$. Consider the quotient singularity $\bbc^2/\Gamma_{(p,n)}$, with the generator of the cyclic group $\Gamma_{(p,n)}\cong\bbz_p$ acting on $(z,w)\in\C^2$ as
$$
\big(z\,,\,w\big)
~\longmapsto~
\big(\e^{2\pi \ii n/p}\,z\,,\,\e^{2\pi \ii /p}\,w\big) \ .
$$
This orbifold has an $A_{p,n}$ singularity at the origin of $\C^2$. Then $C(p,n)$ is defined to be the minimal resolution of the $A_{p,n}$ singularity by a chain of $\ell$ exceptional divisors
  $S_i\cong\bbP^1$ whose intersection numbers are summarized in the intersection matrix
$$
\mbf C\=\begin{pmatrix} - e_1 & 1 & 0 & \cdots &0\\
1 & -e_2  & 1& \cdots &0\\
0 & 1 & - e_3 &\cdots&0\\
\vdots &\vdots & \vdots &\ddots&\vdots\\0&0&0&\cdots&-e_\ell
\end{pmatrix} \ ,
$$
which is called a \emph{generalized Cartan matrix}. The divisors thus
only intersect transversally with their nearest neighbours in the
chain. The self-intersection numbers $e_i\geq2$ of the spheres $\PP^1$ of the blow-up are determined from the continued fraction expansion
$$
\frac pn\=
e_1-{1\over\displaystyle e_{2}- {\strut
1\over \displaystyle e_{3}- {\strut 1\over\displaystyle\ddots {}~
e_{\ell-1}-{\strut 1\over e_\ell}}}} \ .
$$
Let us consider two particular well-known instances of these spaces.

\subsubsection*{Local $\PP^1$}

Setting $n=1$, the space $C(p,1)$ can be identified with the total space of the holomorphic line bundle $\calo_{\bbP^1}(-p)$ over $\bbP^1$ of degree $-p$, with
  $\ell\=1$ and $e_1\=p$. In this case $S=\bbP^1$ is the zero section divisor. In the context of topological string theory, such four-cycles appear in the  
``local'' Calabi--Yau threefolds
$X$ which are regarded as neighbourhoods of a holomorphically embedded
rational curve in a compact Calabi--Yau threefold, i.e. as the normal
bundle $\mathcal{N}\rightarrow\bbP^1$. Since $\cN$ is a holomorphic
vector bundle of rank two over $\PP^1$ and the Calabi--Yau condition implies
$c_1(\mathcal{N})=-\chi(\PP^1)=-2$, it follows that $X$ is the total space of a bundle of the form
$\calo_{\bbP^1}(-p)\oplus \calo_{\bbP^1}(p-2)\rightarrow\bbP^1$.

\subsubsection*{$A_{p-1}$ ALE space}

The complex surface $C=C(p,p-1)$ is an example of an asymptotically
locally Euclidean (ALE) space. This means that $C$ carries a scalar
flat K\"ahler metric $g$ such that $(C,g)$ is complete, and there
exists a compact set $K$ such that $C\setminus
K\cong(\R^4\setminus\overline{B_R}\,)/\Z_p$. Here $\Z_p\subset O(4)$
acts freely on $\R^4\setminus\overline{B_R}$ and the metric $g$
approximates the flat Euclidean metric on $\R^4$. Such a coordinate
system is called a coordinate system at infinity. We regard
$\Z_p\subset U(2)$ acting on $(z,w)\in\C^2\cong\R^4$ as described
above (with $n=p-1$), and the complex structure $I$ on $C$
approximates that on $\C^2=\R^4$. As the resolution of the Klein
singularity $\C^2/\Z_p$, $C(p,p-1)$
contains a chain of $\ell=p-1$ projective lines $\bbP^1$, each with
self-intersection number $e_i=2$. In this case, the intersection
matrix $\mbf C$ coincides with the 
Cartan matrix of the $A_{p-1}$ Dynkin diagram.

\subsection{Instantons on ALE spaces}

We begin by describing in some detail the instanton
moduli space in the case of the $A_{p-1}$ ALE spaces, for which a
rigorous construction is known. 
$U(N)$ instantons on ALE spaces are given by the ADHM
construction. Since the topological gauge theory is invariant under
blow-ups of the surface (using blow-up formulas), one can do the
instanton computation on the orbifold $\C^2/\Gamma$ where
$\Gamma=\Gamma_{(p,p-1)}\cong\Z_p$. This result is at the heart of the
McKay correspondence which provides a one-to-one correspondence
between irreducible representations of the orbifold group $\Gamma$ and
tautological bundles over the exceptional divisors of the minimal
resolution $C=C(p,p-1)$.

Since $C$ is non-compact, the instanton moduli space $\calm_C$ must be
defined with respect to connections which have appropriate asymptotic
decay at infinity. We will describe this in more generality later on
in terms of framed moduli spaces of torsion free sheaves. These connections correspond to
instantons of finite energy, and are asymptotic to flat connections
with $F=0$. In particular, there are solutions
which have fractional first Chern class and are related to instantons
that asymptote to flat connections with non-trivial holonomy at the
boundary of $C$, which is topologically the Lens space $L(p,p-1)= S^3/\Gamma$. The flat connections are classified by homomorphisms
  $\rho:\pi_1(C)\rightarrow U(N)$, where
  $\pi_1(C)=\Gamma\cong\bbz_p$. The asymptotic connection at infinity
  is thus labelled by irreducible representations $(k_0,k_1,\dots,k_{p-1})$ of
  the orbifold group $\bbz_p$, with $\sum_i\,k_i=N$, and are given
  explicitly by $$\rho_k\big(\e^{2\pi\ii
    /p}\big)\= \e^{2\pi\ii k/p}$$
where $k=0,1,\dots,p-1$.

Starting from the ADHM construction on $\C^2$ outlined
in Section~\ref{ADHMC2}, one constructs its $\Gamma$-invariant
decomposition. Consider the universal scheme $\cZ \subset C \times
\C^2$ given by the correspondence diagram
$$
\xymatrix@=15mm{
  & \cZ \ar[ld]_{q_1}\ar[rd]^{q_2}& \\
  C & & \C^2 \ .
}
$$
The tautological bundle on $C$ is defined by
$$
\cR \ := \ q_{1 *} \cO_{\cZ} \ .
$$
Under the action of $\Gamma$ on $\cZ$, $\cR$ transforms in the
regular representation and can thus be decomposed into
irreducible representations
$$
\cR \= \bigoplus_{k=0}^{p-1}\, \cR_k \otimes \rho_k \ , \qquad \cR_k\=
\Hom_\Gamma(\cR,\rho_k) \ .
$$
By the McKay correspondence, the bundles
$\cR_0=\cO_C,\cR_1,\dots,\cR_{p-1}$ form the canonical integral basis
of the K-theory group $K^0(C)$ constructed in ref.~\cite{G-SV}.

In this case we take $\underline{Q}\cong\C^2$ to be a module on which
the regular representation of $\Gamma$ acts. We also take $\Gamma
\subset SU(2) $ so that the determinant
representation is trivial as a $\Gamma$-module, i.e. $\bigwedge^2\,
\underline{Q} \otimes_\Gamma \cR \cong \cR$. The two vector spaces $V$
and $W$ which feature in the ADHM construction have a natural grading
under the action of the orbifold group $\Gamma$ given by
$$
V\= \bigoplus_{k=0}^{p-1}\,V_k\otimes\rho_k \ , \qquad
W\= \bigoplus_{k=0}^{p-1}\,W_k\otimes\rho_k \ .
$$
The modification of (\ref{adhmdefcomplex}) is given
by~\cite{KN}
\begin{equation} \label{compcomplex}
\xymatrix@1{
  \Hom_{\Gamma} (\mathcal{R}^* , V)
   \quad\ar[r]^{\!\!\!\!\!\!\!\!\!\!\!\!\!\!\sigma} &\quad
   {\begin{matrix} \Hom_{\Gamma} (\mathcal{R}^* , \,\underline{Q} \, \otimes_\Gamma V)
   \\ \oplus \\
   \Hom_{\Gamma} (\mathcal{R}^*, W) \end{matrix}} \quad \ar[r]^{~~~~~\tau} & \quad
   {\begin{matrix} \Hom_{\Gamma} (\mathcal{R}^* , V ) \ ,
   \end{matrix}}
}
\end{equation}
and the condition that the sequence (\ref{compcomplex}) is a complex is equivalent
to the (generalized) ADHM equations. After imposing a certain
stability condition, this
construction realizes the instanton moduli space as a quiver
variety $\calm(V,W)$.

This construction identifies two distinct types of instanton
contributions to the ALE partition function, which we consider in turn. As before, after
  toric localization the gauge symmetry breaks as $U(N)\rightarrow
  U(1)^N$ and the $U(N)$ partition function factorizes as
$$
Z^{\rm ALE}_{U(N)}(q,\mbf Q)\=\big(Z^{\rm ALE}_{U(1)}(q,\mbf Q)\big)^N
\ .
$$
It therefore suffices to focus on the $U(1)$ case in the following.

\subsubsection*{Regular instantons}

{\it Regular instantons} on $A_{p-1}$ live in the regular representation
  $k_0=k_1=\dots=k_{p-1}=k$ of the orbifold group
  $\Gamma=\bbz_p$. They correspond to D0-branes moving freely on $C$ with $p$ orbifold
  images away from the orbifold point. For gauge group $U(1)$, the
  moduli space is given by specifying $K=k\,p$ points on 
  $C$ up to permutations. Hence the moduli space $
\calm_{\rm reg}^{U(1)}(C)$ of regular $U(1)$ instantons on $C$ is
isomorphic to the Hilbert scheme $C^{[K]}$. The generating function for the Euler numbers of the
instanton moduli spaces can then be computed explicitly by applying
G\"ottsche's formula to get
$$
{ Z_{\rm reg}^{U(1)}(q)\=\sum_{K=0}^\infty\,q^K\,\chi\big(C^{[K]}\big)\=
\widehat{\eta}(q)^{-p} } \ .
$$
The $U(N)$ partition function is the $N$-th power of this
quantity. {Heuristically, we may think of this formula as originating
  by covering $C$ with $p=\chi(C)$ open charts to get
  $p$ copies of $U(N)$ instantons on $\C^2$, each contributing
  $\widehat{\eta}(q)^{-N}$. This can be demonstrated rigorously on any toric surface $C$ by a localization computation~\cite{ES,CK-PS}.

\subsubsection*{Fractional instantons}

To each irreducible representation $(k_0,k_1,\dots,k_{p-1})$ of the
orbifold group $\bbz_p$ there corresponds
  a {\it fractional instanton} which is stuck at the orbifold
  points. It has {\it no} moduli (or orbifold images), and can be
  regarded as a state in which open strings ending on the same
  D0-brane are projected out by the action of the orbifold group. They
  carry magnetic fluxes through the $\bbP^1$'s of the minimal
  resolution, and correspond to self-dual
  $U(1)$ gauge connections with curvatures
$$
F\=-2\pi\ii u_i\,c_1(\cR_i) \ ,
$$
where $u_i\in\Z$ and $\cR_i=\Ocal_{\bbP^1}(e_i)$ are the tautological
line bundles. The Chern classes $c_1(\cR_i)$, $i=1,\dots,p-1$ form a
basis of $H^2(C,\bbz)$. Fractional instantons can thus be thought of as
Dirac monopoles on the two-spheres of the orbifold resolution.

The corresponding intersection numbers are given by
\beq
\int_C\,c_1(\cR_i)\wedge c_1(\cR_j)\=-\big(\mbf C^{-1}\big)_{ij} \
, \qquad \int_{S_i}\,c_1(\cR_j)\=\delta_{ij} \ .
\label{tautintnum}\eeq
Since $C$ is non-compact, the intersection matrix $\mbf C$ is not
necessarily unimodular, and the corresponding instanton charges
can be fractional. The contribution of fractional instantons to the
supersymmetric Yang--Mills action with observables is thus given by
\begin{eqnarray*}
S_{\rm frac} &=& -\frac{\ii\tau}{4\pi}\,\int_C\,F\wedge F-
\frac{\ii\phi_j^2}{2\pi}\,\int_C\,F\wedge c_1(\cR_j) \nonumber\\[4pt]
&=& -\pi\ii\tau\,\big(\mbf C^{-1}\big)^{ij}\,u_i\,u_j+z_i\,u_i 
\end{eqnarray*}
with $z_i=(\mbf C^{-1})_{ij}\,\phi_j^2$. Setting $\mbf
u:=(u_1,\dots,u_{p-1})$ and identifying $Q_i=\e^{-z_i}$ using the
attractor mechanism, we find that the contribution of $U(1)$
fractional instantons to the full partition function is given by a theta-function 
$$
Z_{\rm frac}^{U(1)}(q,\mbf Q)\=\sum_{\mbf u\in\bbz^{p-1}}\,
q^{\frac12\,\mbf u\cdot\mbf C^{-1}\mbf u}~\mbf Q^{\mbf u}
$$
on a Riemann surface of genus $g=p-1$ and period matrix $\tau\,\mbf C^{-1}$.

\subsection{Instantons on local $\PP^1$}

Let us now discuss what is known beyond the ALE case, in the instance that
$C$ is the total space of the holomorphic line bundle
$\Ocal_{\PP^1}(-p)$~\cite{Nakajima3}. In this case an ADHM construction is not
available. Nevertheless, much of the construction in the ALE case
carries through, and by introducing weighted Sobolev norms, one shows
that $\calm_C$ is a smooth K\"ahler manifold with torsion-free
homology groups which vanish in odd degrees. Let us consider some
explicit examples.

The zero section of $\Ocal_{\PP^1}(-p)$, considered as a divisor
$S=\PP^1$ of $C$, produces a line bundle $\Lcal\to C$ such that
$c_1(\Lcal)$ is the generator of $H^2(C,\Z)=\Z$. It has a
unique self-dual connection asymptotic to the trivial connection. Let
$\underline{\C}=C\times\C$ be the trivial line bundle over $C$. Set
$\cale=\underline{\C}\oplus\Lcal$, and let $\calm(\cale)$ be the
moduli space of self-dual connections on $\cale$ which are asymptotic
to the trivial connection at infinity. Then
$\dim_\R(\calm(\cale))=2p$. Since $H_1(C,\R)=0$, using Morse theory
one can show that the only non-vanishing homology groups of the
instanton moduli space are
$H_0(\calm(\cale),\R)=H_2(\calm(\cale),\R)=\R$.

Alternatively, set
$\cale=\Lcal\oplus\Lcal^\vee$, and let $\calm_{(k)}(\cale)$,
$k=0,1,\dots,p-1$ be the moduli space of self-dual connections
asymptotic to $\rho_k\oplus\rho_k^*$. Then for any $p>2$,
$\dim_\R(\calm_{(k)}(\cale))=2$, while for $p=2$ (whereby
$\Ocal_{\PP^1}(-2)$ coincides with the $A_1$ ALE space) one has
$\dim_\R(\calm_{(k)}(\cale))=4$. In particular, for $p=2$ there is a
diffeomorphism $\calm_{(k)}(\cale)\cong T^*\PP^1$ with
$H_0(\calm_{(k)}(\cale),\R)=H_2(\calm_{(k)}(\cale),\R)=\R$, while for
$p=4$ one has $\calm_{(k)}(\cale)\cong B^2$. Instantons on local $\PP^1$ will be studied in more generality later on in terms of moduli spaces of framed torsion-free sheaves.

\subsection{Wall-crossing formulas\label{D4wallcrossing}}

In ref.~\cite{FMP} it was suggested that the structure of
the instanton partition function on ALE spaces can be extrapolated to
give a general result valid for all Hirzebruch--Jung surfaces
$C=C(p,n)$. Thus one postulates the form of the $U(N)$ partition
function
\begin{eqnarray}
Z_{U(N)}^C(q,\mbf Q)&=&\big(Z_{\rm reg}^{U(1)}(q)\,Z_{\rm
  frac}^{U(1)}(q,\mbf Q)\big)^N \nonumber \\[4pt]
&=& \frac1{\widehat{\eta}(\tau)^{N\,\chi(C)}}~
\sum_{\vec{\mbf  u}\in\bbz^{N\,b_2(C)}}\,
q^{\frac12\,\vec{\mbf u}\cdot\mbf C^{-1}\vec{\mbf u}}~
\mbf Q^{\mbf u}
\label{HJpartfn}\end{eqnarray}
where
\beq
\mbf u \ := \ \sum_{l=1}^N\,\mbf u_l \ . 
\label{uidef}\eeq
In the ALE case, one has $\chi(A_{p-1})=p$ and $b_2(A_{p-1})=p-1$. The
evidence for this formula is supported by calculations in the reduction to $q$-deformed
two-dimensional Yang--Mills theory~\cite{GSST,BT}, which captures the
contributions from fractional instantons on the exceptional divisors. It was even
conjectured to hold in the case when $C$ is a \emph{compact} toric surface, at least in
the region of moduli space where the instanton charges are
large. Though the regular and fractional instantons are always readily
constructed exactly as in the ALE case, the issue is whether or not this
formula takes into account \emph{all} of the instanton
contributions, and in which regions of moduli space the factorization
into $U(1)$ partition functions hold. Later on we will give more precise meanings to these
``asymptotic charge regions'' in terms of moduli
spaces of torsion free sheaves.

Curiously, the formula (\ref{HJpartfn}) coincides with the chiral torus partition function of a conformal field theory in two dimensions with central charge
$c=N\,\chi(C)$, i.e. of $N\,\chi(C)$ free bosons, with $N\,b_2(C)$ of
them compact. The compact degrees of freedom live in a torus
determined by the lattice $H^2(C,\Z)$ with the bilinear form $\mbf
C$. The appearence of this two-dimensional field theory can be understood
in M-theory, wherein the D4--D2--D0 brane quantum mechanics lifts to a
$(4,0)$ two-dimensional superconformal field theory on an M5-brane
worldvolume~\cite{MSW}. This is reminescent of the recent conjectural relations
between four-dimensional superconformal gauge theories and
two-dimensional Liouville conformal field
theories~\cite{AGT,Wyllard}. When $C$ is an $A_{p-1}$ ALE space and
the gauge group is $SU(N)$, there is yet another conformal field
theoretic interpretation~\cite{Nakajima3,VW,DHSV,DS}. In this case, the level $N$ affine
$\widehat{\mathfrak{su}(p)}$ Lie algebra acts on the cohomology ring of the
instanton moduli space, and the partition function (with appropriate
local curvature and signature corrections inserted) coincides with the
character of the affine Kac--Moody algebra given by
$$
Z_{SU(N)}^{\rm ALE}(q,\mbf Q) \= \sum_{n=0}^\infty~\sum_{\mbf
  u\in\Z^{p-1}}\, \Omega(n,\mbf u)~ q^{n-c/24}~\mbf Q^{\mbf u} \=
\Tr_\hil\big(q^{L_0-c/24}~\mbf Q^{\mbf J_0}\big) \ ,
$$
where $\mbf u=c_1(\calp)$, $n=c_2(\calp)=k+\frac12\,\mbf u\cdot\mbf
C\mbf u$, and $\Omega(n,\mbf u)$ are the degeneracies of
BPS states with the specified quantum numbers. Here $\hil$ is the
Hilbert space on which the chiral algebra acts, which can be
represented in terms of free fermion or boson conformal field theories
with extended symmetry generators $\mbf J_0$~\cite{DHSV}. The $U(1)$
partition function is also expressed as a
$\widehat{\mathfrak{u}(1)_1}$ character in refs.~\cite{DHSV,DS}.

Let us now compare the instanton partition function (\ref{HJpartfn}) with the
black hole partition function. Microscopic black hole entropy formulas for BPS bound
states of D0--D2--D4 branes in Type~IIA
supergravity are readily available on \emph{compact} Calabi--Yau
threefolds $X$~\cite{MSW}, in the large volume limit and when
contributions from worldsheet instantons are negligible. In this limit
we can expand the cycle $[C]$ as in (\ref{amplediv}). Let $\alpha_i$,
$i=1,\dots,b_2(X)$ be an integral basis of two-cocycles for $H^2(X,\Z)$
dual to the four-cycles $[C_i]$ with the intersection numbers
$$
D_{ijk}\=\frac16\,\int_X\,\alpha_i\wedge \alpha_j\wedge \alpha_k \ , 
\qquad c_{2,i}\=\int_X\,\alpha_i\wedge c_2(X) \ ,
$$
and let $D^{ij}$ be the matrix inverse of
$D_{ij}=D_{ijk}\,\calq_4^k$. The genus zero topological string
amplitude $F_0$ can be expressed as
$$
F_0\= D_{ijk}\,\frac{X^i\,X^j\,X^k}{X^0} \ .
$$
With $\calq_0=\frac1{8\pi^2}\,\int_C\,F\wedge F$ and
  $\calq_i=\calq_2^i=\big(\mbf C^{-1}\big)^{ij}\,u_j$, the black hole
  entropy is given by~\cite{MSW}
\beq
S_{\rm BH}(\calq_0,\mbf\calq,\mbf \calq_4) \= 2\pi\,\sqrt{\mbox{$\big(D_{ijk}\,\calq_4^i\,\calq_4^j\, \calq_4^k+
\frac16\,c_{2,i}\,\calq_4^i\big)\,\big(\calq_0+\frac1{12}\,D^{ij}\,
\calq_i\,\calq_j\big)$}} \ .
\label{SBHMSW}\eeq
We would now like to interpret the gauge theory partition function
(\ref{HJpartfn}) as the corresponding black hole partition function $Z_{\rm
  BH}(\mbf\calq_4,\mbf\phi^2,\phi^0)$. For this, we expand it as
$$
Z_{U(N)}^C(q,\mbf Q)\= \sum_{\calq_0,\calq_i}\,\Omega(\calq_0,\mbf\calq,\mbf \calq_4)~
\e^{-\calq_0\,\phi^0-\mbf\calq\cdot\mbf\phi^2} \ .
$$
Then Cardy's formula gives the black hole entropy as
$$
S_{\rm BH}(\calq_0,\mbf\calq,\mbf \calq_4) \= \log\Omega(\calq_0,\mbf\calq,\mbf \calq_4) \ .
$$
This expression agrees~\cite{FMP} with the macroscopic
supergravity result (\ref{SBHMSW}) for the
  Bekenstein--Hawking--Wald entropy in the large $\calq_0$ limit. Wall-crossing issues in a similar context are discussed in more detail in refs.~\cite{DenefMoore,DM,dBDE-SMVdeB,AM,GMN}.

In parallel to what we did in Section~\ref{Wallcrossing}, in order to
explore the pertinent wall-crossing formulas in this instance we will now
consider moduli spaces which parametrize isomorphism classes of the
following objects:
\begin{itemize}
\item[(a)] Surjections (framings) $\Ocal_C\to\calf\to0$ of torsion
  free sheaves with $\ch(\calf)=(N,d,-k)$;
\item[(b)] Stable torsion free sheaves $\cale$ on $C$ with
  $\ch(\cale)=(N,d,k)$; and
\item[(c)] Closed subschemes $S\subset C$ of dimension $\leq1$ with
  dual curve class $[S]^\vee=d$ and holomorphic Euler characteristic
  $\chi(\Ocal_S)=k$.
\end{itemize}
In contrast to the six-dimensional situation, in four dimensions the
connections between these three classes of objects is somewhat more
subtle. We shall examine each of them in turn and how they compare
with the gauge theory results we have thus far obtained.

\subsection{Moduli spaces of framed instantons\label{modspframed}}

We begin with Point~(a) at the end of Section~\ref{D4wallcrossing} Let $C$ be a smooth, quasi-projective, open,
toric surface. We will assume that $C$ admits a projective
compactification $\,\overline{C}\,$, i.e. $\,\overline{C}\,$ is a smooth, compact, projective,
toric surface with a smooth divisor $\ell_\infty\subset
\,\overline{C}\,$ (called the ``line at infinity'') which is a $T^2$-invariant
$\PP^1$ in $\,\overline{C}\,$, and such that $C=\,\overline{C}\,\setminus\ell_\infty$. We will
also require that $\ell_\infty\cdot\ell_\infty>0$, in addition to
$\ell_\infty\cong\PP^1$.
The difference between the counting of framed instantons on the
compact toric surface $\,\overline{C}\,$ (with boundary condition at ``infinity'')
and of unframed instantons on the open toric surface $C$ is a
universal perturbative contribution, which will be dropped here. Since
$\,\overline{C}\,$ is compact, these calculations will only capture the
contributions from instantons with integer charges on $C$. We will
mention later on how to incorporate the contributions from fractional
instantons on~$C$.

Let us fix two numbers $N\in\N$ and $k\in\Q$, and an integer
cohomology class $d\in H^2(\,\overline{C}\,,\Z)$. Let $\calm_{N,d,k}(C)$ be the
framed moduli space consisting of isomorphism classes $[\cale]$ of
torsion-free sheaves $\cale$ on $\,\overline{C}\,$ such that:
\begin{itemize}
\item[(1)] $\cale$ has the following topological Chern invariants:
\beq
N\=\ch_0(\cale)\={\rm rank}(\cale) \ , \qquad d\=\ch_1(\cale)\=c_1(\cale)  \qquad
\mbox{and} \qquad k\=-\int_{\,\overline{C}\,}\,\ch_2(\cale) \ ;
\label{topinvtsE}\eeq
and
\item[(2)] $\cale$ is locally-free in a neighbourhood of 
  $\ell_\infty$, and there is an isomorphism
  $\cale\big|_{\ell_\infty}\cong\Ocal_{\ell_\infty}^{\oplus N}$ called the
  ``framing at infinity''.
\end{itemize}
These topological conditions imply that the moduli space
$\calm_{N,d,k}(C)$ is non-empty only when
\beq
d\,\big|_{\ell_\infty}\=N\,c_1\big(\Ocal_{\ell_\infty}\big)\=0 \ ,
\label{dellinfty}\eeq
so that $d$ defines a class $d\in H_{\rm cpt}^2(C,\Z)$ in the
compactly supported cohomology of $C$. Furthermore, from
$\ch_2(\cale)=-c_2(\cale)+\frac12\,c_1(\cale)\wedge c_1(\cale)$ the last relation in
(\ref{topinvtsE}) can be written as
\beq
k\=\int_{\,\overline{C}\,}\,\big(c_2(\cale)-\mbox{$\frac{1}2$}\,d\wedge d\big) \ .
\label{kc2E}\eeq
Framed sheaves yield stable pairs~\cite{HL}, analogous to those described in Section~\ref{Wallcrossing}, after suitable choices of polarization on $\overline{C}$ and stability parameter.

We write $\cale(-\ell_\infty):=\cale\otimes\Ocal_{\,\overline{C}\,}(-\ell_\infty)$. At a
given point $[\cale]\in\calm_{N,d,k}(C)$, the space of reducible
connections is $\Ext_{\Ocal_{\,\overline{C}\,}}^0(\cale,\cale(-\ell_\infty))=\Hom_{\Ocal_{\,\overline{C}\,}}(\cale,\cale(-\ell_\infty))$,
the Zariski tangent space is $\Ext_{\Ocal_{\,\overline{C}\,}}^1(\cale,\cale(-\ell_\infty))$, and the
obstruction space is $\Ext_{\Ocal_{\,\overline{C}\,}}^2(\cale,\cale(-\ell_\infty))$. The cohomology of
the instanton deformation complex is greatly simplified by the fact
that in this case~\cite{GL}
\beq
\Ext_{\Ocal_{\,\overline{C}\,}}^0\big(\cale\,,\,\cale(-\ell_\infty)\big)\=
\Ext_{\Ocal_{\,\overline{C}\,}}^2\big(\cale\,,\,\cale(-\ell_\infty)\big)\=0 \ .
\label{Ext01van}\eeq
Using (\ref{Ext01van}) and the Riemann--Roch theorem, one
shows~\cite{GL} that the moduli space $\calm_{N,d,k}(C)$
is a smooth quasi-projective variety of (complex) dimension
$2N\,k+d^2$, where $d^2:=\int_{C}\,d\wedge d$, whose tangent space at
a point $[\cale]$ is isomorphic to the vector space
$\Ext_{\Ocal_{\,\overline{C}\,}}^1(\cale,\cale(-\ell_\infty))$.

We will now describe a natural torus invariant subspace of 
the instanton moduli space. The $T^2$-invariance of $\ell_\infty$
implies that the pullback of the $T^2$-action on $\,\overline{C}\,$ defines an
action on $\calm_{N,d,k}(C)$. There is also an action of the diagonal
maximal torus $T^N$ of $GL(N,\C)$ on the framing. Altogether we get
an action of the complex algebraic torus $\tilde T= T^2\times T^N$ on
$\calm_{N,d,k}(C)$~\cite{GL}. We are interested in the
fixed point set $\calm_{N,d,k}(C)^{\tilde T}$ of this torus action.

An isomorphism class $[\cale]\in\calm_{N,d,k}(C)$ is fixed by the $T^N$-action if and
only if it decomposes as 
\beq
\cale\=\cale_1\oplus\cdots\oplus \cale_N \ , \qquad \cale_l~\in~\calm_{1,d,k}(C)
\label{Edecomp}\eeq
such that $\cale_l\big|_{\ell_\infty}$ is mapped
to the $l$-th factor $\Ocal_{\ell_\infty}$ of
$\Ocal_{\ell_\infty}^{\oplus N}$ under the framing isomorphism. Since
the double dual $\cale_l^{\vee\vee}$ is a line bundle which is trivial on
$\ell_\infty$, it is equal to $\Ocal_{\,\overline{C}\,}(D_l)$ for some divisor
$D_l\subset \,\overline{C}\,$ disjoint from $\ell_\infty$. Via the natural
injection $\cale_l\subset \cale_l^{\vee\vee}=\Ocal_{\,\overline{C}\,}(D_l)$, the sheaf
$\cale_l$ is thus equal to $\cali_l(D_l)=\cali_l\otimes\Ocal_{\,\overline{C}\,}(D_l)$ for some
ideal sheaf $\cali_l$ of a zero-dimensional subscheme $Z_l\subset \,\overline{C}\,$,
also disjoint from $\ell_\infty$. If $\cale$ is also fixed by the
$ T^2$-action, then so are $D_l$, $\cali_l$ and~$Z_l$. 

The supports of the $ T^2$-invariant subschemes $Z_l$ and $D_l$ can be 
represented explicitly in terms of the toric geometry of $C$. Recall
that this is described in terms of an underlying toric graph
$\Delta(C)$ whose vertices are in bijective correspondence with the
$ T^2$-fixed points in $C$, and two vertices are joined by an edge $e$
if and only if the corresponding fixed points are connected by a
$ T^2$-invariant $\PP^1$. Let $V(C)$ and $E(C)$ respectively denote
the set of vertices and edges of $\Delta(C)$. Then since $\Delta(C)$
is a chain, one has $|E(C)|=|V(C)|-1$. The number of vertices
$n=|V(C)|$ is also the number of two-cones in the toric fan of
$C$ and it is related to the Euler characteristic
$\chi(C)$ of the surface as
$$
n\=\chi(C)\=\chi(\,\overline{C}\,)-2 \ .
$$
We denote by $p_f$ the $ T^2$-invariant point in $C$ corresponding to
the vertex $f\in V(C)$, and by $\ell_e$ the $ T^2$-invariant line
$\PP^1$ corresponding to the edge $e\in E(C)$.

Since the zero-cycles $Z_l$ are not supported on $\ell_\infty$, they
must be contained in the fixed point set $V(C)$ in $C$. Thus each
$Z_l$ is a union of subschemes $Z_l^f$, $f\in V(C)$ supported at the
$ T^2$-fixed points $p_f\in C$. If we choose a local coordinate system
$(x,y)\in\C^2$ in a patch $U_f$ around $p_f$, then the
$ T^2$-invariant ideal of $Z_l^f$ in the coordinate ring $\C[x,y]$ of
$U_f\cong\C^2$ is generated by the $ T^2$-eigenfunctions with
non-trivial characters, which are monomials $x^i\,y^j$, and hence 
$Z_l^f$ corresponds to a Young diagram $\lambda_l^f$ with
$\big|\,\lambda_l^f\,\big|$ boxes (with monomial $x^i\,y^j$ placed at
$(i+1,j+1)$). The ideal is spanned by monomials outside the Young
diagram. Likewise, since the $ T^2$-invariant two-cycles $D_l$ are
disjoint from the line at infinity, they are supported along the edges
$\ell_e\cong\PP^1$, $e\in E(C)$.

It follows that the fixed point set $\calm_{N,d,k}(C)^{\tilde T}$ is
parametrized by finitely many $N$-tuples
\beq
(\vec D,\vec{\mbf \lambda}) \=\big((D_1,\mbf \lambda_1)\,,\,\dots\,,\,(D_N,\mbf \lambda_N)\big)
\ ,
\label{DYNtuple}\eeq
where
\beq
D_l~\in~H_2(C,\Z) \ \cong \ \bigoplus_{e\in E(C)}\,\Z[\ell_e]
\label{Dlelle}\eeq
are $ T^2$-invariant divisors in $C$ and
$$
\mbf \lambda_l \=\big(\,\lambda_l^f\,\big)_{f\in V(C)}
$$
is a vector of Young tableaux with $\big|\,\mbf \lambda_l\,\big|:=\sum_{f\in 
  V(C)}\,\big|\,\lambda_l^f\,\big|$ boxes. The Young tableaux parametrize
the contributions from regular pointlike D0-brane instantons
(freely moving inside $C$). The divisors parametrize the contributions
from D2-brane instantons (wrapping $D_l$ with appropriate units of
magnetic flux). We will see that these two types of
contributions factorize completely. We can write the topological
invariants of a generic element (\ref{DYNtuple}) in
$\calm_{N,d,k}(C)^{\tilde T}$ in terms of this combinatorial
data. Since $c_1(\cali_l)=0$, the constraint $d=c_1(\cale)$ can be written as  
\beq
d \=\sum_{l=1}^N\,c_1\big(\Ocal_{C}(D_l)\big) \ ,
\label{c1EdDldvee}\eeq
whereas (\ref{kc2E}) becomes
\beq
k \=\sum_{l=1}^N\,\big|\,\mbf \lambda_l\,\big|+\int_{C}\,\Big(\,
\sum_{l<l'}\,c_1\big(\Ocal_{C}(D_l)\big)\wedge c_1\big(
\Ocal_{C}(D_{l'})\big)-\frac{1}2\,d\wedge d \, \Big) \ .
\label{KYlDl}\eeq
The Young diagram box sum in (\ref{KYlDl}) gives the length of the
singularity set of the sheaf (\ref{Edecomp}).

To make contact with our previous formulas, let us rewrite these constraints in a more explicit
parametrization. Define the invertible, symmetric, integer-valued
intersection matrix $\mbf C=(\mbf C_{e,e'})_{e,e'\in E(C)}$ between lines of the
toric graph $\Delta(C)$ as
$$
\mbf C_{e,e'} \ := \ \ell_e\cdot\ell_{e'} \ .
$$
Each divisor $\ell_e$ is canonically associated to a
$ T^2$-equivariant holomorphic line
bundle $\call_e:=\Ocal_{C}(\ell_e)$ whose first Chern class $c_1(\call_e)\in
H^2_{\rm cpt}(C,\Z)$ is Poincar\'e dual to $\ell_e$ with 
\beq
\int_{\ell_e}\,c_1\big(\call_{e'}\,\big)\=\mbf C_{e,e'} \qquad
\mbox{and} \qquad \int_{C}\,c_1\big(\call_e\big)\wedge c_1\big(
\call_{e'}\big)\=\mbf C_{e,e'} \ .
\label{Chern1PD}\eeq
Note that these intersection numbers differ from (\ref{tautintnum}),
i.e. the basis of line bundles $\call_e$ does \emph{not} coincide with the
tautological line bundles $\cR_i$.
Using (\ref{Dlelle}) we can write $c_1(\Ocal_{C}(D_l))=\sum_{e\in
  E(C)}\,u^e_{l}\,c_1(L_e)$ with $u^e_{l}\in\Z$. Then the
constraint (\ref{c1EdDldvee}) can be expressed as
\beq
d \=\sum_{l=1}^N~\sum_{e\in E(C)}\,u^e_{l}\,c_1\big(\call_e\big) \ .
\label{duel}\eeq
Combined with (\ref{KYlDl}) and using (\ref{Chern1PD}) this gives
\beq
k \= \sum_{l=1}^N\,\big|\,\mbf \lambda_l\,\big|+\sum_{l<l'}\,u^e_{l}\,
\mbf C_{e,e'}\,u^{e'}_{l'}-\frac12\,\sum_{l,l'=1}^N\,
u^e_{l}\,\mbf C_{e,e'}\,u^{e'}_{l'} \= \sum_{l=1}^N\,\big|\,\mbf \lambda_l\,\big|-
\frac12\,\sum_{l=1}^N\,u^e_{l}\,\mbf C_{e,e'}\,u^{e'}_{l} \ .
\label{kintsum}\eeq

The Vafa--Witten--Nekrasov partition function on the toric surface $C$
is now defined as the $\tilde T$-equivariant index
\beq
Z_{U(N)}^{C}(q,\mbf Q)^{\rm fr}\=\sum_{k\in\Q}\,q^k~
\sum_{d\in H_{\rm
    cpt}^2(C,\Z)}\,\mbf Q^d\,\chi\big(\calm_{N,d,k}(C)\big) \ , 
\label{VWNpartfndef}\eeq
where as before $q=\e^{2\pi\im\tau}$ with $\tau$ the complexified gauge coupling
constant and
$$
\mbf Q^d \ := \ \prod_{e\in E(C)}\,\big(Q_e\big)^{(\mbf C^{-1})^{e,e'}\,\int_{\ell_{e'}}\,d}
$$
for a collection of formal variables $Q_e$, $e\in E(C)$ and a given
cohomology class $d\in H_{\rm cpt}^2(C,\Z)$. From the localization formula (\ref{ABlocformula}) we see that the
$\tilde T$-equivariant Euler class cancels in (\ref{ABloc}) at the fixed points. This cancellation in the
fluctuation determinants is a consequence of (\ref{Ext01van}) which
implies that the obstruction bundle is trivial, and there is an
isomorphism between the tangent and normal bundles over the fixed
point set $\calm_{N,d,k}(C)^{\tilde T}$ in the instanton moduli
space. Because of this cancellation, we see that the contribution from
each critical point of the gauge theory is 
independent of the equivariant parameters $\epsilon_i$ and $a_l$. In the sum over critical points, we
can replace the sum over $k\in\Q$ using (\ref{kintsum}) by sums over
Young diagrams $\vec {\mbf\lambda}$ and over the integers
$\vec{\mbf u}= (\mbf u_{1},\dots,\mbf u_{N})$, $\mbf u_l=(u_l^e)_{e\in
E(C)}\in\Z^{n-1}$ representing magnetic fluxes through $\ell_e$ on D2-branes
wrapping the divisors $D_l$. The sum over $d\in H^2_{\rm cpt}(C,\Z)$
may then be saturated using~(\ref{duel}).

Putting everything together, and using (\ref{duel}) and
(\ref{Chern1PD}) to write
$$
\int_{\ell_e}\,d \= \sum_{l=1}^N\,\mbf C_{e,e'}\,u^{e'}_{l} \ , 
$$
the partition function (\ref{VWNpartfndef}) thereby becomes
\beq
Z_{U(N)}^{C}(q,\mbf Q)^{\rm fr} \=\sum_{\vec{\mbf\lambda}}~
\sum_{\vec{\mbf u}\in\Z^{N\,(n-1)}}\,
q^{|\,\vec{\mbf\lambda}\,| -\frac12\,\vec{\mbf u}\cdot \mbf C\vec{\mbf u}}~
\mbf Q^{\mbf u}
\label{VWNpartfnresum}\eeq
where we have used (\ref{uidef}). The sums over Young tableaux decouple, and for each vertex $f\in V(C)$ and
each $l=1,\dots,N$ they produce a factor of the Euler function
$\widehat\eta(q)^{-1}$. Then we can bring (\ref{VWNpartfnresum}) into the final form
\beq
Z_{U(N)}^C(q,\mbf Q)^{\rm fr} \=\frac1{\widehat\eta(q)^{N\,\chi(C)}}~
\sum_{\vec{\mbf u}\in\Z^{N\,b_2(C)}}\,q^{-\frac12\,\vec{\mbf u}\cdot
  \mbf C\vec{\mbf u}}~\mbf Q^{\mbf u}
\label{VWNpartfinal}\eeq
where we recall that $\chi(C)=|V(C)|=n$ and $b_2(C)=|E(C)|=n-1$.

This formula differs from (\ref{HJpartfn}) in that only integral
values of the first Chern class are permitted in
(\ref{VWNpartfinal}). In the case of an $A_{p-1}$ singularity, fractional first
Chern classes can be incorporated by constructing instead the moduli space of torsion free
sheaves on the orbifold compactification $\overline{C}\,^{\rm orb} = C \cup \tilde\ell_{\infty}$ of the hyperK\"ahler ALE
space $C$, where
$\tilde\ell_{\infty} = \PP^1 / \Gamma$~\cite{KN,Nakajima1,Nakajima2,Nakajima3}. In a neighbourhood
of infinity we can approximate $\overline{C}$ by $\PP^2 /
\Gamma$ with the singularity at the origin resolved. More
precisely, we obtain the divisor $\tilde\ell_{\infty}$ by gluing together
the trivial bundle $\cO_C$ on $C$ with the line bundle $\cO_{\PP^2/\Gamma}(1)$ on $\PP^2 /
\Gamma$. The latter bundle has a $\Gamma$-equivariant structure such that
the map $\cO_C \rightarrow \cO_{\PP^2/\Gamma}(1)$ is
$\Gamma$-equivariant. Let us examine some examples which are covered
by the analysis above.

\subsubsection*{Affine plane}

Let $C=\C^2$. Then $\,\overline{C}\,=\PP^2$ with 
$\ell_\infty=[0,z_1,z_2]\cong\PP^1$ and intersection number
$\ell_\infty\cdot\ell_\infty=1>0$. Since $H^2(\PP^2,\Z)\cong\Z$ in
this case, the constraint (\ref{dellinfty}) implies $d=0$. The
instanton moduli space $\calm_{N,k}(\C^2)$ in this case is a smooth
variety of complex dimension $2N\,k$. Furthermore,
$n=\chi(\PP^2)-2=3-2=1$. The partition function (\ref{VWNpartfndef})
computed with fixed $d=0$ and $n=1$ is thus given by
$$
Z_{U(N)}^{\C^2}(q)^{\rm fr} \=\widehat\eta(q)^{-N} \ ,
$$
which coincides with the instanton partition function on $\C^2$ described in Section~\ref{ADHMC2}

\subsubsection*{Local $\PP^1$}

Let $C=C_p$, $p>0$ be the total space of
the holomorphic line bundle $\Ocal_{\PP^1}(-p)$ of degree $-p$ over
$\PP^1$. Then $\,\overline{C}\,=\F_p:=\PP(\Ocal_{\PP^1}(-p)\oplus\Ocal_{\PP^1})$
is the $p$-th Hirzebruch surface. Let
$\ell_0=\PP(0\oplus\Ocal_{\PP^1})\cong\PP^1$ and
$\ell_\infty=\PP(\Ocal_{\PP^1}(-p)\oplus0)\cong\PP^1$. Then
$\ell_0\cdot\ell_0=-p<0$, $\ell_\infty\cdot\ell_\infty=p>0$, and
$\ell_0\cdot\ell_\infty=0$ so that the line $\ell_0$ does not pass
through the ``line at infinity''. From the constraint
(\ref{dellinfty}) and the second Betti number $b_2(C_p)=1$,
it follows that $d=m\,c_1(\Ocal_{C_p}(\ell_0))$ for
some $m\in\Z$. In this case, the instanton moduli space
$\calm_{N,m,k}(C_p)$ is a smooth variety of complex dimension
$2N\,k-p\,m^2$. The positivity of this dimension is a constraint on the integral
classes which ensures that the moduli space is non-empty. One has $\chi(\F_p)=1+2+1=4$, and hence
$n=\chi(\F_p)-2=2$. The partition function (\ref{VWNpartfinal}) in
this case becomes
\beq
Z_{U(N)}^{C_p}(q,Q)^{\rm fr} \=\frac1{\widehat\eta(q)^{2N}}~
\sum_{\vec u\in\Z^N}\,
q^{\frac p{2}\,\vec u\cdot\vec u}~Q^{u}
\label{ZXpqz}\eeq
with $u=u_1+\dots+u_N$.

Like the ALE spaces, one can work instead with a stack
compactification $\,\overline{C}\,^{\rm orb}$ of the total space of
the bundle $\Ocal_{\PP^1}(-p)$ obtained by adding a divisor
$\tilde\ell_\infty\cong\ell_\infty/\Gamma$. The resulting variety is a
toric Deligne--Mumford stack whose coarse space is the Hirzebruch
surface $\F_p$~\cite{BPT}. With this compactification one produces framed sheaves with
fractional first Chern classes $d=\frac
mp\,c_1(\Ocal_{C_p}(\ell_0))$, $m\in\Z$, and in this case the
partition function takes the form~\cite{BPT}
$$
Z_{U(N)}^{C_p}(q,Q)^{\rm orb} \= \left(\,\frac{\theta_3\big(\frac
  vp\,\big|\,\frac\tau p\big)}{\widehat\eta(q)^2}\,\right)^N
$$
anticipated by refs.~\cite{FMP,GSST}, where $q=\e^{2\pi\ii\tau}$,
$Q=\e^{2\pi\ii v}$ and
$\theta_3(v|\tau)=\sum_{n\in\Z}\,q^{\frac12\,n^2}~ Q^n$ is a Jacobi
elliptic function.

On these spaces one also has a version of the Hitchin--Kobayashi
correspondence which makes contact with the realization of instantons
as self-dual connections. Namely, $SU(N)$-instantons on $C_p$ are in
one-to-one correspondence with holomorphic bundles $\cale$ of rank $N$
on $C_p$ with $c_1(\cale)=0$ together with a framing at
infinity~\cite{GKM}. When $p=2$ these spaces coincide with the $A_1$
ALE space.

\subsubsection*{Compact surfaces}

We will now examine situations under which the formula (\ref{VWNpartfinal})
holds in the case of \emph{compact} surfaces, providing rigorous justification for some of the conjectural
formulas of ref.~\cite{FMP}. Let $C$ be a compact, smooth,
projective toric surface. Let $c_\infty$ be a generic point in
$C$ which is disjoint from the torus invariant lines $\ell_e$, $e\in E(C)$ 
of $C$. Let $\calm_{N,d,k}(C,c_\infty)$ be the moduli space of isomorphism classes $[\cale]$ of torsion
free coherent sheaves $\cale$ on $C$ with topological Chern invariants as
in (\ref{topinvtsE}), together with a ``framing'' at the point
$c_\infty$. When $\cale$ is locally free, this framing means a choice of basis for
the fibre space $\cale_{c_\infty}\cong\C^N$. When $\cale$ is not locally free,
the framing is defined with respect to a locally free resolution
$\cale^\bullet\to \cale\to0$.

Let $\sigma:(\,\widehat{C},\widehat{\ell}_\infty)\to (C,c_\infty)$ be
the blow-up of $C$ at $c_\infty$. Then $\widehat{C}$ is also a
compact, smooth, projective toric surface. We will suppose that the
exceptional divisor
$\widehat{\ell}_\infty=\sigma^{-1}(c_\infty)\cong\PP^1$ has positive
self-intersection
$\widehat{\ell}_\infty\cdot\widehat{\ell}_\infty>0$. Let $\calm^{\rm fr}_{N,d,k}(\,\widehat{C}\,)$ be the moduli space of isomorphism classes $\big[\,\widehat{\cale}\,\big]$ of torsion-free sheaves $\widehat{\cale}$ on $\widehat{C}$ which are framed on the line $\widehat{\ell}_\infty$ as above. The blow-up map $\sigma$ determines mutually inverse sheaf morphisms $\big[\,\widehat{\cale}\,\big]\mapsto [\cale]=\big[\sigma_*\widehat{\cale}\,\big]$ and $[\cale]\mapsto \big[\,\widehat{\cale}\,\big]=[\sigma^*\cale]$, and hence an explicit isomorphism
\beq
\calm_{N,d,k}(C,c_\infty) \ \cong \ \calm_{N,d,k}^{\rm fr}\big(\,\widehat{C}\,\big) \ .
\label{blowupiso}\eeq
It follows from above that the moduli space
$\calm_{N,d,k}(C,c_\infty)$ is thus a smooth variety of complex
dimension $2N\,k+d^2$, where $d^2:=\int_C\,d\wedge d$. Using the
isomorphism (\ref{blowupiso}), the instanton partition function on $C$
can be computed from the blow-up formula
\beq
Z_{U(N)}^C(q,\mbf Q)^{\rm fr}
\=Z_{U(N)}^{\widehat{C}\,\setminus\widehat{\ell}_\infty}(q,\mbf
Q)^{\rm fr}
\label{ZCblowup}\eeq
where the right-hand side is given by the formula (\ref{VWNpartfinal}).

As an explicit example, consider the complex projective plane
$C=\PP^2$. Let $z_\infty$ be a generic
point on $\PP^2$ disjoint from the line
$\ell_\infty=[0,z_1,z_2]$. Then the instanton moduli space
$\calm_{N,d,k}(\PP^2,z_\infty)$ is a smooth variety of complex
dimension $2N\,k+d^2$, where $d\in\Z$. Let
$\sigma:\widehat{\PP}{}^2\to\PP^2$ be the blowup of $\PP^2$ at
$z_\infty$ with exceptional divisor
$\widehat{\ell}_\infty=\sigma^{-1}(z_\infty)\cong\PP^1$. Then there
are
$n=\chi(\,\widehat{\PP}{}^2\setminus\widehat{\ell}_\infty)= \chi(\PP^2)=3$
maximal two-cones, and
$b_2(\,\widehat{\PP}{}^2\setminus\widehat{\ell}_\infty) =b_2(\PP^2)=1$
edge in the toric graph
$\Delta(\,\widehat{\PP}{}^2\setminus\widehat{\ell}_\infty) =
\Delta(\PP^2)$. Since $\ell_\infty\cdot\ell_\infty =1$ in this case, the partition function (\ref{ZCblowup}) gives the $U(N)$ formula
\beq
Z_{U(N)}^{\PP^2}(q,Q)^{\rm fr} \=\frac1{\widehat\eta(q)^{3N}}~\sum_{\vec u\in\Z^N}\,
q^{-\frac12\,\vec u\cdot\vec u}~Q^{u}
\label{ZP2qzUN}\eeq
with $u=u_1+\cdots+u_N$. For $N=1$ this agrees with the $U(1)$ gauge
theory partition function on $\PP^2$ derived in ref.~\cite[Sec.~5.1]{CK-PS}.

\subsection{Stability\label{Stability}}

Next we address Point~(b) at the end of
Section~\ref{D4wallcrossing} If one wishes to relax the requirement
of framing, then one must carefully analyse stability issues in order
to obtain well-defined instanton moduli spaces. Let $C$ be a smooth, quasi-projective toric surface. Again we fix
invariants $N\in\N$, $k\in\Q$ and $d\in H^2_{\rm cpt}(C,\Z)$. Let
$\calm_{N,d,k}(C)$ be the unframed moduli space consisting of
isomorphism classes $[\cale]$ of semistable torsion-free coherent sheaves
$\cale$ on $C$ such that $\cale$ has topological Chern invariants
\beq
N\=\ch_0(\cale)\={\rm rank}(\cale) \ , \qquad d\=\ch_1(\cale)\=c_1(\cale)  \qquad
\mbox{and} \qquad k\=-\int_{C}\,\ch_2(\cale) \ .
\label{topinvtsEgen}\eeq

The notion of ``semistability'' considers sheaves with fixed Hilbert
polynomial. In this case, a general result of Maruyama~\cite{Maruyama}
constructs the algebraic scheme $\calm_{N,d,k}(C)$ and shows that it
is \emph{projective}. In particular, it admits a
$\tilde T$-equivariant embedding into a smooth variety. In the case that $C$ is a
polarized surface, i.e. it admits a smooth distribution in its tangent
bundle $TC$ which is integrable and Lagrangian (in an appropriate sense), we can formulate the notion of semistability in
terms of the more familiar slope semistability
following ref.~\cite{Gieseker}. Let $\call$ be a fixed ample line bundle on
$C$. For a coherent torsion-free sheaf $\cale$ on $C$, define the
polynomial
$$
\rho_\cale(n) \=\frac{\chi\big(\cale\otimes \call^{\otimes n}\big)}{{\rm rank}(\cale)}
$$
for $n\in\N_0$. Then $\cale$ is said to be $\call$-semistable if
$$
\rho_\calf(n) \ \leq \ \rho_\cale(n)
$$
for all $n\gg0$ whenever $\calf$ is a coherent subsheaf of $\cale$. In this
case, Gieseker~\cite{Gieseker} constructs $\calm_{N,d,k}(C)$ as a
projective $Quot$ scheme. The slope of $\cale$ is the rational number
$$
\mu(\cale) \=\frac{{\rm deg}(\cale)}{{\rm rank}(\cale)} \ ,
$$
where the degree of $\cale$ is defined using the polarization as
\beq
{\rm deg}(\cale) \=\int_C\,c_1(\cale)\wedge c_1(\call) \ .
\label{degreeE}\eeq
An application of the Riemann--Roch theorem shows~\cite{Gieseker}
$$
\rho_\cale(n) \=\frac n2\,\int_C\,\Big(c_1(\call)\wedge\big(n\,c_1(\call)+c_1(C)
\big)+\frac{2\,\ch_2(\cale)+c_1(\cale)\wedge c_1(C)}{{\rm rank}(\cale)}\Big)+
\mu(\cale)+\chi(\Ocal_C) \ .
$$
It follows that $\rho_\cale(n)\geq\rho_\calf(n)$ for $n\gg0$ if and only if
$\mu(\cale)\geq\mu(\calf)$, and hence $\call$-semistability is equivalent to the
usual quasi-BPS instanton equations in this case. In physics
applications one is interested in instances where $C$ is a K\"ahler
surface. In this case one can use the K\"ahler polarization and take
$c_1(\call)$ to be the K\"ahler two-form $k_0$ in (\ref{degreeE}).

The following result computes the expected dimension of the instanton
moduli space in this case.
\begin{lemma}
The virtual dimension of $\calm_{N,d,k}(C)$ equals
$2N\,k+d^2-(N^2-1)\,\chi(\Ocal_C)$, where $d^2:=\int_C\,d\wedge d$ and
$\chi(\Ocal_C)$ is the holomorphic Euler characteristic of $C$. 
\end{lemma}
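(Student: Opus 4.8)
The plan is to read off the virtual dimension from the deformation--obstruction theory of the moduli functor of semistable sheaves, and then to evaluate the resulting Euler characteristic by Hirzebruch--Riemann--Roch. At a point $[\cale]\in\calm_{N,d,k}(C)$ the tangent--obstruction theory is controlled by the \emph{trace-free} self-extension groups: the Zariski tangent space is $\Ext^1_0(\cale,\cale)$ and the obstruction space is $\Ext^2_0(\cale,\cale)$, where $\Ext^i_0$ denotes the kernel of the trace map $\Ext^i(\cale,\cale)\to H^i(C,\Ocal_C)$. Hence the virtual dimension is $\dim\Ext^1_0(\cale,\cale)-\dim\Ext^2_0(\cale,\cale)$. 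Since $\cale$ is stable it is simple, so $\Hom(\cale,\cale)=\C$ and therefore $\Ext^0_0(\cale,\cale)=0$; this lets me rewrite the virtual dimension as $-\chi_0(\cale,\cale)$, where $\chi_0(\cale,\cale):=\sum_i(-1)^i\dim\Ext^i_0(\cale,\cale)$.

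First I would pass from the trace-free Euler characteristic to the full one. In characteristic zero the assignment $\alpha\mapsto\alpha\cdot\mathrm{id}_\cale$ splits the trace (the composite being multiplication by $N\neq0$), giving a direct sum decomposition $\Ext^i(\cale,\cale)\cong\Ext^i_0(\cale,\cale)\oplus H^i(C,\Ocal_C)$ for every $i$. Summing with signs yields $\chi_0(\cale,\cale)=\chi(\cale,\cale)-\chi(\Ocal_C)$, so the virtual dimension equals $\chi(\Ocal_C)-\chi(\cale,\cale)$.

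The heart of the computation is then Hirzebruch--Riemann--Roch applied to $\cale^\vee\otimes\cale$, namely $\chi(\cale,\cale)=\int_C\ch(\cale)^\vee\,\ch(\cale)\,\Td(C)$. I would insert $\ch(\cale)=N+d+\ch_2(\cale)$ with $\int_C\ch_2(\cale)=-k$, so that $\ch(\cale)^\vee=N-d+\ch_2(\cale)$. Expanding the product, the degree-two part cancels and the degree-four part is $2N\,\ch_2(\cale)-d\wedge d$; pairing against $\Td(C)=1+\tfrac12 c_1(C)+\tfrac1{12}\big(c_1(C)^2+c_2(C)\big)$ and integrating gives $\chi(\cale,\cale)=N^2\chi(\Ocal_C)-2N\,k-d^2$, where Noether's formula $\chi(\Ocal_C)=\tfrac1{12}\int_C\big(c_1(C)^2+c_2(C)\big)$ identifies the rank-squared term. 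Substituting into $\chi(\Ocal_C)-\chi(\cale,\cale)$ produces exactly $2N\,k+d^2-(N^2-1)\chi(\Ocal_C)$.

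The main technical obstacle is that $C$ is open, so neither Hirzebruch--Riemann--Roch nor the properness guaranteeing finiteness of the Ext groups is automatic, and the symbols $\int_C$ must be understood through the compactly supported cohomology afforded by the framing at infinity used in Section~\ref{modspframed}. I would handle this by carrying out the Euler-characteristic computation on the projective compactification supplied by Maruyama's and Gieseker's constructions, where the pairing $\ch(\cale)^\vee\,\ch(\cale)\,\Td(C)$ is a genuine HRR integrand and the trace splitting is legitimate. Checking that the compactification together with the stability data introduces no extra trace-free obstructions at infinity — so that the difference $\dim\Ext^1_0-\dim\Ext^2_0$ is unchanged — is the step demanding the most care, after which the arithmetic of the degree-four term is routine.
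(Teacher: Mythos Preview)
Your proposal is correct and follows essentially the same route as the paper: both reduce the virtual dimension to $\chi(\Ocal_C)-\chi(\cale,\cale)$ and then evaluate the latter by Hirzebruch--Riemann--Roch, obtaining $\chi(\cale,\cale)=N^2\chi(\Ocal_C)-2Nk-d^2$. The only difference is presentational: the paper invokes the formula $\chi(\Ocal_C\otimes\Ocal_C^\vee)-\chi(\cale\otimes\cale^\vee)$ for the virtual dimension directly from its earlier discussion of obstruction theory, whereas you justify it via the trace splitting $\Ext^i(\cale,\cale)\cong\Ext^i_0(\cale,\cale)\oplus H^i(C,\Ocal_C)$ and the simplicity of stable sheaves --- a welcome clarification, but not a different argument.
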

\begin{proof}
As discussed in Section~\ref{Eqint}, the dimension of the virtual tangent space $T_{[\cale]}^{\rm vir}\calm_{N,d,k}(C)=\Ext_{\Ocal_C}^1(\cale,\cale)\ominus\Ext_{\Ocal_C}^2(\cale,\cale)$ to the instanton moduli space at a point $[\cale]\in\calm_{N,d,k}(C)$ in obstruction theory is the difference of Euler characteristics
$\chi(\Ocal_C\otimes\Ocal_C^\vee)-\chi(\cale\otimes \cale^\vee\,)$. The latter
quantity may be computed for $\cale$ locally free by using the Hirzebruch--Riemann--Roch
theorem to write 
\beq
\chi\big(\cale\otimes \cale^\vee\,\big) \=
\int_C\,\ch\big(\cale\otimes \cale^\vee\,\big)\wedge{\rm Td}(C) \ .
\label{GRRthm}\eeq
Let $\nu$ be the generator of $H^4(C,\Z)\cong\Z$ which is Poincar\'e
dual to a point in $C$. Then using (\ref{topinvtsEgen}) one computes
the Chern character 
$$
\ch\big(\cale\otimes \cale^\vee\,\big)\=\ch(\cale)\wedge\ch\big(\cale^\vee\,\big)\=
(N+d-k\,\nu)\wedge(N-d-k\,\nu)\=
N^2-d\wedge d-2N\,k\,\nu \ .
$$
The Todd characteristic class of the tangent bundle of $C$ is given in
terms of Chern classes of $TC$ as 
\beq
{\rm Td}(C) \=1+\mbox{$\frac12$}\,c_1(C)+\mbox{$\frac1{12}$}\,
\big(c_1(C)\wedge c_1(C)+c_2(C)\big) \ .
\label{ToddC}\eeq
We may thus write (\ref{GRRthm}) as
\bea
\chi\big(\cale\otimes \cale^\vee\,\big)&=&\int_C\,\Big(
\frac{N^2}{12}\,\big(c_1(C)\wedge c_1(C)+c_2(C)\big)-d\wedge
d-2N\,k\,\nu 
\Big)\nonumber\\[4pt] &=&-2N\,k-d^2+N^2\,\int_C\,{\rm Td}(C) \ .
\nonumber\eea
From the Hirzebruch--Riemann--Roch formula and $\ch(\Ocal_C)=1$, one
has 
\beq
\chi(\Ocal_C)\=\int_C\,{\rm Td}(C)\=\chi\big(\Ocal_C\otimes
\Ocal_C^\vee\big)
\label{chiOcalC}\eeq
and the result follows when $\cale$ is a bundle. When $\cale$ is not locally
free, we use a locally free resolution $\cale^\bullet\to\cale \to0$ along with
additivity of the Chern character.
\end{proof}

This result shows that the expected dimension of the instanton moduli
space for $N=1$ coincides with the dimension of the framed moduli
space $\calm_{1,d,k}(C)$ of Section~\ref{modspframed} Indeed, torsion
free sheaves of rank one are always stable in the sense explained above. Since any torsion free sheaf decomposes as the product $\cale=\cali\otimes\call$ of an ideal sheaf of a zero-dimensional subscheme and a line bundle, the moduli space factorizes into a product
$$
\calm_{1,d,k}(C)\= \calm_{1,0,k+d^2/2}(C)\times\calm_{1,d,-d^2/2}(C) \ ,
$$
where $\calm_{1,0,K}(C)\cong C^{[K]}$ is the Hilbert scheme of $K$
points on $C$ which is a smooth variety of dimension $2K$, and the
(zero-dimensional) Picard group $\calm_{1,d,-d^2/2}(C)\cong {\rm
  Pic}^d(C)$ parametrizes fractional instantons. The factorization of
the $U(1)$ gauge theory partition function into contributions from
regular and fractional instantons then follows from the
multiplicativity of the Euler class under tensor product of (tangent)
bundles. For compact toric surfaces $C$, it is given by the formula
(\ref{VWNpartfinal}) with $N=1$. In the case of Hirzebruch--Jung surfaces, the
contributions of fractional charges in this case is shown in
ref.~\cite{CK-PS} to arise from the non-compact prime divisors of $C$
via linear equivalences. It amounts to identifying the generators
$c_1(\cR_i)$ of the K\"ahler cone in $CH^1(C)\otimes\Q$ with the
duals to the exceptional divisors $D_i$ which generate the Mori cone
in $CH_1^{\rm cpt}(C)\otimes\Q$. For the ALE spaces, the moduli space
$\calm_{1,0,k+d^2/2}(C)$ coincides with the quiver variety of the ADHM
construction (with suitable stability conditions)~\cite{Kuznetsov}.

The situation for higher rank $N>1$ is much more complicated. In this
case, slope-stability does not seem to properly account for walls of
marginal stability extending to infinity which describe wall-crossing
behaviour of the partition functions counting D4--D2--D0 brane bound
states on Calabi--Yau manifolds $X$ with $h^{1,1}(X)>1$~\cite{DM}. As
discussed in Section~\ref{Wallcrossing}, the physical theory is
described by moduli spaces of stable objects in the derived category ${\sf
  D}^b(\coh(X))$, as the observed D-brane decays are impossible in the
abelian category $\coh(X)$ of coherent sheaves on $X$.

\subsection{Perpendicular partition functions and universal sheaves}

Finally, we come to the last Point~(c) at the end of
Section~\ref{D4wallcrossing} For $U(1)$ gauge theory this relationship
is analysed in detail in ref.~\cite{CK-PS}. In this case a
four-dimensional analog of the topological vertex formalism can be
developed. On the gauge theory side, the vertex contributions should
be computed by a version of the perpendicular partition function of
Section~\ref{DTtheory}, i.e. the generating function
$P_{U(N)}^{m_1,m_2}(q)$ for instantons on $C=\C^2$ with fixed
asymptotics such that $Z_{U(N)}^{\C^2}(q)=P_{U(N)}^{0,0}(q)$, while
the edge contributions should be read off from the character of the
corresponding universal sheaf. In the
remainder of this section we will describe in detail the instanton
moduli space with boundary conditions specified by integers $m_1,m_2$,
and show that the fixed point loci of the induced $\tilde T$-action
are enumerated by two-dimensional Young diagrams with asymptotics
$m_1,m_2$. This gives the four-dimensional version of the gauge theory 
gluing rules of Section~\ref{DTtheory} and also a first principles
derivation of the empirical vertex rules of ref.~\cite[Sec.~4.3]{CK-PS}.

\subsubsection*{Instanton moduli spaces on $\F_0$}

We will first describe how to generate instantons with non-trivial
first Chern class using our previous formalism. For this, rather than
working with the projective plane $\overline{C}=\PP^2$, it is more convenient
to work with the ``two-point compactification'' of $C=\C^2$ given by a
product of two projective lines $\F_0=\PP_z^1\times\PP^1_w$, where the
labels will be used to keep track of each of the two factors. This
variety can be identified as the zeroth Hirzebruch surface
$\F_0=\PP(\Ocal_{\PP^1}\oplus\Ocal_{\PP^1})$, i.e. the projective
compactification of the trivial line bundle
$\Ocal_{\PP^1}=\C\times\PP^1$, which is again a toric surface. For
$(t_z,t_w)\in T^2$, the toric action is described by the automorphism
$F_{t_z,t_w}$ of $\F_0$ defined by
\beq
F_{t_z,t_w}\big([z_0,z_1]\,;\,[w_0,w_1]\big) \=
\big([z_0,t_z\,z_1]\,;\,[w_0,t_w\,w_1]\big) \ .
\label{Ft1t2def}\eeq
Let $p_z:\F_0\to\PP^1_z$ and $p_w:\F_0\to\PP^1_w$ be the canonical
projections, and $i_z:\PP^1_z\cong\PP_z^1\times{\rm
  pt}\hookrightarrow\F_0$ and $i_w:\PP^1_w\cong{\rm
  pt}\times\PP^1_w\hookrightarrow\F_0$ the natural inclusions. There
are now two independent distinguished lines at infinity fixed 
by this $ T^2$ action, given by the pushforwards
$\ell_z=(i_z)_*\PP_z^1$ and $\ell_w=(i_w)_*\PP_w^1$, for which the
respective factors of the action of $ T^2=\C^*\times\C^*$ reduce to
the standard $\C^*$-action on $\ell_z,\ell_w\cong\PP^1$. These two
divisors have intersection products
$\ell_z\cdot\ell_z=\ell_w\cdot\ell_w=0$ and $\ell_z\cdot\ell_w=1$, and
the canonical divisor is $K_{\F_0}=-2\ell_z-2\ell_w$. They generate
the Picard group of $\F_0$, and hence induce a bigrading on line
bundles over $\F_0$. For $m_z,m_w\in\Z$, we write
\beq
\Ocal_{\F_0}(m_z,m_w)~:=~\Ocal_{\F_0}(m_z\,\ell_z+m_w\,\ell_w)\=
p_z^*\Ocal_{\PP^1_z}(m_z)\otimes p_w^*\Ocal_{\PP^1_w}(m_w) \ .
\label{bigrading}\eeq
By K\"unneth's theorem, the non-trivial cohomology groups of $\F_0$
are given by
$$
H^0(\F_0,\Z)\=\Z \ , \qquad
H^2(\F_0,\Z)\=\Z[\xi_z]\oplus\Z[\xi_w] \qquad \mbox{and} \qquad
H^4(\F_0,\Z)\=\Z[\xi_z\wedge\xi_w] \ ,
$$
where $\xi_z=c_1(\Ocal_{\F_0}(1,0))=p_z^*c_1(\Ocal_{\PP^1_z}(1))$ and
$\xi_w=c_1(\Ocal_{\F_0}(0,1))=p_w^*c_1(\Ocal_{\PP^1_w}(1))$.

Let $\calm_{N;(m_z,m_w);k}(\F_0)$ be the moduli space of isomorphism
classes $[\cale]$ of torsion-free sheaves $\cale$ on $\F_0$ with
topological Chern invariants as in (\ref{topinvtsE}), where
$d=m_z\,\xi_z+m_w\,\xi_w$, and non-trivial framings along the two
independent ``directions at infinity'' prescribed by two fixed
isomorphisms $\cale\big|_{\ell_z}\cong W\otimes\Ocal_{\PP^1_z}(m_z)$ and
$\cale\big|_{\ell_w}\cong W\otimes\Ocal_{\PP^1_w}(m_w)$, where $W$ is a
fixed $N$-dimensional complex vector space. We will say that 
such a sheaf is ``trivialized at infinity'' if it is equipped with two
isomorphisms $\cale\big|_{\ell_z}\cong W\otimes\Ocal_{\PP^1_z}$ and
$\cale\big|_{\ell_w}\cong W\otimes\Ocal_{\PP^1_w}$. The moduli space of
trivialized sheaves is denoted
$\calm_{N,k}(\F_0):=\calm_{N;(0,0);k}(\F_0)$. These moduli spaces
carry an obvious induced action of the torus
$\tilde T= T^2\times T^N$, and the following formal
arguments show that the instanton counting in these moduli spaces is
the same as before.

\begin{proposition}
There is a natural $\tilde T$-equivariant birational equivalence
between the moduli spaces
$$
\calm_{N,k}(\F_0)~\cong~\calm_{N,k}\big(\C^2\big) \ .
$$
\label{MNkF0C2prop}\end{proposition}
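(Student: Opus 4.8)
The guiding idea is that $\F_0=\PP^1_z\times\PP^1_w$ and $\PP^2$ are two smooth $\tilde T$-toric compactifications of one and the same affine plane: on $\F_0$ one has $\C^2=\F_0\setminus(\ell_z\cup\ell_w)$, whereas $\calm_{N,k}(\C^2)$ is built from $\overline C=\PP^2$ with $\C^2=\PP^2\setminus\ell_\infty$, and in both cases the residual $T^2\subset\tilde T$ restricts to the standard torus on $\C^2$. Since the framing conditions force every $\cale$ in either moduli space to be locally free and trivialized in a neighbourhood of its boundary divisor, all of the nontrivial structure of $\cale$ (its singular locus, hence the $k$ units of instanton charge) is concentrated in the interior $\C^2$. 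The plan is therefore to build the birational equivalence by restricting to $\C^2$ and re-extending across the boundary of the other compactification.

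First I would define on each side the restriction $[\cale]\mapsto[\cale|_{\C^2}]$, landing in torsion-free sheaves on $\C^2$ of rank $N$ with $-\int\ch_2=k$ together with the germ of trivialization along infinity induced by the framing. To invert it, one extends a given $\C^2$-sheaf across the boundary: over the locus where the sheaf is locally free and trivial along the relevant $\tilde T$-invariant curve class, the extension across a smooth divisor with prescribed trivialization is unique by a Hartogs/reflexive-hull argument, so the prescribed framing on $\ell_\infty$ (respectively on $\ell_z$ and $\ell_w$) determines the extended sheaf uniquely. This produces mutually inverse $\tilde T$-equivariant isomorphisms between Zariski-dense open subsets; equivalently, both restriction maps identify a common dense open subset with the moduli of stable ADHM data $(B_1,B_2,I,J)$ satisfying $\mu_c=0$ modulo $GL(V)$ of Section~\ref{ADHMC2}.

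A $\tilde T$-equivariant isomorphism between dense open subsets is by definition a $\tilde T$-equivariant birational equivalence, so it remains only to verify density of this common open locus on each side. Here I would invoke the available dimension counts: $\calm_{N,k}(\C^2)$ is smooth and irreducible of complex dimension $2Nk$ (the affine-plane case of Section~\ref{modspframed}, i.e. the ADHM construction), while the analogous computation gives that $\calm_{N,k}(\F_0)=\calm_{N;(0,0);k}(\F_0)$ is smooth of the same dimension $2Nk$, since $d=0$ forces $d^2=0$ (compare the Lemma of Section~\ref{Stability}). Equality of dimension reduces density to showing that the locus where extension across the boundary fails has positive codimension. Equivariance is automatic: $T^2$ preserves the inclusions $\C^2\hookrightarrow\F_0$ and $\C^2\hookrightarrow\PP^2$, and the framing torus $T^N$ acts only on the trivialization, which is matched under the correspondence. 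As a consistency check, the induced bijection sends both $\calm_{N,k}(\F_0)^{\tilde T}$ and $\calm_{N,k}(\C^2)^{\tilde T}$ to the $N$-tuples of Young diagrams $\vec\lambda$ with $|\vec\lambda|=k$ supported at the single interior $T^2$-fixed point.

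The main obstacle is precisely the reconciliation of the two framings on the disjoint lines $\ell_z,\ell_w$ of $\F_0$ (meeting in the single corner $\ell_z\cap\ell_w$) with the one framing on the smooth line $\ell_\infty\subset\PP^2$. Because $\F_0\not\cong\PP^2$ — they are related only by a blow-up at an interior point followed by two blow-downs through $\F_1$ — the extension across the boundary cannot be carried out globally, but only over the open locus where the $\C^2$-sheaf stays locally free and trivial along the curve class that becomes the opposite line at infinity. Pinning down this locus, showing that its complement has positive codimension, and checking that the two extension procedures are canonical and mutually inverse there, is the technical heart of the argument; the equality of the two dimensions computed above is exactly what makes \emph{birational} equivalence, rather than an isomorphism, the correct and expected conclusion.
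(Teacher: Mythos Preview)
Your approach is genuinely different from the paper's. The paper builds the birational equivalence directly on the compactifying surfaces rather than through the $\C^2$ interior: it realises $\PP^2$ and $\F_0$ as two contractions of a common roof $\widehat{\widehat\PP}{}^2$ (the blowup of $\PP^2$ at two points on $\ell_\infty$, which is simultaneously the blowup of $\F_0$ at the corner $\ell_z\cdot\ell_w$), and then transports sheaves across via the Fourier--Mukai functors $q_*p^*$ and $p_*q^*$. This gives a concrete construction of the rational map without any appeal to dimension or smoothness of the moduli spaces, and $\tilde T$-equivariance is immediate because the roof is itself toric.

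Your restrict-and-extend strategy is a reasonable heuristic, but as written it has two concrete gaps. First, the dimension count you invoke is not the one you want: the Lemma of Section~\ref{Stability} computes the \emph{virtual} dimension of the \emph{unframed} moduli space, which for $\F_0$ with $d=0$ gives $2Nk-(N^2-1)\chi(\Ocal_{\F_0})=2Nk-N^2+1$, not $2Nk$; the correct framed dimension $2Nk$ is Proposition~\ref{MNkF0smoothprop}, which in the paper's logical order comes \emph{after} the present statement, so you would need to reorder or reprove it independently. Second, and more seriously, you correctly identify the technical heart --- that the bad locus (where re-extension across the opposite boundary fails, or where the two framings cannot be reconciled at the corner) has positive codimension, and that the two extension procedures are canonical and mutually inverse on its complement --- but you do not actually carry this out. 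Flagging it as ``the technical heart of the argument'' is not the same as supplying it; without this step you have motivated why a birational map ought to exist but have not exhibited one. The paper's roof construction sidesteps both issues at once.
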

\begin{proof}
Represent $\F_0$ as a divisor in $\PP^2\times\PP^1$ via the embedding
$\PP_z^1\hookrightarrow\PP^2$, $[z_0,z_1]\mapsto
[z_0,z_1,z_1]$. Let $\widehat{\widehat\PP}{}^2$ be the surface
obtained as the blowup $p$ of a pair of points on the line at infinity
$\ell_\infty\subset\PP^2$ (see Section~\ref{modspframed}). Note that
$\ell_\infty$ is disjoint from the image of the line $\ell_z$. Then
there is a correspondence diagram
$$
\xymatrix{
 & \widehat{\widehat\PP}{}^2 \ar[ld]_p \ar[rd]^q & \\
\PP^2 & & \F_0
}
$$
where $q$ is the blowup of the intersection point
$\ell_z\cdot\ell_w$ on $\F_0=\PP^1_z\times\PP^1_w$. The corresponding
Fourier--Mukai functors $q_*\,p^*$ and $p_*\,q^*$ determine
equivalences on the categories of coherent sheaves over $\PP^2$ and
$\F_0$, and hence induce mutually inverse rational maps between the
corresponding sets of isomorphism classes of trivially framed torsion
free sheaves.
\end{proof}

\begin{proposition}
For each fixed $(m_z,m_w)\in\Z^2$, there is a natural
$\tilde T$-equivariant bijection between the moduli spaces 
$$
\calm_{N,k}(\F_0)~\cong~\calm_{N\,;\,
(N\,m_z,N\,m_w)\,;\,k(m_z,m_w)}(\F_0)
$$
where
\beq
k(m_z,m_w)\=k-N\,m_z\,m_w \ .
\label{km1m2def}\eeq
\label{Youngshiftprop}\end{proposition}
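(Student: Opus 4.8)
The plan is to realize the asserted bijection concretely through the operation of tensoring by the line bundle $\Ocal_{\F_0}(m_z,m_w)$ defined in (\ref{bigrading}). I would define
$$
\Phi\,:\,\calm_{N,k}(\F_0)~\longrightarrow~\calm_{N\,;\,(N\,m_z,N\,m_w)\,;\,k(m_z,m_w)}(\F_0) \ , \qquad [\cale]~\longmapsto~\big[\cale\otimes\Ocal_{\F_0}(m_z,m_w)\big] \ .
$$
Since $\Ocal_{\F_0}(m_z,m_w)$ is locally free, tensoring is exact and preserves both torsion-freeness and local freeness in a neighbourhood of the lines at infinity, so $\Phi$ lands among admissible sheaves; its inverse is manifestly tensoring by $\Ocal_{\F_0}(-m_z,-m_w)$. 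The content of the proof is then to check that $\Phi$ carries the topological invariants and the framing data of the source to those of the target, and that it intertwines the two $\tilde T$-actions.

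First I would verify the Chern invariants. Using multiplicativity of the Chern character together with the intersection relations $\xi_z^2=\xi_w^2=0$ and $\xi_z\wedge\xi_w=[\mathrm{pt}]$ recorded above, one has $\ch(\Ocal_{\F_0}(m_z,m_w))=1+(m_z\,\xi_z+m_w\,\xi_w)+m_z\,m_w\,[\mathrm{pt}]$. Multiplying by $\ch(\cale)=N-k\,[\mathrm{pt}]$ (recall $d=0$ for the source) gives rank $N$, first Chern class $N\,m_z\,\xi_z+N\,m_w\,\xi_w$, and $\int_{\F_0}\ch_2(\cale\otimes\Ocal_{\F_0}(m_z,m_w))=N\,m_z\,m_w-k$, so that $-\int\ch_2=k-N\,m_z\,m_w=k(m_z,m_w)$ exactly as required by (\ref{km1m2def}).

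Next I would check the framings. Restricting (\ref{bigrading}) to $\ell_z\cong\PP^1_z$, the factor $p_w^*\Ocal_{\PP^1_w}(m_w)$ pulls back along the constant map $p_w|_{\ell_z}$ and is trivial, whence $\Ocal_{\F_0}(m_z,m_w)\big|_{\ell_z}\cong\Ocal_{\PP^1_z}(m_z)$, and symmetrically on $\ell_w$. A sheaf trivialized at infinity, with $\cale|_{\ell_z}\cong W\otimes\Ocal_{\PP^1_z}$, therefore acquires the framing $(\cale\otimes\Ocal_{\F_0}(m_z,m_w))\big|_{\ell_z}\cong W\otimes\Ocal_{\PP^1_z}(m_z)$ with the same fixed vector space $W$, which is precisely the framing datum prescribed for $\calm_{N\,;\,(N\,m_z,N\,m_w)\,;\,k(m_z,m_w)}(\F_0)$, and likewise along $\ell_w$. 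Hence $\Phi$ is a well-defined bijection on isomorphism classes.

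Finally I would establish the $\tilde T$-equivariance, which is the one step requiring care. Because $\Ocal_{\F_0}(m_z,m_w)$ is a toric line bundle, it admits a $T^2$-equivariant linearization and its isomorphism class is invariant under the automorphisms $F_{t_z,t_w}$ of (\ref{Ft1t2def}), so $F_{t_z,t_w}^*(\cale\otimes\Ocal_{\F_0}(m_z,m_w))\cong(F_{t_z,t_w}^*\cale)\otimes\Ocal_{\F_0}(m_z,m_w)$ and $\Phi$ commutes with the pulled-back $T^2$-action. The $T^N$-action affects only the framing through the fixed space $W$, on which tensoring acts trivially, so $\Phi$ is likewise $T^N$-equivariant. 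The main subtlety is to pin down the equivariant structure on $\Ocal_{\F_0}(m_z,m_w)$ compatibly with the chosen framing isomorphisms at both $\ell_z$ and $\ell_w$, so that the two framings transform consistently; once this linearization is fixed, equivariance for the full torus $\tilde T=T^2\times T^N$ follows.
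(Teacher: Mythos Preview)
Your proof is correct and follows essentially the same approach as the paper: define the bijection by tensoring with $\Ocal_{\F_0}(m_z,m_w)$, verify the Chern invariants via multiplicativity of $\ch$, and note that the inverse is tensoring with $\Ocal_{\F_0}(-m_z,-m_w)$. You are in fact more thorough than the paper, which simply asserts that the framings correspond and does not explicitly address the $\tilde T$-equivariance claimed in the statement; your discussion of the restriction of the line bundle to $\ell_z$, $\ell_w$ and of the equivariant linearization fills in these points.
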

\begin{proof}
Given $[\cale]\in\calm_{N,k}(\F_0)$, define
\beq
\cale(m_z,m_w) \=\cale\otimes \Ocal_{\F_0}(m_z,m_w) \ .
\label{Em1m2def}\eeq
By (\ref{topinvtsE}) (with $d=0$), one has
\beq
\ch(\cale) \=N-k\,\xi_z\wedge\xi_w \ ,
\label{chExi}\eeq
and using multiplicativity of the Chern character we compute
\bea
\ch\big(\cale(m_z,m_w)\big) &=& (N-k\,\xi_z\wedge\xi_w)\wedge\big(1+m_z\,
\xi_z\big)\wedge\big(1+m_w\,\xi_w\big) \nonumber\\[4pt]
&=& N+N\,\big(m_z\,\xi_z+m_w\,\xi_w\big)+
(N\,m_z\,m_w-k)\,\xi_z\wedge\xi_w \ .
\label{chEmcompute}\eea
This therefore gives a map $[\cale]\mapsto[\cale(m_z,m_w)]$ on
$\calm_{N,k}(\F_0)\to \calm_{N;(N\,m_z,N\,m_w);k(m_z,m_w)}(\F_0)$. An
identical calculation shows that the map $\cale(m_z,m_w)\mapsto
\cale(m_z,m_w)\otimes\Ocal_{\F_0}(-m_z,-m_w)$ is its inverse. These
maps clearly induce bijections between $(m_z,m_w)$-framings and
trivializations at infinity.
\end{proof}

Proposition~\ref{MNkF0C2prop} shows that an instanton gauge bundle $\cale$
with trivial asymptotics on $\F_0$ can again be regarded as an element
$[\cale]\in\calm_{N,k}(\C^2)$. For fixed $m_z,m_w\in\Z$,
Proposition~\ref{Youngshiftprop} shows that the counting of
torsion-free sheaves in the moduli spaces $\calm_{N,k}(\F_0)$ and
$\calm_{N;(N\,m_z,N\,m_w);k(m_z,m_w)}(\F_0)$ coincide. In other words,
the number of instantons with fixed non-trivial asymptotics matches
the Young tableau count, as the number of sheaves (\ref{Em1m2def}) is
independent of the integers
$m_z,m_w$. Proposition~\ref{Youngshiftprop} also reproduces the
formula~\cite[Sec.~4.2.2]{CK-PS} for the renormalized volume of Young tableaux (reproduced
above for $N=1$) as the induced shift in instanton charge in
(\ref{km1m2def}) of the sheaves~(\ref{Em1m2def}).

The bijection of Proposition~\ref{MNkF0C2prop} is not necessarily a
diffeomorphism. In fact, we have not yet shown that
$\calm_{N,k}(\F_0)$ is a smooth variety. Naively, one may try to argue
this by using the formalism of ref.~\cite{GL}, as our sheaves
are trivialized on the divisor $\ell_\infty=\ell_z+\ell_w$ which has
positive self-intersection $\ell_\infty\cdot\ell_\infty=2$, and
$\F_0\setminus\ell_\infty\cong\C^2$. However, this divisor is not
irreducible, in particular it is not a $\PP^1$. Note that formally
applying the arguments which led to (\ref{ZXpqz}) in the limit $p=0$
would produce the partition function
\beq
Z_{U(N)}^{\F_0}(q) \=\big(Z_{U(N)}^{\C^2}(q)\big)^2 \ ,
\label{ZF0q}\eeq
as the Hirzebruch surface has $n=\chi(\F_0)-2=2$. Although the generating
function (\ref{ZF0q}) is ``doubled'', the instanton counting problems
on $\F_0$ and on $\C^2$ are nevertheless identical. A rigorous
derivation of the formula (\ref{ZF0q}) using the techniques of
Section~\ref{modspframed} follows once we establish that the equivalence of
Proposition~\ref{MNkF0C2prop} is an isomorphism of underlying smooth
varieties. Recall from Section~\ref{modspframed} that the complex
dimension of the instanton moduli space $\calm_{N,k}(\C^2)$ is
$2N\,k$.

\begin{proposition}
The moduli space $\calm_{N,k}(\F_0)$ is a smooth quasi-projective
variety of complex dimension $2N\,k$.
\label{MNkF0smoothprop}\end{proposition}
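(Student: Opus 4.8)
The plan is to establish smoothness and the dimension count directly from the deformation theory of framed sheaves, rather than trying to upgrade the birational map of Proposition~\ref{MNkF0C2prop} to an isomorphism. The moduli space $\calm_{N,k}(\F_0)$ exists as a quasi-projective scheme by the framed-module construction of Huybrechts--Lehn~\cite{HL}, with framing sheaf supported on $\ell_\infty=\ell_z+\ell_w$, which disposes of the quasi-projectivity assertion. As recalled in Section~\ref{modspframed}, the Zariski tangent space at $[\cale]$ is $\Ext^1_{\Ocal_{\F_0}}(\cale,\cale(-\ell_\infty))$ and the obstruction lives in $\Ext^2_{\Ocal_{\F_0}}(\cale,\cale(-\ell_\infty))$, the twist by $-\ell_\infty$ enforcing the trivialization at infinity. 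It therefore suffices to prove the two vanishing statements $\Ext^0=\Ext^2=0$. The subtlety, and the reason one cannot simply quote~\cite{GL}, is that here $\ell_\infty$ is the \emph{reducible} divisor $\ell_z+\ell_w$ rather than a single $\PP^1$.

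The heart of the argument is the vanishing $\Hom_{\Ocal_{\F_0}}(\cale,\cale(-\ell_\infty))=0$. First I would restrict a putative nonzero $\phi\colon\cale\to\cale(-\ell_\infty)$ to the two rulings. Using $\ell_z\cdot\ell_z=\ell_w\cdot\ell_w=0$ and $\ell_z\cdot\ell_w=1$ one finds $\Ocal_{\F_0}(-\ell_\infty)\big|_{\ell_z}\cong\Ocal_{\PP^1}(-1)$ and likewise on $\ell_w$, so that the trivializations $\cale|_{\ell_z}\cong W\otimes\Ocal_{\PP^1_z}$ and $\cale|_{\ell_w}\cong W\otimes\Ocal_{\PP^1_w}$ force $\phi|_{\ell_z}$ and $\phi|_{\ell_w}$ to be global sections of $\End(W)\otimes\Ocal_{\PP^1}(-1)$, hence zero. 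The vanishing on each line lets $\phi$ factor through the kernel of restriction, i.e. successively through $\cale(-\ell_\infty)(-\ell_z)$ and $\cale(-\ell_\infty)(-\ell_w)$, yielding a factorization through $\cale(-2\,\ell_\infty)$; iterating, and noting that the degrees on both rulings remain negative by the same intersection numbers, shows $\phi$ factors as a nonzero map $\cale\to\cale(-m\,\ell_\infty)$ for \emph{every} $m\geq1$. Since $\ell_\infty$ is ample on $\F_0$ (its self-intersection is $2>0$ and it meets both rulings positively), Serre's theorem gives $\Hom(\cale,\cale(-m\,\ell_\infty))=H^0\big(\mathcal{H}om(\cale,\cale)(-m\,\ell_\infty)\big)=0$ for $m\gg0$, a contradiction; hence $\phi=0$.

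With this in hand the remaining steps are routine. Serre duality on $\F_0$, together with $K_{\F_0}=-2\ell_z-2\ell_w=-2\ell_\infty$, gives $\Ext^2(\cale,\cale(-\ell_\infty))\cong\Hom(\cale,\cale(K_{\F_0}+\ell_\infty))^*=\Hom(\cale,\cale(-\ell_\infty))^*$, which vanishes by the previous step; so the obstruction space is trivial and $\calm_{N,k}(\F_0)$ is smooth. As $\Ext^0=\Ext^2=0$, the tangent dimension is constant and equals $-\chi(\cale,\cale(-\ell_\infty))$, which I would evaluate by a Hirzebruch--Riemann--Roch computation parallel to the Lemma of Section~\ref{Stability}: inserting $\ch(\cale)=N-k\,\xi_z\wedge\xi_w$ (so that $d=0$) and $\Td(\F_0)$ with $\chi(\Ocal_{\F_0})=1$ yields $\chi(\cale,\cale(-\ell_\infty))=-2N\,k$, hence $\dim_\C\calm_{N,k}(\F_0)=2N\,k$, consistent with the general framed formula $2N\,k+d^2$ at $d=0$. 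The main obstacle is precisely the $\Hom$-vanishing for the reducible line at infinity; everything else is either cited from the framed-sheaf formalism or an instance of the Chern-character bookkeeping already carried out above.
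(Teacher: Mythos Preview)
Your proof is correct and, at the structural level, matches the paper's: both obtain quasi-projectivity from the framed-module formalism of~\cite{HL}, establish smoothness by showing $\Ext^0_{\Ocal_{\F_0}}(\cale,\cale(-1,-1))=\Ext^2_{\Ocal_{\F_0}}(\cale,\cale(-1,-1))=0$, and then extract the dimension $2Nk$ from the same Hirzebruch--Riemann--Roch computation.

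The real difference is in how the vanishing is obtained. The paper simply \emph{cites}~\cite[Sec.~5]{BGK} for both $\Ext^0$ and $\Ext^2$ and moves straight to the Riemann--Roch calculation. You instead supply a self-contained argument: the iterative restrict-and-factor trick pushes any nonzero $\phi\in\Hom(\cale,\cale(-\ell_\infty))$ through $\cale(-m\ell_\infty)$ for every $m\geq1$, and ampleness of $\ell_\infty=\ell_z+\ell_w$ (indeed $\Ocal_{\F_0}(1,1)$ is very ample) together with torsion-freeness of $\mathcal{H}om(\cale,\cale)$ forces $H^0$ to vanish for $m\gg0$; Serre duality with $K_{\F_0}=-2\ell_\infty$ then gives $\Ext^2\cong(\Ext^0)^*=0$. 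Your route directly confronts the reducible-divisor issue you correctly flagged as the obstruction to quoting~\cite{GL}, whereas the paper offloads this to~\cite{BGK}. The paper's path is shorter; yours is more transparent and would adapt more readily to other surfaces with a reducible framing locus along which the sheaf is trivialized. One small point worth tightening: the $H^0$ vanishing for large negative twists is not Serre vanishing per se but follows from torsion-freeness of $\mathcal{H}om(\cale,\cale)$ (a nonzero section would give an injection $\Ocal(m\ell_\infty)\hookrightarrow\mathcal{H}om(\cale,\cale)$, impossible for large $m$); you implicitly use this, and it would be worth saying so.
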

\begin{proof}
As usual, the trivialization condition at infinity guarantees Gieseker
semistability and hence quasi-projectivity, as discussed in Section~\ref{Stability} Using the
divisor $\ell_\infty=\ell_z+\ell_w$ of $\F_0$, one constructs the
framed moduli space $\calm_{N,k}(\F_0)$ as in ref.~\cite{HL} with tangent
spaces $\Ext_{\Ocal_{\F_0}}^1(\cale,\cale(-\ell_\infty))$ and obstruction spaces given by
$\Ext_{\Ocal_{\F_0}}^2(\cale,\cale(-\ell_\infty))$. By ref.~\cite[Sec.~5]{BGK}, one has
$$
\Ext_{\Ocal_{\F_0}}^0\big(\cale\,,\,\cale(-1,-1)\big)\= \Ext_{\Ocal_{\F_0}}^2\big(\cale\,,\,\cale(-1,-1)\big)\=0
$$
and hence $\calm_{N,k}(\F_0)$ is smooth. The dimension of the tangent
spaces is thus equal to minus the Euler characteristic
$-\chi(\cale,\cale(-1,-1))$, which for $\cale$ locally free may be calculated by
using the Hirzebruch--Riemann--Roch theorem to write
\beq
\chi\big(\cale\,,\,\cale(-1,-1)\big) \=
\int_{\F_0}\,\ch\big(\cale^\vee\otimes \cale(-1,-1)\big)\wedge{\rm Td}(\F_0)
\ .
\label{chiEE1}\eeq
Using (\ref{chExi}) and (\ref{chEmcompute}) one computes the Chern
character
\bea
\ch\big(\cale^\vee\otimes \cale(-1,-1)\big)&=&\ch\big(\cale^\vee\,\big)\wedge
\ch\big(\cale(-1,-1)\big) \nonumber\\[4pt] &=&
\big(N-k\,\xi_z\wedge\xi_w\big)\wedge\big(N-N\,(\xi_z+\xi_w)+(N-k)\,
\xi_z\wedge\xi_w\big) \nonumber\\[4pt] &=&
N^2-N^2\,(\xi_z+\xi_w)+\big(N^2-2N\,k\big)\,\xi_z\wedge\xi_w \ . 
\nonumber\eea
The Todd characteristic class and holomorphic Euler characteristic are
given by (\ref{ToddC}) and (\ref{chiOcalC}) with $C=\F_0$, and we may
thus write (\ref{chiEE1}) as
\bea
\chi\big(\cale\,,\,\cale(-1,-1)\big)&=& \int_{\F_0}\,\Big(\,\frac{N^2}{12}\,
\big(c_1(\F_0)\wedge c_1(\F_0)+c_2(\F_0)\big)-\frac{N^2}2\,
(\xi_z+\xi_w)\wedge c_1(\F_0)\nonumber \\ && \qquad\qquad +\,
\big(N^2-2N\,k\big)\,\xi_z\wedge\xi_w\,\Big) \nonumber\\[4pt] &=&
N^2\,\chi(\Ocal_{\F_0})-\frac{N^2}2\,\int_{\F_0}\,c_1(\F_0)\wedge
(\xi_z+\xi_w)+N^2-2N\,k \ .
\label{chiEE1calc}\eea
One has $\chi(\Ocal_{\F_0})=1$ since $b_2(\F_0)=2=h^{1,1}(\F_0)$. Since
$\xi_z,\xi_w\in H^2(\F_0,\Z)$ are the Poincar\'e duals of
$[\ell_z],[\ell_w]\in H_2(\F_0,\Z)$, with $\ell_z,\ell_w\cong\PP^1$
and $\ell_z\cdot\ell_z=\ell_w\cdot\ell_w=0$, the remaining integral in
(\ref{chiEE1calc}) may be computed as
\beq
\int_{\F_0}\,c_1(\F_0)\wedge(\xi_z+\xi_w)\=2\,\int_{\PP^1}\,
c_1\big(\PP^1\big)\=4 \ ,
\label{c1P1int}\eeq
where in the last step we used the fact that the holomorphic
tangent bundle of $\PP^1$ can be identified with
$\Ocal_{\PP^1}(2)$. Putting everything together we find
\beq
\chi\big(\cale\,,\,\cale(-1,-1)\big) \=-2N\,k
\nonumber\eeq
and the result follows when $\cale$ is a bundle. If $\cale$ is not locally
free, then we simply consider a locally free resolution $\cale^\bullet\to
\cale\to0$ and use additivity of the Chern characteristic class.
\end{proof}

In a similar vein, the equivalence established in
Proposition~\ref{Youngshiftprop} as it stands is only a bijective
correspondence. The moduli space for non-trivial
asymptotics is an example of the moduli spaces of ``decorated
sheaves'' described in ref.~\cite[Sec.~4.B]{HLbook}, with the
$(m_z,m_w)$-framed sheaves on $\F_0$ providing examples of semistable
framed modules. The moduli space
$\calm_{N;(N\,m_z,N\,m_w);k(m_z,m_w)}(\F_0)$ is thus a projective
scheme with a universal family, i.e. it is fine, and so possesses a
universal sheaf. We will now argue that the moduli spaces
$\calm_{N;(N\,m_z,N\,m_w);k(m_z,m_w)}(\F_0)$ are smooth and
diffeomorphic to one another for all $m_z,m_w\in\Z$. Then the
computations of Section~\ref{modspframed} can be repeated by
integrating over $\calm_{N;(N\,m_z,N\,m_w);k(m_z,m_w)}(\F_0)$
using (\ref{km1m2def}) and G\"ottsche's formula to get the desired perpendicular partition
function
$$
P_{U(N)}^{m_z,m_w}(q)\= q^{-N\,m_z\,m_w}~\widehat\eta(q)^{-2N} \ .
$$
For $N=1$, this yields a rigorous derivation of the vertex rules of
ref.~\cite{CK-PS}.

Let $\tilde\calm(\F_0)$ be the moduli space of isomorphism classes of pairs $\tilde
\cale=(\cale,\alpha)$, where $\cale$ is an $(m_z,m_w)$-framed torsion free sheaf
on $\F_0$ and $\alpha:\Ocal_{\F_0}(m_z,m_w)^{\oplus N}\to\cale$ is the
  surjective morphism induced by the framing. Consider the family of
  pairs $\tilde \calf=(\calf,\beta)$ consisting of a coherent sheaf $\calf$ on
  $\F_0$ together with a homomorphism
  $\beta:\Ocal_{\F_0}(m_z,m_w)^{\oplus r}\to\calf$ for some
  $r\in\N$. These pairs define objects of an abelian category
  $\coh_f(\F_0)$ with the obvious morphisms between pairs induced by
  morphisms of coherent sheaves. Any $(m_z,m_w)$-framed torsion free
  sheaf $\cale$ on $\F_0$ clearly defines an object of $\coh_f(\F_0)$,
  with $r=N$ and $\ch(\calf)=\ch(\cale)$.

Given the abelian category $\coh_f(\F_0)$, one can derive functors,
and hence compute sheaf cohomology in $\coh_f(\F_0)$. We denote the
corresponding $\Ext$-functors by $\Ext^i_f$. Then $\Ext_f^2(\tilde
\cale,\tilde \cale)$ is the obstruction space for $\tilde\calm(\F_0)$ at
$\tilde \cale$, while $\Ext_f^1(\tilde \cale,\tilde \cale)$ is the tangent space
to $\tilde\calm(\F_0)$ at $\tilde \cale$. Generally, for any two objects
$\tilde \calf,\tilde \cG$ of $\coh_f(\F_0)$, the group $\Ext_f^1(\tilde
\calf,\tilde \cG)$ can be computed in a purely algebraic way in terms of
equivalence classes of extensions
\beq
\xymatrix{
\Ocal_{\F_0}(m_z,m_w)^{\oplus s}~\ar[d] \ar[r]&~
\Ocal_{\F_0}(m_z,m_w)^{\oplus(s+r)}~\ar[d] \ar[r]&~
\Ocal_{\F_0}(m_z,m_w)^{\oplus r}\ar[d] \\
\tilde \cG~\ar[r]&~\tilde \cK~\ar[r]&~\tilde \calf \ .
}
\label{Extf1comp}\eeq
This definition of $\Ext_f^1(\tilde \calf,\tilde \cG)$ is equivalent to the
usual definition in terms of a twisted Dolbeault complex, and in
particular it gives the deformation theory of the moduli space
$\tilde\calm(\F_0)$. Furthermore, there is a spectral sequence
connecting it to the ordinary $\Ext$-groups $\Ext_{\Ocal_{\F_0}}^i(\calf,\cG)$.

The line bundle $\Ocal_{\F_0}(m_z,m_w)$ is a projective
$\Ocal_{\F_0}$-module and is therefore flat, whence the
tensor product map defined by (\ref{Em1m2def}) preserves exact
sequences such as (\ref{Extf1comp}). It follows that the tangent
spaces $\Ext_f^1(\tilde \cale,\tilde \cale)$ are the same for all
$m_z,m_w\in\Z$. Thus by Proposition~\ref{MNkF0smoothprop}, the
moduli spaces $\tilde\calm(\F_0)$ are all smooth and diffeomorphic to
one another. This is consistent with the fact that stability (either
slope stability of torsion free sheaves~\cite{Gieseker} or stability of
framed modules~\cite[Sec.~4.B]{HLbook}) is preserved by twisting with the
line bundles~$\Ocal_{\F_0}(m_z,m_w)$.

\subsubsection*{Universal sheaves}

We now construct the universal sheaf for the standard framing on $\F_0$, and compute its $\tilde T$-equivariant
character. Since the moduli space $\calm_{N,k}(\F_0)$ is
fine~\cite{HL}, there exists a universal sheaf
$\Ecal\to\F_0\times\calm_{N,k}(\F_0)$, i.e. a torsion-free sheaf 
$\Ecal$ such that $\Ecal\big|_{\F_0\times[\cale]}\cong \cale$ for every point
$[\cale]\in\calm_{N,k}(\F_0)$, unique up to tensor product with a (unique)
line bundle. There are two natural vector bundles over
$\calm_{N,k}(\F_0)$ associated to a universal sheaf $\Ecal$. Firstly,
there is the framing bundle $W=H^0(\F_0,\Ecal|_{\ell_\infty})$ of rank
$N$ given by the fibre at infinity. Secondly, there is the bundle $V$
of ``Dirac zero modes'', whose fibre at a given locally free sheaf
$[\cale]\in\calm_{N,k}(\F_0)$ restricted to
$\R^4\cong\C^2\cong\F_0\setminus\ell_\infty$ is the space of
$L^2$-solutions to the Dirac equation on $\R^4$ in the
background of the fundamental representation of the instanton gauge
field corresponding to $[\cale]\big|_{\C^2}$. It
is constructed explicitly as follows. Let $\pi_1$ and $\pi_2$ be the
canonical projections of $\F_0\times\calm_{N,k}(\F_0)$ onto the first
and second factors, respectively. Then the Dirac bundle is
$$
V \= R^1\pi_{2*}\big(\Ecal\otimes\pi_1^*\Ocal_{\F_0}(-1,-1)\big) \ ,
$$
where $R^1\pi_{2*}$ is the first right derived functor of the
pushforward functor $\pi_{2*}$. Its fibre over a point
$[\cale]\in\calm_{N,k}(\F_0)$ is the vector space $H^1(\F_0,\cale(-1,-1))$.

\begin{proposition}
The Dirac bundle $V$ is a vector bundle of rank $k$ over
$\calm_{N,k}(\F_0)$.
\label{Vtrivrank}\end{proposition}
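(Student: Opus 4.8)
The plan is to compute the fibres of $V$ cohomologically and then invoke the base-change theorem. Recall that the fibre of $V=R^1\pi_{2*}(\Ecal\otimes\pi_1^*\Ocal_{\F_0}(-1,-1))$ over a point $[\cale]\in\calm_{N,k}(\F_0)$ is $H^1(\F_0,\cale(-1,-1))$. The statement follows once I establish that (i) $H^0(\F_0,\cale(-1,-1))=H^2(\F_0,\cale(-1,-1))=0$ for every such $[\cale]$, so that $H^1$ is the only surviving cohomology group, and (ii) $\chi(\F_0,\cale(-1,-1))=-k$, so that the dimension $h^1$ is forced to equal $k$ independently of $[\cale]$. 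Grauert's continuity theorem (cohomology and base change) then gives that $R^1\pi_{2*}$ is locally free of the constant rank $k$: the universal sheaf $\Ecal$ is flat over $\calm_{N,k}(\F_0)$ since this moduli space is fine, and $\pi_2$ is proper, so constancy of $h^1$ is exactly the hypothesis needed.

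For the vanishing in (i) I would use the semistability supplied by the framing. As noted in Section~\ref{Stability}, the trivialization at infinity forces $\cale$ to be Gieseker semistable, hence slope-semistable, of rank $N$ and slope $0$ (since $c_1(\cale)=0$), with respect to the ample class $\ell_z+\ell_w$. Now $H^0(\F_0,\cale(-1,-1))=\Hom(\Ocal_{\F_0}(1,1),\cale)$, and a nonzero such map would have as its image a rank-one subsheaf of $\cale$ of the form $\Ocal_{\F_0}(1,1)\otimes\cali_Z$, whose slope is $(\xi_z+\xi_w)\cdot(\ell_z+\ell_w)=2>0$, contradicting semistability; hence $H^0=0$. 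For $H^2$ I would use Serre duality on the surface $\F_0$ together with $K_{\F_0}=\Ocal_{\F_0}(-2,-2)$ to identify $H^2(\F_0,\cale(-1,-1))^\vee\cong\Hom(\cale,\Ocal_{\F_0}(-1,-1))$; a nonzero map here would exhibit a quotient of $\cale$ (necessarily of slope $\geq0$) sitting inside $\Ocal_{\F_0}(-1,-1)$ (necessarily of slope $\leq-2$), again contradicting semistability, so $H^2=0$.

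For (ii) I would run Hirzebruch--Riemann--Roch exactly as in the proof of Proposition~\ref{MNkF0smoothprop}. Using $\ch(\cale)=N-k\,\xi_z\wedge\xi_w$ from (\ref{chExi}) and $\xi_z^2=\xi_w^2=0$, one gets $\ch(\cale(-1,-1))=N-N(\xi_z+\xi_w)+(N-k)\,\xi_z\wedge\xi_w$; pairing this with $\Td(\F_0)=1+(\xi_z+\xi_w)+\tfrac1{12}\big(c_1(\F_0)^2+c_2(\F_0)\big)$ and using $\chi(\Ocal_{\F_0})=1$, $\int_{\F_0}\xi_z\wedge\xi_w=1$, together with $\int_{\F_0}c_1(\F_0)\wedge(\xi_z+\xi_w)=4$ from (\ref{c1P1int}), yields $\chi(\F_0,\cale(-1,-1))=N-2N+(N-k)=-k$. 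Combined with (i) this gives $h^1=-\chi=k$ on every fibre, as required.

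The Chern-character bookkeeping and the appeal to base change are routine; the one genuinely load-bearing input is the cohomology vanishing, and there the only thing to watch is that the framing really does deliver slope-semistability with the strict slope inequalities for $\Ocal_{\F_0}(\pm1,\pm1)$ — this is precisely what makes both the $H^0$ argument and its Serre-dual $H^2$ counterpart go through. For sheaves that are not locally free one replaces $\cale$ by a locally free resolution $\cale^\bullet\to\cale\to0$ in the Riemann--Roch step, as elsewhere in this section, while the destabilizing-subsheaf arguments already apply verbatim to torsion-free $\cale$.
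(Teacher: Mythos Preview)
Your proof is correct and follows the same two-step structure as the paper: establish the vanishing $H^0(\F_0,\cale(-1,-1))=H^2(\F_0,\cale(-1,-1))=0$, then compute $\chi(\cale(-1,-1))=-k$ by Hirzebruch--Riemann--Roch exactly as in Proposition~\ref{MNkF0smoothprop}. The only substantive difference is that the paper obtains the vanishing by citing \cite[Sec.~5]{BGK}, whereas you supply a direct destabilizing-subsheaf/Serre-duality argument; your route is self-contained and makes the role of the framing-induced semistability explicit, while the citation keeps things short. You are also more careful than the paper in spelling out the Grauert/base-change step that turns ``constant $h^1$'' into ``$R^1\pi_{2*}$ is locally free''; the paper leaves this implicit.
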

\begin{proof}
By ref.~\cite[Sec.~5]{BGK}, one has
$$
H^0\big(\F_0\,,\,\cale(-1,-1)\big)\=H^2\big(\F_0\,,\,\cale(-1,-1)\big)\=0
$$
and hence $V$ is a vector bundle of rank equal to
$-\chi(\cale(-1,-1))$. For $\cale$ locally free, the Hirzebruch--Riemann--Roch
theorem, together with (\ref{chEmcompute}), (\ref{ToddC}) with
$C=\F_0$, (\ref{c1P1int}) and $\chi(\Ocal_{\F_0})=1$ give
\bea
{\rm rank}(V)&=&-\int_{\F_0}\,\ch\big(\cale(-1,-1)\big)\wedge\Td(\F_0)
\nonumber\\[4pt] &=&-\int_{\F_0}\,\big(N-N\,(\xi_z+\xi_w)+(N-k)\,
\xi_z\wedge\xi_w\big) \nonumber\\ && \qquad\qquad 
\wedge~\Big(1+\mbox{$\frac12$}\,c_1(\F_0)+\mbox{$\frac1{12}$}\,
\big(c_1(\F_0)\wedge c_1(\F_0)+c_2(\F_0)\big)\Big) \nonumber\\[4pt]
&=&-N\,\chi(\Ocal_{\F_0})+\frac N2\,\int_{\F_0}\,c_1(\F_0)\wedge
(\xi_z+\xi_w)+k-N \= k \ .
\nonumber\eea
Again, the statement for generic torsion free sheaves
$[\cale]\in\calm_{N,k}(\F_0)$ follows by considering a locally free
resolution $\cale^\bullet\to \cale\to0$. 
\end{proof}

By definition, (\ref{Ft1t2def}) and the construction of
Section~\ref{modspframed}, the $\tilde T$-equivariant characters of the framing
and Dirac bundles regarded as $\tilde T$-modules, at a fixed point in
$\calm_{N,k}(\F_0)^{\tilde T}$ labelled by an $N$-coloured Young tableau
$\vec \lambda=(\lambda_1,\dots,\lambda_N)$, are given by
\beq
\ch_{\tilde T}(W)(\vec \lambda)\=\sum_{l=1}^N\,e_l \qquad \mbox{and} \qquad
\ch_{\tilde T}(V)(\vec \lambda)\=\sum_{l=1}^N\,e_l~\sum_{(i,j)\in \lambda_l}\,
t_z^{i-1}\,t_w^{j-1}
\label{chTWVY}\eeq
where $e_l=\e^{a_l}$.
Let us now describe how the ADHM construction is modified in this
case. This has been worked out in ref.~\cite[Sec.~5]{BGK}. The linear
algebraic data is formally the same as for the analysis of framed instantons
on $\PP^2$~\cite{Nakajima}, defined by linear operators
$$
B_z~\in~\Hom(V,V) \ , \qquad
B_w~\in~\Hom(V,V) \ , \qquad
I~\in~\Hom(W,V) \qquad \mbox{and} \qquad
J~\in~\Hom(V,W) \ .
$$
Then any framed torsion free sheaf $[\cale]\in\calm_{N,k}(\F_0)$ can be
represented as the middle cohomology group of the complex
\beq
\xymatrix{
0~\ar[r] & ~V\otimes\Ocal_{\F_0}(-1,-1)~\ar[r]^{~\sigma} & ~
{\begin{matrix}
V\otimes\Ocal_{\F_0}(-1,0) \\ \oplus \\
V\otimes\Ocal_{\F_0}(0,-1) \\ \oplus \\
W\otimes\Ocal_{\F_0}
\end{matrix}}
~\ar[r]^{~~~~\tau} & ~ V
\otimes\Ocal_{\F_0}~\ar[r] & ~0
}
\label{F0ADHMcomplex}\eeq
which is exact at the first and last terms, where
$$
\sigma \=\begin{pmatrix}
z_0\,B_z-z_1 \\ w_0\,B_w-w_1 \\ z_0\,w_0\,J
\end{pmatrix}
\qquad \mbox{and} \qquad \tau \=\begin{pmatrix}
-(w_0\,B_w-w_1) & z_0\,B_z-z_1 & I
\end{pmatrix} \ .
$$

The virtual $\tilde T$-equivariant bundle defined by the cohomology
of the complex (\ref{F0ADHMcomplex}) gives a
representative of the isomorphism class of the universal sheaf $\Ecal$
in the $\tilde T$-equivariant K-theory group
$K_{\tilde T}^0(\F_0\times\calm_{N,k}(\F_0))$ as
\beq
\Ecal \=\big(\Ocal_{\F_0}\boxtimes W\big)~\oplus~
\big(S^-\ominus S^+\big)\boxtimes V \ ,
\label{Ecalvirtual}\eeq
where
$$
S^+\=\Ocal_{\F_0}(-1,-1)\oplus\Ocal_{\F_0} \qquad \mbox{and} \qquad
S^-\=\Ocal_{\F_0}(-1,0)\oplus\Ocal_{\F_0}(0,-1)
$$
are $\tilde T$-equivariant bundles over $\F_0$ which, after tensoring
with a line bundle of degree one, restrict to the
usual positive/negative chirality spinor bundles over
$\R^4\cong\C^2\cong\F_0\setminus\ell_\infty$. In the topologically twisted gauge theory, fermion fields become differential forms and these bundles are identified with the bundles of even/odd holomorphic forms over $\F_0$~\cite{BlauTh}. By (\ref{Ft1t2def}), the
holomorphic line bundles (\ref{bigrading}) have $\tilde T$-equivariant characters
$$
\ch_{\tilde T}\big(\Ocal_{\F_0}(m_z,m_w)\big) \=t_z^{-m_z}\,
t_w^{-m_w} \ ,
$$
and consequently
\beq
\ch_{\tilde T}\big(S^+\big)\=1+t_z\,t_w \qquad \mbox{and} \qquad
\ch_{\tilde T}\big(S^-\big)\=t_z+t_w \ .
\label{spinorbunschars}\eeq
Using (\ref{chTWVY}) and (\ref{spinorbunschars}), we may thus compute
the character of the universal sheaf (\ref{Ecalvirtual}) at a fixed
point in the instanton moduli space as
\bea
\ch_{\tilde T}(\Ecal)(\vec \lambda)&=&
\ch_{\tilde T}(W)(\vec \lambda)+\big(\ch_{\tilde T}(S^-)-
\ch_{\tilde T}(S^+)\big)~\ch_{\tilde T}(V)(\vec \lambda) \nonumber\\[4pt]
&=& \sum_{l=1}^N\,e_l\,\Big(1-(1-t_z)\,(1-t_w)\,
\sum_{(i,j)\in \lambda_l}\,t_z^{i-1}\,t_w^{j-1} \Big) \ ,
\label{chEcalF0}\eea
which coincides with the standard expression for instantons on
$\R^4$~\cite{NY}.

For completeness, let us record here the $\tilde T$-equivariant Chern character of the tangent bundle $T\calm_{N,k}(\F_0)$ over the instanton moduli space at the torus fixed points. As the only non-vanishing cohomology group of (\ref{adhmdefcomplex}) is
a model of the tangent space, it can be computed as the equivariant
index of this complex. For this, let
us recall some combinatorial definitions. Let $\lambda$ be a Young
diagram. Define the arm and leg lengths of a box $(i,j)\in\lambda$
respectively by
$$
A_\lambda(i,j)\= \varrho_i-j \qquad \mbox{and} \qquad L_\lambda(i,j)\= \varrho_j^t-i \
, 
$$
where $\varrho_i$ is the length of the $i$-th column of $\lambda$ and
$\varrho_j^t$ is the length of the $j$-th row of $\lambda$. At a fixed point $\vec\lambda$, the character can then be expressed 
after some algebra in terms of the characters of the representation as
$$
\ch_{\tilde T} \big(T \mathfrak{M}_{N,k}(\F_0) \big)(\vec\lambda) \= \sum_{l,m=1}^N \,e_l \,
e_m^{-1} ~ \Big(~ \sum_{(i,j) \in\lambda_l} \, t_z^{-L_{\lambda_l}(i,j)}\,
t_w^{A_{\lambda_m}(i,j) + 1} + \sum_{(i,j) \in \lambda_m} \,
t_z^{L_{\lambda_l}(i,j) + 1} \, t_w^{-A_{\lambda_m}(i,j)} ~ \Big) \ .
$$
Again this coincides with the standard result~\cite{BFMT,NY,Shadchin}.
As in ref.~\cite{LMN}, after toric localization one has
$$
\ch_{\tilde T}\big(T \mathfrak{M}_{N,k}(\F_0)\big) \= - \oint_{\F_0\times \calm_{N,k}(\F_0)}\, \ch_{\tilde T}(\mathfrak{E})\wedge
\ch_{\tilde T}\big(\mathfrak{E}^\vee\,\big)\wedge \mathrm{Td}_{\tilde T}(\F_0)
$$
at the $\tilde T$-fixed points $\vec\lambda$, up to a universal
perturbative contribution (the character of $W\otimes W^*$). This
expression formally generalizes to generic toric
surfaces~\cite{NekrasovICMP}, and is used to construct vertex gluing
rules below. From the
top Chern class one can also straightforwardly extract the equivariant Euler classes
$$
e_{\tilde T}\big(T\calm_{N,k}(\F_0)\big)(\vec\lambda) \= 
\prod_{l,m=1}^N\, n_{l,m}^{\vec\lambda} (\epsilon_z,\epsilon_w ,
\vec{a}) \ ,
$$
where
\bea
n_{l,m}^{\vec\lambda} (\epsilon_z,\epsilon_w , \vec{a}) &=&\prod_{(i,j)\in \lambda_l} \, \Big(a_m-a_l-L_{\lambda_m}(i,j)\,\epsilon_z+\big(
A_{\lambda_l}(i,j)+1\big)\,\epsilon_w \Big) \nonumber\\ && \times~
\prod_{(i',j'\,)\in \lambda_m}\,\Big(a_m-a_l +
\big(L_{\lambda_l(i',j'\,)}+1\big)\,\epsilon_z- A_{\lambda_m}(i',j'\,)\,\epsilon_w
\Big) \ . \nonumber
\end{eqnarray}

\subsubsection*{Framed modules on $\F_0$ from Dirac modules on $\PP^1$}

For the case of non-trivial $(m_z,m_w)$-framings, let
$$
\iota(m_z,m_w)\,:\,\calm_{N,k}(\F_0)~\longrightarrow~
\calm_{N\,;\,(N\,m_z,N\,m_w)\,;\,k(m_z,m_w)}(\F_0)
$$
be the isomorphism induced by the map $[\cale]\mapsto[\cale(m_z,m_w)]
=[\cale\otimes\Ocal_{\F_0}(m_z,m_w)]$, and let $\Ecal$ be a universal
sheaf on $\F_0\times\calm_{N,k}(\F_0)$. Then the torsion free sheaf
$$
\Ecal(m_z,m_w) \=\big(\id_{\F_0}\times\iota(m_z,m_w)^{-1}\big)^*(\Ecal)
$$
is a universal sheaf on
$\F_0\times\calm_{N;(N\,m_z,N\,m_w);k(m_z,m_w)}(\F_0)$. The equivariant character of $\Ecal(m_z,m_w)$ at the fixed points of the toric action modifies the decompositions (\ref{chTWVY}) to allow for ``propagators'' which appear along the edges $\PP^1$ of the toric diagram $\Delta(C)$ of a generic toric surface $C$. These modifications of the $\tilde T$-equivariant character of the
universal sheaf corresponding to the instanton sheaves
(\ref{Em1m2def}) are determined via shifts by the $\C^*$-equivariant character of
modules of solutions to the Dirac equation on $\PP^1$. We will now
describe these modules explicitly. As we have seen, the bundle
$\Ocal_{\PP^1}(m)$ of degree $m\in\Z$ over the Riemann sphere $\PP^1$
is the crucial ingredient in generating instantons with non-trivial
framings. After choosing a hermitean metric on the fibres of
$\Ocal_{\PP^1}(m)$, it is the holomorphic line bundle underlying the
standard Dirac monopole line bundle of topological charge $m$. The
form of the Dirac operator $\Dirac_m$ in the background of the
corresponding monopole gauge potential is well-known and can be
conveniently described following ref.~\cite{PS}.

Given homogeneous coordinates $[z_0,z_1]$ on $\PP^1$, let $y=z_1/z_0$
denote stereographic coordinates on the northern hemisphere. Then the
twisted Dirac operator in the monopole background of magnetic charge
$m\in\Z$ is given by
$$
\Dirac_{m} \=\begin{pmatrix}0&\Dirac^-_{m}\\\Dirac^+_{m}&0
\end{pmatrix} \ ,
$$
where
\bea
\Dirac^+_{m}&=&\frac1{2}\,\Big[\left(1+y\,\yb\right)\,
\frac\partial{\partial\yb}-
\frac12\,(m+1)\,y\Big] \ , \nonumber
\\[4pt] \Dirac^-_{m}&=&-\frac1{2}\,\Big[\left(1+y\,\yb\right)\,
\frac\partial{\partial y}+
\frac12\,(m-1)\,\yb\Big] \ .
\nonumber\eea
These operators act on sections of the chiral/antichiral spinor line
bundles associated to the twisted holomorphic spinor bundles
$\Ocal_{\PP^1}(\pm\,1)\otimes\Ocal_{\PP^1}(m)$. We are interested in
the subspaces of zero modes
\beq
S_m^\pm \=\ker\Dirac_m^\pm \ .
\label{Smpmdef}\eeq

The action of $t\in\C^*$ is implemented by the automorphism $F_t$ of
$\PP^1$ defined by
\beq
F_t(y,\yb) \=\big(t\,y\,,\,t^{-1}\,\yb\big) \ .
\label{FtP1}\eeq
The irreducible representations $\,\underline{T}\,^i\cong\C$ of $\C^*$ are labelled by
their weights $i\in\Z$ and are defined by $z\mapsto t\cdot z=t^i\,z$
for $t\in\C^*$ and $z\in\C$. The corresponding $\C^*$-eigenspace
decomposition of the modules (\ref{Smpmdef}) can be described for all
$m\in\Z$ as follows.
\begin{theorem}
The isotopical decompositions of the spinor modules $S_m^\pm$ over
$\PP^1$, as $\C^*$-modules, are given by
\bea
S_m^+&=&\bigoplus_{i=1}^{|m|}\,
\,\underline{T}\,^{i-1} \qquad \mbox{and} \qquad S_m^-\=\{0\} \quad
\mbox{for} \quad m<0 \ , \nonumber\\[4pt]
S_m^-&=&\bigoplus_{i=1}^m\,\,\underline{T}\,^{-(i-1)} 
\qquad \mbox{and} \qquad S_m^+\=\{0\}
\quad \mbox{for} \quad m>0 \ .
\label{SmpmP1decomps}\eea
\label{P1spinmoddecomp}\end{theorem}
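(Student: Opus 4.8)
The plan is to solve the two first-order equations $\Dirac_m^+\psi=0$ and $\Dirac_m^-\chi=0$ explicitly, impose global square-integrability over $\PP^1$, and then read off the $\C^*$-weights. First I would observe that the defining equation $\Dirac^+_m\psi=0$, namely $(1+y\yb)\,\partial_{\yb}\psi=\tfrac12(m+1)\,y\,\psi$, is equivalent to $\partial_{\yb}\big[(1+y\yb)^{-(m+1)/2}\,\psi\big]=0$, so that every local solution on the northern patch has the form $\psi=(1+y\yb)^{(m+1)/2}\,h(y)$ with $h$ holomorphic. An identical manipulation applied to $\Dirac^-_m\chi=0$ gives $\chi=(1+y\yb)^{(1-m)/2}\,\overline h(\yb)$ with $\overline h$ antiholomorphic. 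The real prefactors $(1+y\yb)^\bullet$ are manifestly invariant under the action (\ref{FtP1}), since $y\yb$ is fixed by $F_t$.

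Next I would determine which $h$ extend to a global section in $S_m^\pm$ of (\ref{Smpmdef}). Passing to the southern coordinate $\tilde y=1/y$ and using the transition function of the monopole line bundle $\Ocal_{\PP^1}(m)$ twisted by the spinor bundle $\Ocal_{\PP^1}(\pm1)$, the condition that $\psi$ (respectively $\chi$) be a smooth $L^2$ section over all of $\PP^1$ becomes, after stripping the invariant factor, the condition that $h$ (respectively $\overline h$) be a global holomorphic (antiholomorphic) section of $\Ocal_{\PP^1}(-m-1)$ (respectively $\Ocal_{\PP^1}(m-1)$). By Riemann--Roch on $\PP^1$, $H^0(\PP^1,\Ocal_{\PP^1}(n))$ is nonzero only for $n\ge0$, where it is spanned by the monomials $1,y,\dots,y^n$. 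Hence $S_m^+$ is nontrivial precisely when $-m-1\ge0$, i.e. $m<0$, with basis $\{(1+y\yb)^{(m+1)/2}\,y^{k}\}_{k=0}^{|m|-1}$, while $S_m^-=\{0\}$ in that range because $m-1<0$; the roles interchange for $m>0$, where $S_m^-$ has basis $\{(1+y\yb)^{(1-m)/2}\,\yb^{\,k}\}_{k=0}^{m-1}$ and $S_m^+=\{0\}$.

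Finally, the $\C^*$-structure follows at once: under (\ref{FtP1}) the coordinate $y$ carries weight $+1$ and $\yb$ carries weight $-1$, while the prefactors are invariant, so the basis vector indexed by $k$ in $S_m^+$ lies in $\underline{T}^{\,k}$ and that in $S_m^-$ lies in $\underline{T}^{\,-k}$. Reindexing by $i=k+1$ gives exactly (\ref{SmpmP1decomps}). I expect the main obstacle to be the regularity step: one must pin down the correct transition function (equivalently the precise power of $(1+y\yb)$ in the $L^2$ measure that makes $\Dirac_m$ self-adjoint) so that the truncation of $h$ to polynomials of degree $\le|m|-1$ is rigorously justified and the vanishing of the opposite chirality is guaranteed. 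This is most cleanly handled by the bundle-theoretic identification with $H^0(\PP^1,\Ocal_{\PP^1}(\mp m-1))$ rather than by a direct estimate of the integral, since the index theorem then fixes the dimensions $\dim S_m^+-\dim S_m^-=-m$ as a consistency check.
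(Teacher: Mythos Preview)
Your proposal is correct and follows essentially the same approach as the paper: solve the first-order equations locally to get $\psi=(1+y\yb)^{(m+1)/2}h(y)$ and $\chi=(1+y\yb)^{(1-m)/2}\overline h(\yb)$, impose regularity on both hemispheres to cut $h$ (resp.\ $\overline h$) down to polynomials of degree $\le|m|-1$ (resp.\ $\le m-1$), and read off the $\C^*$-weights. The only difference is cosmetic: the paper carries out the regularity step by writing down the unitary transition function $(y/\yb)^{m/2}$ and checking directly which monomials survive on the southern patch, whereas you package the same check as the identification $\ker\Dirac_m^\pm\cong H^0(\PP^1,\Ocal_{\PP^1}(\mp m-1))$ and invoke the standard description of global sections on $\PP^1$.
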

\begin{proof}
The solutions of the Dirac equation are given by $L^2$-solutions of
the differential equations
\beq
\Dirac_m^\pm\psi^\pm_m \=0
\label{Diracpmeqs}\eeq
for the spinors $\psi_m^\pm\in\ker\Dirac_m^\pm$. The line bundle
$\Ocal_{\PP^1}(m)$ has holomorphic transition function $y^m$
transforming sections from the northern hemisphere to the southern
hemisphere of $\PP^1$, which after unitary reduction to a hermitean
line bundle becomes $(y/\yb)^{m/2}$. By using these
transition functions, it is easy to see that the only solutions of the
equations (\ref{Diracpmeqs}) which are regular on both the northern
and southern hemispheres are of the form
\bea
\psi_m^+&=&\frac1{(1+y\,\yb)^{(|m|-1)/2}}~\sum_{i=1}^{|m|}\,
\xi_i\,y^{i-1} \qquad \mbox{and} \qquad \psi_m^-\=0 \quad \mbox{for}
\quad m<0 \ , \nonumber\\[4pt]
\psi_m^-&=&\frac1{(1+y\,\yb)^{(m-1)/2}}~\sum_{i=1}^{m}\,
\tilde\xi_i\,\yb\,^{i-1} \qquad \mbox{and} \qquad \psi_m^+\=0 \quad
\mbox{for} \quad m>0 \ ,
\nonumber\eea
with constant coefficients $\xi_i,\tilde\xi_i\in\C$. The result now
follows from (\ref{FtP1}).
\end{proof}

We can use Theorem~\ref{P1spinmoddecomp} to compute the
$\C^*$-equivariant characters of the spinor modules
(\ref{Smpmdef}). One finds
\beq
\ch_{\C^*}\big(S_m^-\big)\=\sum_{i=1}^m\,t^{-(i-1)}\=
\frac{1-t^{-m}}{1-t^{-1}}
\label{spinchar+}\eeq
for $m>0$, while
\beq
\ch_{\C^*}\big(S_m^+\big)\=\sum_{i=1}^{|m|}\,t^{i-1}\=
\frac{1-t^{|m|}}{1-t}
\label{spinchar-}\eeq
for $m<0$. In the non-equivariant limit $t\to1$, the
characters (\ref{spinchar+}) and (\ref{spinchar-}) reproduce the known
index of the Dirac operator, ${\rm index}(\Dirac_m)=-m$, in the
monopole background. These characters shift the character (\ref{chEcalF0}). For example, with $m_z,m_w>0$ and $N=1$ one has
$$
\ch_{\tilde T}\big(\Ecal(m_z,m_w)\big)(\lambda)\= \ch_{\tilde T}(\Ecal)(\lambda) + (1-t_z)\,\ch_{\C_z^*}\big(S^-_{m_z}\big) + (1-t_w)\, \ch_{\C_w^*}\big(S_{m_w}^-\big) \ .
$$
This completes the derivation of the gluing rules of
ref.~\cite[Sec.~4.3]{CK-PS}.

\bigskip

\section{D2-brane gauge theory and Gromov--Witten invariants}

\noindent
In this final section we study the reduction of the four-dimensional
gauge theories of Section~\ref{D4GT} on local curves, and examine in
detail the example of local $\PP^1$ which has been extensively
described from the point of view of Vafa--Witten theory. We begin with
a somewhat heuristic description of how these two-dimensional
supersymmetric gauge theories are induced. Then we proceed to a more
formal topological field theory formalism which systematically
computes topological string amplitudes and Gromov--Witten
invariants. Finally, we briefly address wall-crossing issues once
again from a physical standpoint.

\subsection{$q$-deformed two-dimensional Yang--Mills theory\label{qYM2D}}

One of the simplest classes of examples are the non-compact local
Calabi--Yau threefolds which are fibred over curves. If $\Sigma_g\to X$ is
a holomorphically embedded curve of genus $g$ in a Calabi--Yau
threefold, then the holomorphic tangent bundle restricts to $\Sigma_g$
as $T\Sigma_g\oplus\cN_{\Sigma_g}$, where the normal bundle
$\cN_{\Sigma_g}$ is a holomorphic bundle of rank two over
$\Sigma_g$. By the Calabi--Yau condition $c_1(X)=0$, one has
$c_1(\cN_{\Sigma_g})=-\chi(\Sigma_g)=2g-2$. Thus in a neighbourhood of
$\Sigma_g$, the manifold $X$ looks like the total space of a
holomorphic bundle $\cN\to\Sigma_g$ of rank two with
$c_1(\cN)=2g-2$. Hence we consider the total space of the bundle
$$
X\= X_p\= \Ocal_{\Sigma_g}(p+2g-2)\oplus\Ocal_{\Sigma_g}(-p)
$$
over $\Sigma_g$. Since $\Ocal_{\Sigma_g}(p)$ is a holomorphic line
bundle over $\Sigma_g$ of degree $p$, one has
$$
c_1(X_p)\=(p+2g-2)+(-p)\=2g-2
$$ 
as required. For genus $g=0$, this is
just the example of local $\PP^1$ which was introduced in
Section~\ref{HJspaces}

In our applications to black hole microstate counting, we count bound
states of D4--D2--D0 branes in $X_p$ with $N$ D4-branes wrapping the
four-cycle $C_p$ which is the total space of the line bundle
$\Ocal_{\Sigma_g}(-p)\to\Sigma_g$, together with D2-branes wrapping
the base Riemann surface $\Sigma_g$ (embedded in $C$ and $X_p$ as the
zero section). One can then localize the path integral of the $\cN=4$
topological gauge theory on $C_p$ to the $\C^*$-invariant modes along
the fibre of $\Ocal_{\Sigma_g}(-p)$. The result is an effective $U(N)$
gauge theory on $\Sigma_g$ whose action is given by~\cite{AOSV}
\beq
S\= \frac1{g_s}\,\int_{\Sigma_g}\,\Tr(\Phi\,F)+\frac\vartheta{g_s}\,
\int_{\Sigma_g}\,\Tr(\Phi\,k)-\frac p{2g_s}\,\int_{\Sigma_g}\,
\Tr\big(\Phi^2\,k\big) \ ,
\label{SBF2D}\eeq
where $\vartheta=g_s\,\phi^2/2\pi$ and the last term is a mass
deformation which originates in four dimensions due to the non-triviality
of the bundle $\Ocal_{\Sigma_g}(-p)$~\cite{Vafa}. The scalar field $\Phi$ is
given by the holonomy of the four-dimensional gauge connection at
infinity
$$
\Phi(z)\= \int_{\Ocal_{\Sigma_g}(-p)_z}\, F(z) \=
\oint_{S^1_{z,|u|=\infty}}\, A \ ,
$$
where $z\in\Sigma_g$ and $u\in\Ocal_{\Sigma_g}(-p)$.

The action (\ref{SBF2D}) is just the BF-theory representation of
two-dimensional Yang--Mills theory, with the usual $F^2$-term coming
from performing the gaussian integral over $\Phi$ in the functional
integral. If $\Phi$ were arbitrary, then we could use diagonalization
techniques to reduce the partition function to the usual heat kernel
expansion of two-dimensional Yang--Mills theory. However, the
identification of $\Phi$ as a holonomy means that it should be treated
as a periodic field, and the change from a Lie algebra valued variable
to a group-like variable affects the Jacobian of the path integral
measure that arises from diagonalization. This modifies the usual
Migdal expansion of two-dimensional gauge theory to give the partition
function
\beq
Z_{U(N)}^{C_p}(q,Q)\= \sum_{R} \, \dim_q(R)^{2-2g} \,
q^{\frac p2\,C_2(R)}\,Q^{C_1(R)} \ ,
\label{qYMpartfn}\eeq
where $q=\e^{-g_s}$ as before and $Q=\e^{\ii\vartheta}$. The sum runs
over unitary irreducible representations $R=(R_1,\dots,R_N)$ of the
gauge group $U(N)$ with weights $R_i$, first and second Casimir
invariants $C_1(R)=\sum_i\,R_i$ and $C_2(R)=\sum_i\,R_i^2$, and quantum dimension
$$
\dim_q(R)\= \prod_{1\leq i<j\leq N}\,\frac{[R_i-R_j+j-i]_q}
{[j-i]_q} \ ,
$$
where the $q$-number is defined by
$$
[n]_q \ := \ \frac{q^{n/2}-q^{-n/2}}{q^{1/2}-q^{-1/2}}
$$
with $[n]_q=n+O(q-1)$ for $q\to1$. This reduced two-dimensional gauge
theory is a $q$-deformation of ordinary Yang--Mills theory.

\subsection{Topological field theory}

We will now give a more precise definition of this gauge theory using
techniques of two-dimensional topological field theory. This formalism can then be used
to compute topological string amplitudes. The idea is that the cutting
and pasting of base Riemann surfaces is equivalent to the cutting and
pasting of the corresponding local Calabi--Yau threefolds, by either
adding or cancelling off D-branes corresponding to the
boundaries in the topological vertex formalism~\cite{BP,AOSV}. The operations of gluing manifolds satisfy all axioms of a two-dimensional topological field
theory. By computing the open string topological A-model amplitudes on
a few Calabi--Yau manifolds, we then get \emph{all} others by
gluing. In the following we focus on the
genus zero case $\Sigma_0=\PP^1$ for simplicity and illustrative
purposes. 

When the base of the fibration is the sphere, we will only need the
basic \emph{Calabi--Yau cap amplitude} corresponding to a disk $D$,
represented symbolically as
$$
\mbox{\includegraphics[width=3cm]{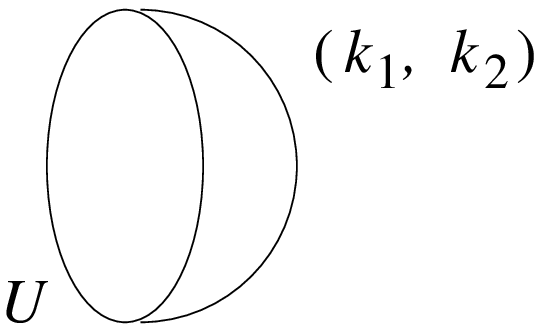}}
$$
where the \emph{levels} $k_1=e(\call_1)$ and $k_2=e(\call_2)$ label the degrees in $H^2(D,\partial
D)$ of a pair of line bundles $\call_1\oplus\call_2\to D$, and the unitary
matrix $U$ labels the holonomy of a gauge connection on $\partial
D\cong S^1$. Under concatenation the levels add. The basic cap
amplitude is defined by
$$
Z\bigg(~
\mbox{\includegraphics[width=2cm]{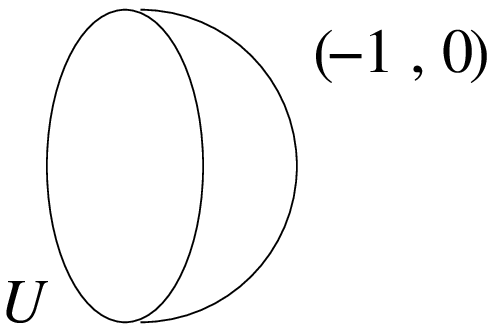}}
~\bigg) \= \sum_R\,d_q(R)\,q^{-k_R/4}\,\ch_R(U) \ ,
$$
where $q=\e^{-g_s}$ as before, the sum over $R$ runs through the space of
symmetric functions (Young tableaux), i.e. irreducible representations
of $SU(\infty)$, and $\ch_R(U)$ is the character of the holonomy $U$
in the representation $R$. The quantity
$$
d_q(R) \ := \ \prod_{(i,j)\in R}\,\frac1{\big[h(i,j)\big]_q}
$$
is the quantum dimension of the symmetric group representation
corresponding to the Young diagram of $R$, and
$$
k_R \= 2\,\sum_{(i,j)\in R}\,(i-j) \ .
$$
The quantity
$$
d_q(R)\,q^{-k_R/4} \ =: \ C_{R,\emptyset,\emptyset}(q) \=
W_{R,\emptyset}(q)
$$
is the topological vertex amplitude for the A-model topological string
theory with a single stack of D-branes on the non-compact Calabi--Yau
threefold $X=\C^3$. Similarly, define the cap amplitude
$$
Z\bigg(~
\mbox{\includegraphics[width=2cm]{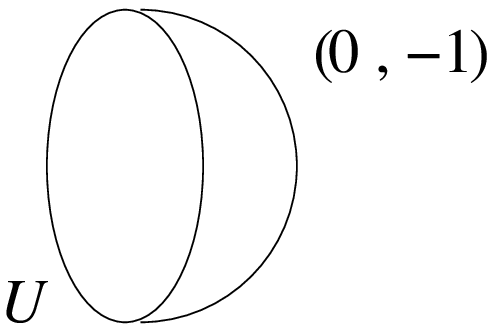}}
~\bigg) \= \sum_R\,d_q(R)\,q^{k_R/4}\,\ch_R(U) \ .
$$

For the gluing rules, we sew two open Riemann surfaces $\Sigma_L$ and
$\Sigma_R$ together along their common boundary to get the Riemann
surface $\Sigma_{L\cup
  R}$. For this, the orientations of the corresponding boundary
circles must be opposite. Reversing the orientation acts on the
boundary holonomy as $U\to U^{-1}$, and hence the gluing rules read
$$
Z(\Sigma_{L\cup R}) \= \int_{SU(\infty)}\, \dd U~Z(\Sigma_L)(U)\,
Z(\Sigma_R)(U^{-1})
$$
where $\dd U$ is the bi-invariant Haar measure and the integrals
can be evaluated explicitly by using the orthogonality relations for
characters
$$
\int_{SU(\infty)}\,\dd U~ \ch_{R_1}(U)\, \ch_{R_2}(U^{-1}) \=
\delta_{R_1,R_2} \ .
$$
We will also need the the Calabi--Yau trinion (or ``pants'')
amplitudes
$$
Z\bigg(~
\mbox{\includegraphics[width=4cm]{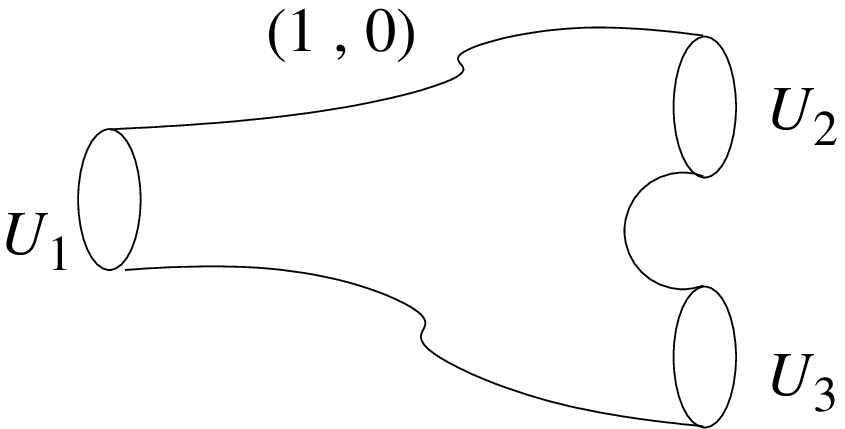}}
~\bigg) \= \sum_R\,\frac1{d_q(R)}\,q^{k_R/4}~ \prod_{i=1}^3\, \ch_R(U_i)
$$
and
$$
Z\bigg(~
\mbox{\includegraphics[width=4cm]{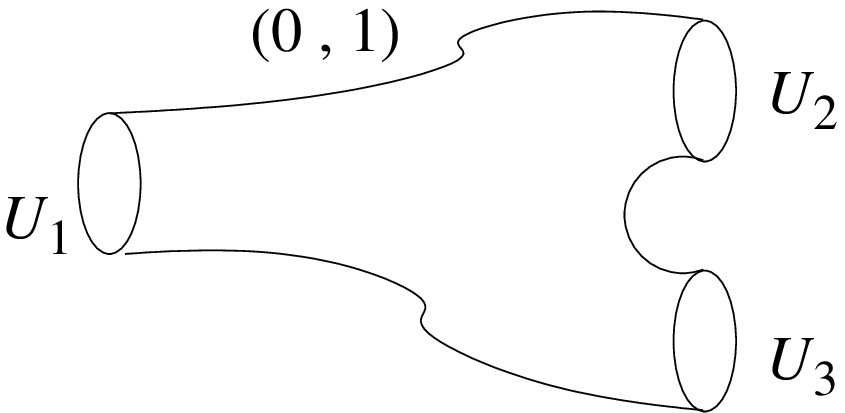}}
~\bigg) \= \sum_R\,\frac1{d_q(R)}\,q^{-k_R/4}~ \prod_{i=1}^3\,
\ch_R(U_i) \ .
$$

Let us point out two particularly noteworthy aspects of this construction
thus far. Firstly, the construction defines a functor of tensor
categories
$$
Z\,:\, {\sf S}_{\call_1,\call_2}~\longrightarrow~ {\sf Rep} \ ,
$$
where ${\sf Rep}$ is the representation category of $SU(\infty)$ and
${\sf S}_{\call_1,\call_2}$ is the geometric tensor category defined
as follows. The objects of ${\sf S}_{\call_1,\call_2}$ are compact
oriented one-manifolds $Y$, i.e. disjoint unions of oriented
circles. A morphism $Y_1\to Y_2$ between two objects of ${\sf
  S}_{\call_1,\call_2}$ is a triple $(W,Y_1,Y_2)$, where $W$ is an
oriented cobordism between $Y_1$ and $Y_2$, i.e. a smooth oriented two-manifold with
boundary $\partial W=Y_1\amalg (-Y_2)$, and the complex line bundles
$\call_1,\call_2$ over $W$ are trivialized on $\partial W$. This is
analogous to the Baum--Douglas description of D-branes in
K-homology~\cite{RS}. There is a
natural notion of equivalence provided by boundary preserving,
oriented diffeomorphisms $f:W\to W'$ with $\call_i\cong f^*\call_i'$,
$i=1,2$. Composition of morphisms is given by concatenation of
cobordism and gluing of bundles along the concatenation using the
trivializations. For a connected cobordism $W$, we can label the
isomorphism classes of a pair of line bundles $(\call_1,\call_2)$ by
the \emph{levels} $(k_1,k_2)$, where $k_i=e(\call_i)\in H^2(W,\partial
W)$. Under concatenation, these levels add.

Secondly, the $\C^*$-action on $\PP^1$ lifts to an action of the torus
$T=(\C^*)^2$ on $X_p$, via the natural scaling action on the fibres. The Gromov--Witten invariants in this case are
\emph{defined} via the virtual localization formula as a residue
integral over the $T$-fixed point locus, along the lines explained in
Section~\ref{Eqint} A stable map to $X_p$ which is $T$-invariant
factors through the zero section. It follows that there is a natural
isomorphism of moduli spaces
$\calm_g(X_p,\beta)^T=\calm_g(\PP^1,\beta)$, with $\beta=d\in\Z$, and consequently
\beq
\big[\calm_g(X_p,d)^T\,\big]^{\rm vir} \=
\big[\calm_g(\PP^1,d) \big]^{\rm vir} \ .
\label{XpP1rel}\eeq
This equality of virtual fundamental cycles implies that the
invariants constructed by integrating over each moduli space
coincide. It means that topological string theory on $X_p$ is
equivalent to a field theory on $\PP^1$, which is just the reduction
we argued in Section~\ref{qYM2D} In this case the Gromov--Witten
invariants of $X_p$ correspond to degree $d$ Hurwitz numbers of
$\PP^1$.

Returning to our computations, we get the annulus (or tube) amplitude
by contracting the cap and trinion amplitudes to get
\bea
Z\bigg(~
\mbox{\includegraphics[width=4cm]{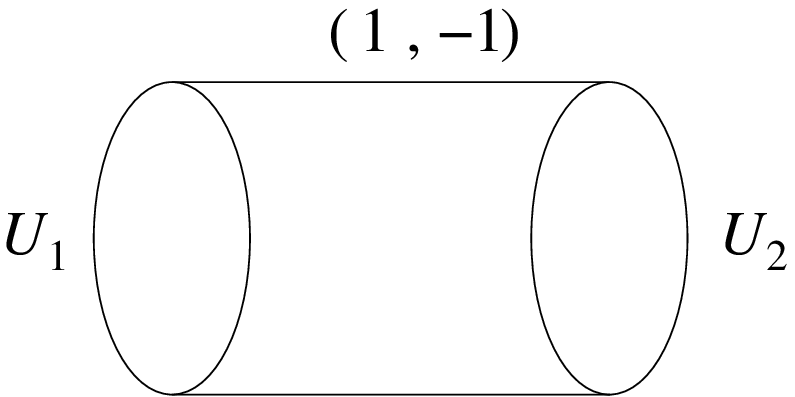}}
~\bigg) &=& Z\bigg(~
\mbox{\includegraphics[width=5cm]{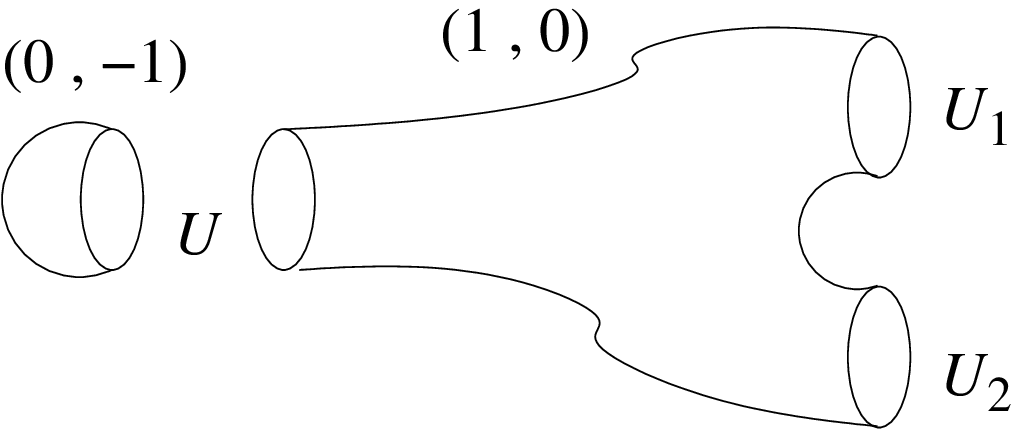}}
~\bigg) \nonumber\\[4pt]
&=& \sum_R\,q^{k_R/2}\,\ch_R(U_1)\,\ch_R(U_2) \ .
\nonumber\eea
Finally, we compute the amplitude of the fibration $X_p\to \PP^1$ by
using (\ref{XpP1rel}) and the gluing rules. In order to get the
appropriate bundle degrees $(p-2,-p)$, we glue $p$ tubes between two
caps to get
\bea
Z(X_p) &=& Z\bigg(~
\mbox{\includegraphics[width=2cm]{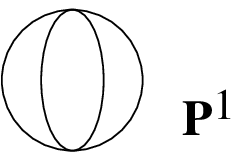}}
~\bigg) \nonumber\\[4pt] &=& Z\bigg(~
\mbox{\includegraphics[width=6cm]{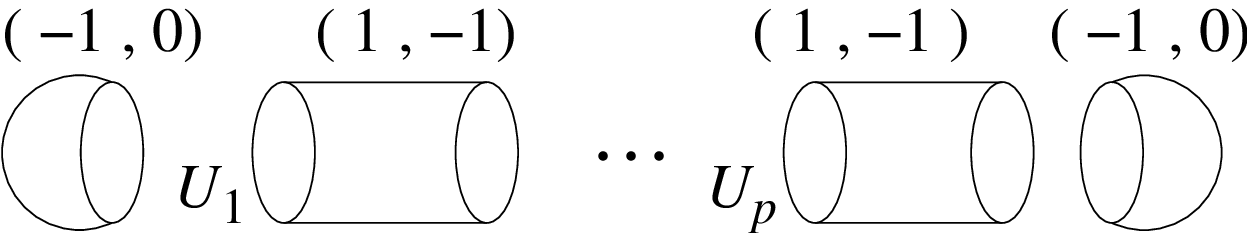}}
~\bigg) \nonumber\\[4pt] &=& \sum_R\, d_q(R)^2\, q^{(p-1)\,k_R/2} \ .
\label{ZXp}\eea
For $p=1$, the threefold $X_1\cong K_{\PP^1}^{1/2}\oplus
K_{\PP^1}^{1/2}$ is the resolved conifold
and this formula is a $q$-deformation of the classical Hurwitz formula
counting unramified covers of the Riemann sphere~$\PP^1$.

In fact, the quantity $d_q(R)$ is analogously the $N\to\infty$ limit
of the quantum dimension $\dim_q(R)$ of the $U(N)$ representation with
the same Young tableau. This follows by Schur--Weyl reciprocity which
writes $SU(N)$ representations in terms of representations of the
symmetric group $S_N$. Let $R$ be a representation corresponding to a
Young diagram $\lambda$ with row lengths $\lambda_i$, $i=1,\dots,d(\lambda)$,
where $\lambda_1\geq\lambda_2\geq\dots$ and $d(\lambda)$ is the number
of rows in $\lambda$. Set $\mu=q^N$, and define another $q$-number by
$$
[n]_\mu\=
\frac{\mu^{1/2}\,q^{n/2}-\mu^{-1/2}\,q^{-n/2}}{q^{1/2}-q^{-1/2}} \ .
$$
Then the quantum dimension of $R$ can be written as
$$
\dim_q(R) \= \prod_{1\leq i<j\leq d(\lambda)}\,
\frac{[\lambda_i-\lambda_j+j-i]_q}{[j-i]_q}~
\prod_{i=1}^{d(\lambda)}\,
\frac{\prod\limits_{v=1-i}^{\lambda_i-i}\,
  [v]_\mu}{\prod\limits_{v=1}^{\lambda_i}\, 
  \big[v-i+d(\lambda)\big]_q} \ .
$$
This is a Laurent polynomial in $\mu^{\pm\,1/2}$ whose coefficients
are rational functions of $q^{\pm\,1/2}$. The leading power of $\mu$
is $\frac{|\lambda|}2$, and the coefficient of this power is the
rational function of $q^{\pm\,1/2}$ given by
$$
q^{-k_R/4}\,\prod_{1\leq i<j\leq d(\lambda)}\,
\frac{[\lambda_i-\lambda_j+j-i]_q}{[j-i]_q}~\prod_{i=1}^{d(\lambda)}~ \prod\limits_{v=1}^{\lambda_i}\, 
\frac1{\big[v-i+d(\lambda)\big]_q} \= d_q(R)\,q^{-k_R/4} \ .
$$
As before, this is just the topological vertex amplitude
$C_{R,\emptyset,\emptyset}(q)$. The limit $q\to1$ gives the ordinary
dimension of the representation $R$.

Generalizing the generating function (\ref{ZXp}) thereby gives
$$
Z_{q{\rm YM}}\big(\PP^1\big) \= \sum_R\,\dim_q(R)^2\,q^{\frac p2\,C_2(R)} \ ,
$$
where now the sum runs over irreducible representations $R$ of $U(N)$ and
$C_2(R)=k_R+N(R)$ is the second Casimir invariant of $R$. By the
attractor mechanism, the K\"ahler modulus of $\PP^1$ is given by
$t=\frac N2\,(p-2)\,g_s$. In the limit $q\to1$, this two-dimensional
field theory coincides with ordinary Yang--Mills theory on the sphere
$\PP^1$. Note that this is a weak-coupling limit $g_s\to0$, with
$p\,g_s=g_{\rm YM}^2\,A$ and $A$ the area of $\PP^1$. Using similar
constructions as above, it is possible to formulate this
gauge theory directly as a two-dimensional topological field theory,
without any reference to the extrinsic bundle structure of the local
threefold $X_p$. Explicit computations of the corresponding
Gromov--Witten invariants of $X_p$ can be found in
refs.~\cite{CCGPSS1,CCGPSS2}.

\subsection{Wall-crossing formulas}

Finally, let us briefly comment on the relationship to black hole
entropy counting. Reinstating the $\vartheta$-angle as in
(\ref{qYMpartfn}), the attractor mechanism fixes the K\"ahler modulus
of $\Sigma_0=\PP^1$ as
$$
t\= 2\pi\ii \frac{X^1}{X^0} \=(p-2)\,\frac{N\, g_s}2-\ii\vartheta \ .
$$
The large $N$ limit of the partition function (\ref{qYMpartfn}) should
possess this modulus. The modulus squared structure anticipated by the
OSV relation (\ref{OSV}) is given in this limit by~\cite{AOSV}
$$
Z_{q{\rm YM}}\big(\PP^1\big) \= \sum_{r\in\IZ}~\sum_{R_1,R_2}\,
Z_{R_1,R_2}^{q{\rm YM}+}(t+p\,g_s\,r)\, Z_{R_1,R_2}^{q{\rm
    YM}-}(\,\overline{t}+p\,g_s\,r) \ ,
$$
where the sum over $r$ runs through $U(1)$ charges in the local
decomposition $U(N)\sim U(1)\times SU(N)$ of the gauge group, the
second sum runs through pairs of irreducible $SU(N)$ representations
$R_1,R_2$, and $Z_{R_1,R_2}^{q{\rm YM}+}(t)$ is the perturbative
topological string amplitude on $X_p$ with two stacks of D-branes in
the fibre. The conjugate amplitude is $Z_{R_1,R_2}^{q{\rm YM}-}(t)=
(-1)^{C_1(R_1)+C_1(R_2)}\,Z_{R_1^t,R_2^t}^{q{\rm YM}+}(t)$. Thus in
this case the OSV relation (\ref{OSV}) is modified to the symbolic
form
$$
Z_{\rm BH} \= \sum_{r\in\IZ}~\sum_b\,\big|Z_{\rm top}^{(b,r)}\big|^2 \
,
$$
where the first sum runs over Ramond--Ramond fluxes $r$ through the base
Riemann surface $\Sigma_g$, while the second sum is over fibre
D-branes which carry additional moduli $\widehat{t}$ measuring their
distances from the base $\Sigma_g$ along the non-compact fibre
directions.

Further analysis of the relationships with $\mathcal{N}=4$ Yang--Mills
theory on $C_p$ are studied in refs.~\cite{CCGPSS1,CCGPSS3,GSST,BT},
where it is observed that the instanton expansions of the two-dimensional and four-dimensional
gauge theory partition functions differ by perturbative
contributions. These extra factors arise from the non-compactness of
the surface $C_p$. When $C_p$ is the resolution of an $A_{p,n}$
singularity, its boundary is the three-dimensional Lens space
$L(p,n)=S^3/\Gamma_{(p,n)}$. The perturbative factors can then be
identified as the partition function of Chern--Simons gauge theory on
the boundary, and arise as a consequence of the fact that the
two-dimensional gauge theory implicitly integrates over all boundary
conditions. Once these boundary contributions are stripped, the
enumeration of instantons in the two-dimensional and four-dimensional
gauge theories coincide~\cite{GSST}.

\bigskip

\section*{Acknowledgments}

\noindent
It is a pleasure to thank M.~Cirafici, E.~Gasparim, A.-K.~Kashani-Poor, A.~Maciocia and
A.~Sinkovics for enjoyable discussions and collaborations on some of
the results presented here. This work was supported in part by grant ST/G000514/1 ``String Theory
Scotland'' from the UK Science and Technology Facilities Council.

\bigskip


\end{document}